\newtheorem{theorem}{Theorem}
\newtheorem{corollary}[theorem]{Corollary}
\newtheorem{definition}[theorem]{Definition}
\newtheorem{example}[theorem]{Example}
\newtheorem{lemma}[theorem]{Lemma}
\newtheorem{remark}[theorem]{Remark}
\newenvironment{proof}[1][Proof]
{\textbf{#1.} }{\ \rule{0.5em}{0.5em}}
\def\qed{\hfill \vrule height 6pt width 6pt depth 0pt\\}
\newcommand{\dif}{\mathrm{d}}
\def\EE{\mathbb{E}}
\def\PP{\mathbb{P}}
\def\p{\mathbb{P}}
\renewcommand{\(}{\left(}
\renewcommand{\)}{\right)}
\renewcommand{\[}{\left[}
\def\qed{\hfill \vrule height 6pt width 6pt depth 0pt}
\newcommand{\id}{\mathds{1}}
\newcommand{\vcovar}{\mathrm{VCoVaR}}
\newcommand{\vcoes}{\mathrm{VCoES}}
\newcommand{\mcoes}{\mathrm{MCoES}}
\def\R{\mathbb{R}}
\DeclareMathAccent{\wtilde}{\mathord}{largesymbols}{"65}
\numberwithin{theorem}{section}
\begin{document}

\title{\vskip -1cm \textbf{On Vulnerability Conditional Risk Measures: Comparisons and Applications in Cryptocurrency Market}}

\author[a]{Tong Pu}
\author[b]{Yunran Wei}
\author[a]{Yiying Zhang}%

\affil[a]{Department of Mathematics, Southern University of Science and Technology, Shenzhen, Guangdong, China}
\affil[b]{School of Mathematics and Statistics, Carleton University, Ottawa, Canada}
\date{\today}
\maketitle

\begin{abstract}
    We introduce a novel class of systemic risk measures, the Vulnerability Conditional risk measures, which try to capture the ``tail risk" of a risky position in scenarios where one or more market participants is experiencing financial distress. Various theoretical properties of Vulnerability Conditional risk measures, along with a series of related contribution measures, have been considered in this paper. We further introduce the backtesting procedures of VCoES and MCoES. Through numerical examples, we validate our theoretical insights and further apply our newly proposed risk measures to the empirical analysis of cryptocurrencies, demonstrating their practical relevance and utility in capturing systemic risk. 
    %The paper introduces a novel systemic risk measure, the Vulnerability Conditional Expected Shortfall (VCoES), which is able to capture the ``tail risk" of a risky position in scenarios where one or more market participants is experiencing financial distress. It extends the theory of the newly-proposed systemic risk measure Vulnerability-CoVaR (VCoVaR). We develop a set of  systemic risk contribution measures: $\Delta\vcovar$, $\Delta^R\vcovar$, $\Delta\vcoes$, and $\Delta^R\vcoes$, employing both difference and relative ratio functions to compare VCoVaR and the traditional Value-at-Risk, as well as VCoES against the conventional Expected Shortfall. The study delves into various theoretical properties, including stochastic orders, of VCoVaR, VCoES, multivariate Value-at-Risk (MCoVaR), and multivariate conditional expected shortfall (MCoES), alongside the corresponding risk contribution measures. We further introduce the backtesting procedures of VCoES and MCoES. Through numerical examples, we validate our theoretical insights and further apply our newly proposed  risk measures to the empirical analysis of cryptocurrencies, demonstrating their practical relevance and utility in capturing systemic risk.

    \noindent
    \\[1mm]
    \noindent \textbf{Keywords:} Conditional Risk Measures; Cryptocurrency; Systemic risk; Copula; Stochastic orders. \\[2mm]
    %Contagion risk; Risk allocation; Solvency capital requirements; (Tail-) Value-at-Risk; Diversification;
    \noindent \textbf{JEL Classification:} G21, G22, G31.
\end{abstract}
%%%%%%%%%%%%%%%%%%%%%%%%%%%%%%%%%%%%%%%%%%%%%%%%%%%%%%%%%%%%%%%%
\baselineskip17.5pt
%\baselineskip15pt
%%%%%%%%%%%%%%%%%%%%%%%%%%%%%%%%%%%%%%%%%%%%%%%%%%%%%%%%%%%%%%%%
\thispagestyle{empty}

%\tableofcontents

%%%%%%%%%%%%%%%%%%%%%%%%%%%%%%%%%%%%%%%%%%%%%%%%%%%%%%%%%%%%%%%%%%%%%%%%%%%%%%%%%%%%%%%%
%%%%%%%%%%%%%%%%%%%%%%%%%%%%%%%%%%%%%%%%%%%%%%%%%%%%%%%%%%%%%%%%%%%%%%%%%%%%%%%%%%%%%%%%
\newpage
\section{Introduction}
In the span of the last twenty years, a series of evolving scenarios and critical incidents, notably the financial crisis in 2007-2009, have illustrated the pronounced volatility, fragile, and interconnected nature of the system. These phenomena have led to the emergence of systemic risks, which are risks that could be suffered by financial institutions interconnected with the domino effect as a result of a systemic risk chain reaction. In order to minimize the emergence of financial risks as much as possible, Basel III, which is the current international regulatory framework, requires financial institutions like banks to hold capital that meets Value-at-Risk (VaR) and Expected Shortfall (ES) capital requirements to ensure they have enough capital to absorb losses without posing risks to the financial system. The measures of VaR and ES, by their very design, are incapable of capturing the ripple effects of some institutions' financial distress on other institutions or the system at large. As a result, while individual risks may be properly dealt with in normal times, the system itself still remains, or in some cases is induced to become, fragile and vulnerable to large macroeconomic shocks. Consequently, there is a great interest in the development of alternative risk measures that can effectively address this limitation and provide a more comprehensive view of the systemic risks.
\par In the list of measures proposed for the quantification of systemic risk, the Conditional Value-at-Risk (CoVaR), which is introduced by \cite{Adrian2016}, has emerged as the most widely utilized market-based measure. It is defined as the VaR of one specific financial institution, conditional upon the occurrence of an event corresponding to a stress scenario to which another financial institution is exposed. The measure of CoVaR is proficient at assessing the impact of the failure of one asset on another, provided that practitioners have chosen the appropriate condition. However, in practice, financial practitioners often need to measure the impact of a range of institutions or the entire market on a single asset, a situation where CoVaR may fall short, as it does not capture the aggregate influence of multiple assets or the whole market's effect on the asset in question. Recent studies have expanded the CoVaR framework by incorporating more than one variable in the conditional event, in order to take into account the impact of a series of financial institutions or the entire market on a single institution. \cite{cao2013multi} introduces the Multi-CoVaR (MCoVaR) with the condition of several institutions being simultaneously in distress. \cite{bernardi2021allocation} propose the System-CoVaR (SCoVaR), in which the conditional variables are aggregated via their sum. Vulnerability-CoVaR (VCoVaR), which considers the VaR of an institution with the condition that one of the selected institutions being in distress, is introduced by \cite{waltz2022vulnerability}. In addition, \cite{waltz2022vulnerability} provides a comprehensive empirical study and analyses how different distressing events of the cryptocurrencies impact the risk level of each other.  Interested readers can refer to \cite{mba2024assessing}, \cite{pu2024joint} and \cite{Xu2017SystemicRI} for more related systemic risk measures and investigations on the applications in insurance and finance.
\par Although the aforementioned article has made significant contributions to the definition and application of novel systemic risk measures, we must highlight two areas where it falls short. These systemic risk measures primarily focus on the VaR measures of systemic risks, neglecting the equally important and widely used ES measure, which may lead to an underestimation of risk to some extent. Meanwhile, the papers focus mainly on market applications without exploring the mathematical properties of these measures. Ignoring the mathematical properties results in a lack of deeper insight into these risk measures.
%\par Another limitation of CoVaR and Delta CoVaR is that they may fail to accurately compare the spillover effects between two financial institutions when there is a significant disparity in the scale of their financial returns. In other words, when assessing the degree to which two financial institutions with vastly different capital sizes are affected by external risks, the Delta CoVaR tends to suggest that the institution with a larger capital size experiences a greater impact from external risks. In summary, Delta CoVaR is capable of capturing the absolute spillover effects of risk but fails to account for the relative spillover effects. 
\par The main goal of this paper is to propose a formalization of conditional risk measures that take into account the influence of multiple financial institutions or the entire market on a specified financial institution and to introduce related contribution measures to consider the absolute and relative spillover effects from a market to a specified institution. We establish sufficient conditions for comparing these measures for two sets of random vectors with both different marginal distributions and different copulas. Our main results demonstrate that for a financial institution, having a higher level of risk or a closer connection to the market will increase the absolute and relative spillover effects of market risk on it. This is in line with our intuition. We provide an application study for the proposed risk measures and the aforementioned ones. As an important by-product of this research, backtesting methodologies for the newly developed conditional risk measures are provided in this study. 
\par The paper is structured as follows: Section \ref{sec:Preliminaries} provides the preliminaries for the conclusion of the article, Section \ref{sec:definition} formally presents the definition and mathematical expressions of the vulnerability conditional risk measures to be studied in this paper, Section \ref{sec:results} investigates the mathematical properties of the risk measures and correspondingly provides numerical examples, Section \ref{sec:backtesting} presents the backtesting methodologies for these measures, and Section \ref{sec:application} provides an application of these measures in the cryptocurrency market.
%The R code to reproduce the results of this paper is available at XX. 

%%%%%%%%%%%%%%%%%%%%%%%%%%%%%%%%%%%%%%%%%%%%%%%%%%%%%%%%%%%%%%%%%%%%%%%%%%%%%%%%%%%%%%%%
%%%%%%%%%%%%%%%%%%%%%%%%%%%%%%%%%%%%%%%%%%%%%%%%%%%%%%%%%%%%%%%%%%%%%%%%%%%%%%%%%%%%%%%%

\section{Preliminaries} \label{sec:Preliminaries}
Throughout this paper, the terms ``increasing'' and ``decreasing'' are used in a mild sense. Expectations and integrals are assumed to exist whenever they appear. Let $(\Omega,\mathcal{F})$ be a measurable space and let $\mathcal{X}$ be the space of bounded random variables. For $F\in \mathcal{X}$, the generalized inverse is defined as 
$$F^{-1}(t) = \inf\{x\in \R : F(x)\ge t\},~ t\in (0,1].$$
%Let $\mathbb{R}=(-\infty,\infty)$ and $\mathbb{R}_+=[0,\infty)$. 
\subsection{Copula and dependence notions}
Let $(X_1, X_2, \cdots, X_d)$ be a random vector with joint cumulative distribution function (c.d.f.) $F$, joint survival function (s.f.) $\overline{F}$, and respective absolutely continuous marginal c.d.f.s' $F_i$, $i\in \{1, 2, \cdots, d\}$. The joint c.d.f. $F$ can be expressed as
\begin{equation} \label{eq:copula-def}
    F\left(x_1, \ldots, x_d\right)=C\left(F_1\left(x_1\right), \ldots, F_d\left(x_d\right)\right), \forall (x_1, x_2, \cdots, x_d) \in \mathbb{R}^d,
\end{equation}
where $C$ is the unique copula of $(X, Y)$, that is, the joint c.d.f. of $(U, V)$, where $U = F(X)$ and $V = G(Y)$. 
\iffalse
The following lemma establishes the smoothness of a copula.
\begin{lemma}
(\cite{mai2017simulating}) Let $C:[0,1]^d \rightarrow[0,1]$ be a copula.
\begin{enumerate}[(a)]
    \item For every $\left(u_1, \ldots, u_d\right),\left(v_1, \ldots, v_d\right) \in[0,1]^d$ it holds that
    $$
    \left|C\left(u_1, \ldots, u_d\right)-C\left(v_1, \ldots, v_d\right)\right| \leq \sum_{i=1}^d\left|u_i-v_i\right| .
    $$
In particular, $C$ is Lipschitz continuous with the Lipschitz constant equal to 1.
\item For $k=1, \ldots, d$ and fixed $\left(u_1, \ldots, u_{k-1}, u_{k+1}, \ldots, u_d\right) \in[0,1]^{d-1}$, the partial derivative $u_k \mapsto \frac{\partial}{\partial u_k} C\left(u_1, \ldots, u_d\right)$ exists (Lebesgue) almost everywhere on $[0,1]$ and takes values in $[0,1]$.
\end{enumerate}
\end{lemma}
\fi
\par The Archimedean copula, a prevalent category within the family of copulas, is characterized by a generating function known as the Archimedean generator. The expression for an $d$-dimensional Archimedean copula is given by: 
\begin{equation*}
    C_\psi (u_1, u_2, \ldots, u_d) = \psi^{-1} \left( \psi(u_1) + \psi(u_2) + \cdots + \psi(u_d) \right),
\end{equation*}
where $\psi$ is a strictly decreasing function called the generating function, with its inverse denoted as $\psi^{-1}$.
The Gumbel copula, which is a prominent example of Archimedean copulas and has the generator $\psi(t) = (-\log(t))^{\theta}, \theta \in [1,+\infty)$, is defined by  
\begin{equation} \label{eq:gumbel}
    C(u_1,u_2,u_3;\theta)=\exp\(-\((-\log(u_1))^\theta + (-\log(u_2))^\theta +(-\log(u_3))^\theta \)^{1/\theta}\),
\end{equation}
for $u_i \in [0,1]$ where $i \in \{1,2,3\}$.
\begin{definition} \label{def:LTD}
(\cite{Nelsen2007}) Let $\bm{X}=\left(X_1, X_2, \cdots, X_n\right)$ be a $d$-dimensional random vector, and let the sets $A$ and $B$ partition $\{1,2, \cdots, d\}$.
\begin{enumerate}[(i)]
    \item $\operatorname{SI}\left(\mathbf{X}_B \mid \mathbf{X}_A\right)$ if $P\left[\mathbf{X}_B>\mathbf{x}_B \mid \mathbf{X}_A=\mathbf{x}_A\right]$ is nondecreasing in $\mathbf{x}_A$ for all $\mathbf{x}_B$
    \item When $\operatorname{SI}\left(\mathbf{X}_B \mid \mathbf{X}_A\right)$ holds for all singleton sets $A$, i.e., $A=\{i\}, i=1,2, \cdots, n$; then $\mathbf{X}$ is positive dependent through the stochastic ordering (PDS); 
    \item $\operatorname{LTD}\left(\bm{X}_B \mid \bm{X}_A\right)$ if $\PP\left\{\bm{X}_B \leq \bm{x}_B \mid \bm{X}_A \leq \bm{x}_A\right\}$ is nonincreasing in $\bm{x}_A$ for all $\bm{x}_B$.
    \item $\operatorname{LTD}^{1}_{m}$ if $\PP\left\{{X}_i \leq {x}_i, \forall i \in \{1,2, \cdots, n\} ~\text{and}~ i \neq m \mid {X}_m \leq {x}_m \right\}$ is nonincreasing in ${x}_m$ for all $x_1,x_2, \cdots, x_{m-1}, x_{m+1}, \cdots, x_n$.
\end{enumerate}
\end{definition}
\par The notation $\operatorname{LTD}^{1}_{m}$ signifies that as the value of $X_m$ decreases, the probability of the other variables being in the lower orthant increases, which implies that the probability of these variables taking on smaller values is enhanced. As a result, $\operatorname{LTD}$ characterizes the positive dependence structure of a random vector. 

\begin{lemma} \label{lem:ltd}
    \begin{enumerate}[(i)]
        \item A $d$-dimensional copula $C$ is of $\operatorname{LTD}^{1}_{m}$ if $C(u_1,\cdots,u_d) / u_m$ nonincreasing in $u_m$ for all $u_1, u_2, \cdots , u_{m-1}, u_{m+1}, \cdots , u_d$.
        \item A $d$-dimensional Archimedean copula $C_\psi$ is of $\operatorname{LTD}^{1}_{m}$ for all $1 \leq m \leq d$ if $x \psi ' (x)$ is none-decreasing.
    \end{enumerate}
   
\end{lemma}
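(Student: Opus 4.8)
The plan is to reduce both parts to a single monotonicity statement about the ratio $C(u_1,\dots,u_d)/u_m$, and then settle the Archimedean case by an explicit differentiation. For part (i), I would start directly from Definition \ref{def:LTD}(iv). Writing $u_i = F_i(x_i)$ and using the copula representation \eqref{eq:copula-def}, the conditional probability there becomes
$$\PP\{X_i \le x_i,\ \forall i\ne m \mid X_m \le x_m\} = \frac{F(x_1,\dots,x_d)}{F_m(x_m)} = \frac{C\big(F_1(x_1),\dots,F_d(x_d)\big)}{F_m(x_m)}.$$
Since $F_m$ is a (nondecreasing) c.d.f., the map $x_m \mapsto u_m = F_m(x_m)$ is nondecreasing, so the composition of a nonincreasing function $u_m \mapsto C(u_1,\dots,u_d)/u_m$ with $F_m$ is again nonincreasing in $x_m$. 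Hence the displayed hypothesis of part (i) is exactly what is needed, and part (i) follows at once.

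For part (ii), I would invoke part (i) and verify, where the generator is differentiable, that $u_m \mapsto C_\psi(u_1,\dots,u_d)/u_m$ is nonincreasing. Write $g = \psi^{-1}$ and $s = \sum_{i}\psi(u_i)$, so that $C_\psi = g(s)$ and set $t := g(s) = \psi^{-1}(s)$. Using $\partial s/\partial u_m = \psi'(u_m)$, the derivative of the ratio has numerator $g'(s)\psi'(u_m)u_m - g(s)$, so nonincreasingness reduces to the inequality $g'(s)\,\psi'(u_m)\,u_m \le g(s)$. Applying the inverse-function identity $g'(s) = 1/\psi'(t)$ and multiplying through by $\psi'(t) < 0$, which reverses the inequality, this is equivalent to
$$u_m\,\psi'(u_m) \ge t\,\psi'(t), \qquad \text{i.e.}\qquad h(u_m) \ge h(t),\quad h(x):=x\psi'(x).$$

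The step I expect to carry the weight of the argument is the observation that $t \le u_m$. This holds because $\psi(1)=0$ and $\psi$ is decreasing, so $\psi \ge 0$ on $[0,1]$, whence $s = \sum_i \psi(u_i) \ge \psi(u_m)$; since $\psi^{-1}$ is decreasing, $t = \psi^{-1}(s) \le \psi^{-1}(\psi(u_m)) = u_m$. Combining $t \le u_m$ with the hypothesis that $h(x)=x\psi'(x)$ is nondecreasing gives $h(t) \le h(u_m)$, which is precisely the required inequality, so $C_\psi$ is $\operatorname{LTD}^{1}_{m}$ for every $m$. The main subtlety to watch throughout is the sign bookkeeping: both $\psi'$ and $g'$ are negative, and the multiplication by $\psi'(t)$ flips the direction of the inequality, so the orientation of the monotonicity hypothesis on $x\psi'(x)$ must be tracked with care.
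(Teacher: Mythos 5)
The paper states Lemma \ref{lem:ltd} without providing any proof (it does not appear in the appendix of proofs), so there is nothing to compare your argument against; on its own merits, your proposal is correct and supplies a valid proof. Part (i) is exactly the right reduction: the conditional probability in Definition \ref{def:LTD}(iv) equals $C\bigl(F_1(x_1),\dots,F_d(x_d)\bigr)/F_m(x_m)$ on $\{x_m: F_m(x_m)>0\}$, and composing the nonincreasing map $u_m\mapsto C(u_1,\dots,u_d)/u_m$ with the nondecreasing $F_m$ preserves nonincreasingness. Part (ii) is also sound, and the key step is the one you identify: since $\psi\ge 0$ on $[0,1]$ (as $\psi$ is decreasing with $\psi(1)=0$), one has $s=\sum_i\psi(u_i)\ge\psi(u_m)$, hence $t=\psi^{-1}(s)\le u_m$, and the nondecreasingness of $h(x)=x\psi'(x)$ yields $h(t)\le h(u_m)$, which after the sign-reversing multiplication by $\psi'(t)<0$ is exactly the condition that the derivative of $C_\psi/u_m$ in $u_m$ is nonpositive; a sanity check against the Gumbel generator $\psi(t)=(-\log t)^\theta$, for which $x\psi'(x)=-\theta(-\log x)^{\theta-1}$ is indeed nondecreasing, confirms the orientation. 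Two minor points worth a sentence in a polished write-up: restrict to $u_m>0$ so the ratio is defined, and note that an Archimedean generator, being convex, is differentiable only almost everywhere, so one concludes monotonicity of the (continuous) ratio from the a.e.\ sign of its derivative.
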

\begin{lemma} \label{lem:pds}
    Let $\bm{X}=\left(X_1, X_2, \cdots, X_d\right)$ be a $d$-dimensional random vector that admits an absolutely continuous copula $C$. Then $\bm{X}$ is PDS if and only if $C$ is componentwise concave.
\end{lemma}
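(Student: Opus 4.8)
The plan is to prove the equivalence one coordinate at a time and then intersect over all coordinates. By definition $\bm X$ is PDS exactly when $\mathrm{SI}(\bm X_{-i}\mid X_i)$ holds for every singleton $\{i\}$, while ``$C$ componentwise concave'' means that $u_i\mapsto C(u_1,\dots,u_d)$ is concave for each $i$ with the remaining arguments held fixed. Hence it suffices to show, for each fixed $i$, that $\mathrm{SI}(\bm X_{-i}\mid X_i)$ holds if and only if $C$ is concave in its $i$-th argument; taking the conjunction over $i=1,\dots,d$ then yields the lemma.

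Fix $i$ and set $u_j=F_j(x_j)$. First I would record the standard representation of the conditional law through the copula: since all marginals and $C$ are absolutely continuous, a version of the conditional distribution function of $\bm X_{-i}$ given $X_i=x_i$ is $\PP[\bm X_{-i}\le \bm x_{-i}\mid X_i=x_i]=\partial_{u_i}C(u_1,\dots,u_d)$, where $\partial_{u_i}$ denotes the partial derivative in the $i$-th slot, which exists by absolute continuity. Because $\mathrm{SI}$ is phrased through the conditional survival function and $F_i$ is nondecreasing, $\mathrm{SI}(\bm X_{-i}\mid X_i)$ is equivalent to the requirement that this conditional law be monotone in the conditioning level $u_i$ --- concretely, that $u_i\mapsto \partial_{u_i}C(\bm u)$ be nonincreasing for every fixed choice of the remaining coordinates $\bm u_{-i}$.

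The engine of the proof is then the elementary observation that, for a fixed value of $\bm u_{-i}$, the one-variable function $u_i\mapsto C(\bm u)$ is concave precisely when its derivative $u_i\mapsto\partial_{u_i}C(\bm u)$ is nonincreasing. Combining this with the previous step gives, for each $i$, the chain $\mathrm{SI}(\bm X_{-i}\mid X_i)\iff \partial_{u_i}C$ nonincreasing in $u_i\iff C$ concave in $u_i$. Intersecting over all $i$ produces ``$\bm X$ PDS $\iff C$ componentwise concave'', as claimed; absolute continuity is exactly what makes the density/partial-derivative manipulations, and hence the concavity criterion applied almost everywhere, legitimate.

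I expect the main obstacle to be the careful justification of the conditional-distribution representation and of the uniformity in the quantifier ``for all $\bm x_B$''. One must argue that $\partial_{u_i}C$ genuinely serves as a regular conditional distribution function (using absolute continuity of $C$ to differentiate under the conditioning), convert cleanly between the survival formulation of $\mathrm{SI}$ and the distribution-function formulation that matches $\partial_{u_i}C$, and verify that pointwise monotonicity of $\partial_{u_i}C$ in $u_i$ is simultaneously necessary and sufficient once the statement is required for every $\bm x_{-i}$. With this representation in hand, both implications of the per-coordinate equivalence, and therefore the lemma, follow immediately.
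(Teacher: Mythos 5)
The paper never actually proves Lemma \ref{lem:pds}: it is stated as a background fact with no argument and no explicit citation, so your proposal has to be judged on its own merits rather than against an in-paper proof.

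On those merits there is a genuine gap, and it sits exactly where you flag an ``obstacle'' and then wave it through: the claimed equivalence between $\mathrm{SI}(\bm{X}_{-i}\mid X_i)$ and the monotonicity of $u_i\mapsto\partial_{u_i}C(\bm{u})$. The derivative $\partial_{u_i}C(u_1,\dots,u_d)$ is the conditional \emph{lower}-orthant probability $\PP[\bm{X}_{-i}\le \bm{x}_{-i}\mid X_i=x_i]$, whereas the SI condition in Definition \ref{def:LTD} constrains the conditional \emph{upper}-orthant probability $\PP[\bm{X}_{-i}>\bm{x}_{-i}\mid X_i=x_i]$. For $d=2$ these are complementary and your chain of equivalences closes. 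For $d\ge 3$ they are not complementary: by inclusion--exclusion,
$$
\PP[\bm{X}_{-i}>\bm{x}_{-i}\mid X_i=x_i]=\sum_{S\subseteq\{1,\dots,d\}\setminus\{i\}}(-1)^{|S|}\,\partial_{u_i}C_{S\cup\{i\}}(\bm{u}_S,u_i),
$$
an alternating sum over the margins of $C$. Concavity of $C$ (and hence of each margin) in $u_i$ does not pass through the alternating signs: for $d=3$, $i=1$, SI requires $u_1-C_{12}(u_1,u_2)-C_{13}(u_1,u_3)+C_{123}(u_1,u_2,u_3)$ to be \emph{convex} in $u_1$, and concavity of $C_{123}$ in $u_1$ pushes in the wrong direction, so the ``if'' direction is not established term by term. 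Symmetrically, in the ``only if'' direction, specializing the SI inequality by sending $x_k\to-\infty$ for $k\notin\{i,j\}$ only yields concavity of the bivariate margins $C_{ij}$ in $u_i$, not of the full $d$-dimensional copula. Your coordinatewise reduction and the step ``$C$ concave in $u_i$ iff $\partial_{u_i}C$ nonincreasing in $u_i$'' are fine; but the conversion between the survival formulation of SI and the distribution-function object $\partial_{u_i}C$ is the entire nontrivial content of the lemma for $d\ge3$, and it is asserted rather than proved. A correct argument must work directly with the upper-orthant quantity (or invoke the known characterization from the dependence-ordering literature) instead of treating it as $1-\partial_{u_i}C$.
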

\iffalse
\begin{lemma}
    (\cite{ansari2023supermodular}) A $d$-demensional copula is componentwise convex (concave) if and only if its survival copula is componentwise convex (concave).
\end{lemma}
\begin{proof}
    123
\end{proof}
\fi
\subsection{Stochastic orders and distortion risk measures}
\par Recall that a distortion function $h : [0, 1] \to [0, 1]$ is an non-decreasing function such that $h(0) = 0$ and $h(1) = 1$. The set of all distortion functions is henceforth denoted by $\mathcal{H}$. The class of distortion risk measures is then defined as follows. 
\begin{definition}\label{def:dist-risk-measure} 
For a distortion function $h \in \mathcal{H}$ and a r.v. $X$ with c.d.f. $F$, the distortion risk measure $\mathrm{D}_h$ is defined as
\begin{equation} \nonumber
\mathrm{D}_h[X]=\int_0^{+\infty} h(\overline{F}(t)) \dif t-\int_{-\infty}^0[1-h(\overline{F}(t))] \dif t.
\end{equation}
\end{definition}

\par Given distortion function $h \in \mathcal{H}$, we call $\hat{h}(x)=1-h(1-t)$  the dual distortion function.
\par Stochastic orders are partial orders defined on sets of c.d.f.’s and serve as a powerful tool for comparing different  random variables. Interested readers can refer to \cite{shaked2007sto} and \cite{belzunce2015introduction} for more details.
\begin{definition} \label{def:order}
     \citep{shaked2007sto,Denuit2005} Let $X$ and $Y$ be two random variables (r.v.'s) with respective c.d.f.'s $F$ and $G$, survival functions (s.f.'s) $\overline{F}$ and $\overline{G}$, and probability density functions (p.d.f.'s) $f$ and $g$, respectively. $X$ is said to be smaller than $Y$ in the
\begin{enumerate}[(i)]
%\item likelihood ratio order (denoted by $X \leq_{\operatorname{lr}} Y$) if $g(x) / f(x)$ is increasing in $x \in \mathbb{R}$;
%    \item hazard rate order (denoted by $X \leq_{\mathrm{hr}} Y$) if $\bar{G}(x) / \bar{F}(x)$ is increasing in $x \in \mathbb{R}$;
\item usual stochastic order (denoted by $X \leq_{\rm st} Y$) if $\overline{F}(x) \leq \overline{G}(x)$ for all $x \in \mathbb{R}$, or equivalently, ${\rm D}_h[X] \leq {\rm D}_h[Y]$ for all distortion functions $h$;
\item increasing convex order (denoted by $X \leq_{\rm icx} Y$) if $\mathbb{E}[\phi(X)] \leq \mathbb{E}[\phi(Y)]$ for any increasing and convex function $\phi: \mathbb{R} \rightarrow \mathbb{R}$, or equivalently, ${\rm D}_h[X] \leq {\rm D}_h[Y]$ for all concave distortion functions $h$;
\item dispersive order (denoted by $X \leq_{\rm disp} Y$) if $\mbox{\rm VaR}_v[X]-\mbox{\rm VaR}_u[X] \leq \mbox{\rm VaR}_v[Y]-\mbox{\rm VaR}_u[Y] $, for all $0<u \leq v<1$;
\item star order (denoted by $X \leq_{\star} Y$) if $G^{-1}(u)/F^{-1}(u)$ is increasing in $u \in (0,1)$.
\item expected proportional shortfall (in short \emph{EPS}) order (denoted by $X \leq_{\rm eps} Y$) if
\begin{equation}\nonumber
  \mbox{\rm EPS}_p[X]=\mathbb{E}\left[\left(\frac{X-\mbox{\rm VaR}_p[X]}{\mbox{\rm VaR}_p[X]}\right)_+\right]\leq \mathbb{E}\left[\left(\frac{Y-\mbox{\rm VaR}_p[Y]}{\mbox{\rm VaR}_p[Y]}\right)_+\right]=\mbox{\rm EPS}_p[Y],
\end{equation}
for all $p\in\{q\in(0,1):\mbox{\rm VaR}_q[X]\neq0~\mbox{\rm and}~\mbox{\rm VaR}_q[Y]\neq0\}$.
\end{enumerate}
\end{definition}

%%%%%%%%%%%%%%%%%%%%%%%%%%%%%%%%%%%%%%%%%%%%%%%%%%%%%%%%%%%%%%%%%%%%%%%%%%%%%%%%%%%%%%%%
%%%%%%%%%%%%%%%%%%%%%%%%%%%%%%%%%%%%%%%%%%%%%%%%%%%%%%%%%%%%%%%%%%%%%%%%%%%%%%%%%%%%%%%%
\section{VCoVaR and VCoES and their associated contribution measures} \label{sec:definition}

In this section, we review and introduce some vulnerability conditional risk measures and their associated contribution counterparts. Before formally introducing the conditional measures, the univariate VaR and the CoVaR measures are reviewed. Let $X$ and $Y$ be risks faced by financial institutions. The VaR at probability level $\beta \in (0,1)$ is implicitly defined by 
\begin{equation} \nonumber
    \PP( Y \leq {\rm VaR}_{\beta}(Y)) = \beta. 
\end{equation}
If $Y$ have c.d.f. $F$, one can alternatively write ${\rm VaR}_{\beta}(X) = F^{-1}(\beta)$. The CoVaR is defined by modify VaR by adding inequality to the condition:
\begin{equation} \nonumber
    {\rm CoVaR}_{\alpha,\beta}(Y \vert X) = {\rm VaR}_{\beta} (Y \vert X > {\rm VaR}_{\alpha}(X)).
\end{equation}
\subsection{VCoVaR and associated contribution measures}
%
%\par [Tong: Is it possible for us to do this work from a wider perspective? Conditional risk measures, or we say systemic risk measures, are risk measures that are applied to conditional random variables. Like what we do in this work, we are applying VaR (quantile, $q$-based) to $[Y \vert \forall i: X_i>{\rm VaR}_{\alpha_i}(X_i)]$ and $[Y \vert \exists i: X_i>{\rm VaR}_{\alpha_i}(X_i)]$. We can consider at least three more measures. 
%\begin{enumerate}
%    \item Expectation, i.e. $\mathbb{E}$-based risk measure, can be defined like $\mathbb{E}[Y \vert \exists i: X_i>{\rm VaR}_{\alpha_i}(X_i)]$.
%    \item Distortion risk measure, $\rho$-based risk measure, can be defined like $\rho_g[Y \vert \exists i: X_i>{\rm VaR}_{\alpha_i}(X_i)]$. Specially, we can set $\rho_g$ be TVaR.
%    \item Equivalent level, $\mathbb{P}$-based risk measure. This one might not be very practical. Let us see in the following work.
%\end{enumerate}
%]

%\yw{$\alpha_i\in [0,1), \beta\in(0,1)$?}
\par  

Consider a set of $d+1$ risks faced by multiple financial institutions denoted by $(X_1,\ldots,X_d,Y)$, for $d \in \mathbb{N}$. 
\begin{definition}\label{def:vulner} 
Given the stress events that at least one of $\bm{X}=(X_1,\ldots,X_d)$ exceeds their stress levels regulated by {\rm VaR}'s with confidence levels $\alpha_1,\ldots,\alpha_d$, the vulnerability conditional value-at-risk (in short {rm VCoVaR}): $\mathcal{X} \rightarrow \R$ is given by
\begin{equation}\label{VCoVaR-defi}
{\rm VCoVaR}_{\bm{\alpha},\beta}(Y|\bm{X})={\rm VaR}_{\beta}(Y|\mbox{$\exists$ $i$: $X_i>{\rm VaR}_{\alpha_i}(X_i)$}),
\end{equation}
where $\alpha_i\in [0,1)$ and $\beta\in(0,1)$.
\end{definition}
\par The VCoVaR measure defined in (\ref{VCoVaR-defi}) is firstly introduced in \cite{waltz2022vulnerability} to investigate the properties of and the systemic risks in the cryptocurrency market. They investigated important theoretical findings of this measure, and implemented some empirical studies to show how different distressing events of the cryptocurrencies impact the risk level of each other. Note that the stress event is characterized by a broader event that at least one financial institution is at bankruptcy level and thus it presents all possible distress scenarios and is hence maybe more appropriate in capturing domino effects  (contagion effects induced by systemic risks) than some existing alternatives such as the MCoVaR (cf. \cite{ortega2021stochastic}).
\iffalse
\begin{remark}
    The {\rm VCoVaR} measure defined in \eqref{VCoVaR-defi} degenerates to the {\rm CoVaR} when $d = 1$ and to the {\rm VaR} when $d = 0$. 
\end{remark}
\fi

Assume that $(\bm{X},Y)$ has a $d+1$-dimensional copula $C$. Let $U_1,\ldots,U_d,V$ be a group of uniform random variables such that $(\bm{U},V)$ also shares copula $C$, where $\bm{U}=(U_1,\ldots,U_d)$. We first present an alternative expression of (\ref{VCoVaR-defi}). For ease of presentation, we denote $$A_{\bm{X}}=\{\mbox{$\exists$ $i$: $X_i>\mbox{VaR}_{\alpha_i}(X_i)$}\},\quad A^{>}_{\bm{X}}=\{\mbox{$\forall$ $i$: $X_i>\mbox{VaR}_{\alpha_i}(X_i)$}\}$$ 
and
$$A_{\bm{U}}=\{\mbox{$\exists$ $i$: $U_i>\alpha_i$}\},\quad A^{>}_{\bm{U}}=\{\mbox{$\forall$ $i$: $U_i>\alpha_i$}\}.$$
Clearly, $A^{>}_{\bm{X}}$ is a subset of $A_{\bm{X}}$ and $A_{\bm{X}}$ means that at least one financial institution is at distress, while $A^{>}_{\bm{X}}$ means all of the $d$ financial institutions are at distress. Then, the following result can be reached. 
\par For given $d+1$-dimensional copula $C$, we denote $C(\alpha_1,\alpha_2,\cdots,\alpha_d,t)$ as $C(\bm{\alpha},t)$ where $\bm{\alpha} = (\alpha_1,\alpha_2,\cdots,\alpha_d)$ for simplicity. 

\begin{lemma}\label{lem:cdf-representation}
\begin{enumerate}[(a)]
\item The distribution function of $V|A_{\bm{U}}$ is given by 
\begin{equation}\label{eq:F-V-AU}
F_{V|A_{\bm{U}}}(v)=\frac{v-C(\bm{\alpha},v)}{1-C(\bm{\alpha},1)},\quad v\in[0,1].
\end{equation}
Further, (\ref{eq:F-V-AU}) is a distortion function.
\item The distribution function of $Y|A_{\bm{X}}$ is  
\begin{equation}\label{eq:F-Y-AX}
F_{Y|A_{\bm{X}}}(y)=F_{V|A_{\bm{U}}}(F_{Y}(y)).
\end{equation}
\end{enumerate}
\end{lemma}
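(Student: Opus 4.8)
The plan is to establish part (a) by evaluating the conditional distribution directly from its definition and rewriting both numerator and denominator through the copula $C$, and then to obtain part (b) by transferring the computation onto the copula scale via the probability integral transform. For part (a), I would start from $F_{V|A_{\bm{U}}}(v) = \PP(V \le v \mid A_{\bm{U}}) = \PP(\{V\le v\}\cap A_{\bm{U}})/\PP(A_{\bm{U}})$. The denominator is obtained by complementation: since $A_{\bm{U}}^{c} = \{\forall i:\ U_i \le \alpha_i\}$ and $V \le 1$ almost surely, $\PP(A_{\bm{U}}) = 1 - \PP(U_1\le\alpha_1,\dots,U_d\le\alpha_d) = 1 - C(\bm{\alpha},1)$. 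For the numerator I would split $\PP(V\le v) = \PP(\{V\le v\}\cap A_{\bm{U}}) + \PP(\{V\le v\}\cap A_{\bm{U}}^{c})$; using that $V$ is uniform so $\PP(V\le v)=v$, and that $\PP(\{V\le v\}\cap A_{\bm{U}}^{c}) = \PP(U_1\le\alpha_1,\dots,U_d\le\alpha_d,\, V\le v) = C(\bm{\alpha},v)$, I obtain numerator $v - C(\bm{\alpha},v)$. Dividing yields (\ref{eq:F-V-AU}).

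To confirm that (\ref{eq:F-V-AU}) is a distortion function, I would check the three defining properties. Monotonicity is automatic because $F_{V|A_{\bm{U}}}$ is a genuine conditional distribution function; equivalently, $v\mapsto v-C(\bm{\alpha},v)$ is non-decreasing since any partial derivative of a copula lies in $[0,1]$. The normalizations follow from $C(\bm{\alpha},0)=0$, giving $F_{V|A_{\bm{U}}}(0)=0$, and from the expression directly, giving $F_{V|A_{\bm{U}}}(1)=1$.

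For part (b), the idea is to push the conditioning event and the target event to the uniform scale. Since the marginals are absolutely continuous, $U_i=F_i(X_i)$ and $V=F_Y(Y)$ are uniform and $\VaR_{\alpha_i}(X_i)=F_i^{-1}(\alpha_i)$. Continuity of $F_i$ together with the standard identities for the generalized inverse gives the almost-sure event equality $\{X_i>\VaR_{\alpha_i}(X_i)\}=\{U_i>\alpha_i\}$, whence $A_{\bm{X}}=A_{\bm{U}}$ up to a null set; likewise continuity of $F_Y$ gives $\{Y\le y\}=\{V\le F_Y(y)\}$ up to a null set. Substituting these into $\PP(Y\le y\mid A_{\bm{X}})$ and invoking part (a) then yields $F_{Y|A_{\bm{X}}}(y)=F_{V|A_{\bm{U}}}(F_Y(y))$.

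The only genuinely delicate step is the justification of these almost-sure event identities: one must argue that $\{X_i>F_i^{-1}(\alpha_i)\}$ and $\{F_i(X_i)>\alpha_i\}$ differ by a set of probability zero, which rests on the continuity of the marginals (so that $\PP(F_i(X_i)=\alpha_i)=0$) and on the relation $F_i(F_i^{-1}(\alpha_i))=\alpha_i$. Once these identities are secured, everything else reduces to an elementary complementation computation.
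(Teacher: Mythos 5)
Your proof of part (a) is exactly the paper's argument: the same decomposition $\PP(V\le v, A_{\bm{U}})=\PP(V\le v)-\PP(V\le v, A^c_{\bm{U}})$ with $\PP(V\le v, A^c_{\bm{U}})=C(\bm{\alpha},v)$ and $\PP(A^c_{\bm{U}})=C(\bm{\alpha},1)$. The paper omits the verification that \eqref{eq:F-V-AU} is a distortion function and declares part (b) trivial, so your additional details (Lipschitz/monotonicity of $v\mapsto v-C(\bm{\alpha},v)$ and the almost-sure event identities via the probability integral transform) merely fill in steps the paper leaves to the reader; the proposal is correct.
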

\par The following Corollary provides the mathematical expression for VCoVaR.
\begin{corollary} \label{coro:VMCoVaR-representation}
    Suppose that both $(\bm{X},Y)$ and $(\bm{U},V)$ have copula $C$, where $U_1,\ldots,U_d,V$ are standard uniform random variables. Then, $\mbox{\rm VCoVaR}_{\bm{\alpha},\beta}(Y|\bm{X})$ and admit the following expressions
    \begin{equation} \label{eq:vcovar-representation}
        {\rm VCoVaR}_{\bm{\alpha},\beta}(Y|\bm{X})=F^{-1}_{Y}\left(F^{-1}_{V|A_{\bm{U}}}(\beta)\right).
    \end{equation}
\end{corollary}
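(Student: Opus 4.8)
The plan is to unwind the definition of VCoVaR into a generalized inverse of a composed distribution function, and then to invert that composition using the two parts of Lemma~\ref{lem:cdf-representation}. By Definition~\ref{def:vulner}, $\mbox{\rm VCoVaR}_{\bm{\alpha},\beta}(Y|\bm{X}) = {\rm VaR}_{\beta}(Y|A_{\bm{X}})$, which is precisely the generalized inverse $F^{-1}_{Y|A_{\bm{X}}}(\beta)=\inf\{y\in\R : F_{Y|A_{\bm{X}}}(y)\ge \beta\}$. Lemma~\ref{lem:cdf-representation}(b) rewrites the conditional c.d.f.\ as the composition $F_{Y|A_{\bm{X}}}(y)=F_{V|A_{\bm{U}}}(F_Y(y))$, so the whole statement reduces to computing the generalized inverse of $g\circ F_Y$ at $\beta$, where I write $g:=F_{V|A_{\bm{U}}}$, and showing it equals $F^{-1}_Y\bigl(F^{-1}_{V|A_{\bm{U}}}(\beta)\bigr)$.

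The key step is an inverse-of-a-composition identity, whose two ingredients I would record first. On one hand, $F_Y$ is continuous because the marginals are assumed absolutely continuous in Section~\ref{sec:Preliminaries}. On the other hand, by Lemma~\ref{lem:cdf-representation}(a) the map $g$ is a distortion function; since every copula is continuous, $v\mapsto C(\bm{\alpha},v)$ is continuous, so $g$ is a continuous nondecreasing map of $[0,1]$ into $[0,1]$ with $g(0)=0$ and $g(1)=1$. For such a nondecreasing, right-continuous $g$ one has the Galois-type equivalence $g(u)\ge\beta \iff u\ge g^{-1}(\beta)$, valid because right-continuity gives $g(g^{-1}(\beta))\ge\beta$ and monotonicity supplies the converse inclusion. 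Substituting $u=F_Y(y)$ then yields the set identity $\{y:F_{V|A_{\bm{U}}}(F_Y(y))\ge\beta\}=\{y:F_Y(y)\ge F^{-1}_{V|A_{\bm{U}}}(\beta)\}$, and taking the infimum of both sides produces $F^{-1}_{Y|A_{\bm{X}}}(\beta)=\inf\{y:F_Y(y)\ge F^{-1}_{V|A_{\bm{U}}}(\beta)\}=F^{-1}_Y\bigl(F^{-1}_{V|A_{\bm{U}}}(\beta)\bigr)$, which is exactly \eqref{eq:vcovar-representation}.

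The main obstacle is the careful bookkeeping of the generalized inverses when $g$ has flat stretches or $F_Y$ is not strictly increasing: the Galois equivalence has to be verified from right-continuity and monotonicity rather than from an honest inverse, and I would want to confirm that $\beta\in(0,1)$ lies in the effective range of $g$ so that $g^{-1}(\beta)\in(0,1]$ is well defined (this is guaranteed by $g(0)=0$, $g(1)=1$). These are the only places where continuity of $F_{V|A_{\bm{U}}}$ is genuinely used; once the equivalence is in hand, the remainder is a direct substitution and an application of the definition of $F_Y^{-1}$.
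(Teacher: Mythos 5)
Your proof is correct and follows the same route the paper intends: the corollary is presented as an immediate consequence of Lemma~\ref{lem:cdf-representation}(b), and you simply make explicit the inverse-of-a-composition step $\left(F_{V|A_{\bm{U}}}\circ F_Y\right)^{-1}(\beta)=F_Y^{-1}\left(F_{V|A_{\bm{U}}}^{-1}(\beta)\right)$, justified by continuity of the copula and monotonicity. The careful handling of generalized inverses and flat stretches is a detail the paper omits entirely, and your treatment of it is sound.
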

%\subsection{VCoVaR-based contribution measures}
\par Risk spillover effects refer to the phenomenon where risks from one financial institution, asset, or market segment spread to others, potentially causing a broader impact on the financial system. This can occur due to various factors such as interconnectedness through financial instruments, common exposures to certain market conditions, or the influence of one institution's failure on the confidence and stability of others. Relative risk spillover and absolute risk spillover are two ways to consider the risk spillover effects. According to Definition \ref{def:vulner}, we can define the related contribution measures as follows to characterize the relative and absolute risk spillover effects.
\begin{definition}\label{def:vulnercontri1} The vulnerability contribution conditional value-at-risk of risk $Y$ given the event $A_{\bm{X}}$ compared with the conventional value-at-risk of $Y$, written as $\Delta{\rm VCoVaR}$, is defined as follows:
\begin{equation}\label{contriVCoVaR-defi1}
\Delta{\rm VCoVaR}_{\bm{\alpha},\beta}(Y|\bm{X})={\rm VCoVaR}_{\bm{\alpha},\beta}(Y|\bm{X})-{\rm VaR}_{\beta}(Y).
\end{equation}
Moreover, the associated vulnerability contribution ratio conditional value-at-risk is defined as follows:
\begin{equation}\label{contriratioVCoVaR-defi1}
\Delta^R{\rm VCoVaR}_{\bm{\alpha},\beta}(Y|\bm{X})=\frac{{\rm VCoVaR}_{\bm{\alpha},\beta}(Y|\bm{X})-{\rm VaR}_{\beta}(Y)}{{\rm VaR}_{\beta}(Y)}.
\end{equation}
\end{definition}

According to \eqref{eq:vcovar-representation}, mathematical expressions of (\ref{contriVCoVaR-defi1}) and (\ref{contriratioVCoVaR-defi1}) are given as follows:
\begin{equation}\label{ConVCoVaR-expre}
\Delta{\rm VCoVaR}_{\bm{\alpha},\beta}(Y|\bm{X})=F^{-1}_{Y}\left(F^{-1}_{V|A_{\bm{U}}}(\beta)\right)-F^{-1}_{Y}(\beta).
\end{equation}
and 
\begin{equation}\label{ConratioVCoVaR-expre}
\Delta{\rm VCoVaR}_{\bm{\alpha},\beta}(Y|\bm{X})=\frac{F^{-1}_{Y}\left(F^{-1}_{V|A_{\bm{U}}}(\beta)\right)-F^{-1}_{Y}(\beta)}{F^{-1}_{Y}(\beta)}.
\end{equation}

\subsection{VCoES and associated contribution measures}
To consider a more complete picture of the potential losses in the tail of the distribution under extreme conditions, we define the so-called vulnerability conditional expected shortfall (written as VCoES) risk measures as follows, as a direct generalization of the VCoVaR \eqref{VCoVaR-defi} proposed by \cite{waltz2022vulnerability}. 
\begin{definition}\label{def:vulnerCOES} The vulnerability conditional expected shortfall of risk $Y$ given some stress event induced by $\bm{X}=(X_1,\ldots,X_d)$ is defined as follows:
\begin{equation}\label{VCoES-defi}
{\rm VCoES}_{\bm{\alpha},\beta}(Y|\bm{X})=\frac{1}{1-\beta}\int_{
\beta}^1{\rm VCoVaR}_{\bm{\alpha},t}(Y|\bm{X})\dif t.
\end{equation}
\end{definition}
In a similar manner, it can be verified that
\begin{equation}\label{VCoES-defi-Equi}
    \begin{split}
{\rm VCoES}_{\bm{\alpha},\beta}(Y|\bm{X})&=\frac{1}{1-\beta}\int_{\beta}^1F^{-1}_{Y}\left(F^{-1}_{V|A_{\bm{U}}}(t)\right)\dif t\\
&=\int_{0}^1F^{-1}_{Y}\left(F^{-1}_{V|A_{\bm{U}}}(t)\right)\dif \bar{h}_{TVaR}(t)\\
&= \int_{0}^1F^{-1}_{Y}(p)\dif \bar{h}_{TVaR}(F_{V|A_{\bm{U}}}(p)),
    \end{split}
\end{equation}
where $F_{V|A_{\bm{U}}}(\cdot)$ is defined in (\ref{eq:F-V-AU}). 
%and
%\begin{equation}\label{eq:tvar-distortion}
%\bar{h}_{TVaR}(t)=1-h_{TVaR}(1-t)=1-\min\{1,\frac{1-t}{1-\beta}\}=\max\{0,\frac{t-\beta}{1-\beta}\}.
%\end{equation}
See \cite{sordo2018IME} for detailed studies on CoES, which can be considered as a special case of VCoES by setting $d=1$. 

%Some similar findings presented in  \cite{sordo2018IME} should be considered for our definition ${\rm VCoES}_{\bm{\alpha},\beta}(Y|\bm{X})$.

%\subsection{VCoES-based contribution measures}

\begin{definition}\label{def:vulnercontri3} 
    The vulnerability contribution conditional expected shortfall of risk $Y$ given the event $A_{\bm{X}}$, written as $\Delta{\rm VCoES}$, is defined as follows:
        \begin{equation}\label{contriVCoES-defi1}
        \Delta{\rm VCoES}_{\bm{\alpha},\beta}(Y|\bm{X})={\rm VCoES}_{\bm{\alpha},\beta}(Y|\bm{X})-{\rm ES}_{\beta}(Y).
        \end{equation}
        Moreover, the associated vulnerability contribution ratio conditional value-at-risk is defined as follows:
        \begin{equation}\label{contriratioVCoES-defi1}
        \Delta^R{\rm VCoES}_{\bm{\alpha},\beta}(Y|\bm{X})=\frac{{\rm VCoES}_{\bm{\alpha},\beta}(Y|\bm{X})-{\rm ES}_{\beta}(Y)}{{\rm ES}_{\beta}(Y)}.
        \end{equation}
        \end{definition}
        
        According to \eqref{eq:vcovar-representation}, alternative expressions of (\ref{contriVCoES-defi1}) and (\ref{contriratioVCoES-defi1}) are given as follows:
    \begin{equation}\nonumber
        \Delta{\rm VCoES}_{\bm{\alpha},\beta}(Y|\bm{X})=\frac{1}{1-\beta}\int_{\beta}^1 F^{-1}_{Y}\left(F^{-1}_{V|A_{\bm{U}}}(t)\right) - F^{-1}_{Y}(t) \dif t,
    \end{equation}
    and 
    \begin{equation}\nonumber
        \Delta^{\rm R}{\rm VCoES}_{\bm{\alpha},\beta}(Y|\bm{X})= \frac{\int_{0}^1 F^{-1}_{Y}\left(p\right)\dif \bar{h}_{TVaR}(F_{V|A_{\bm{U}}}(p))}{\int_{\beta}^1 F^{-1}_{Y}(p) \dif \bar{h}_{TVaR}(p)} -1.
    \end{equation}

\section{Stochastic orders and vulnerability conditional risk measures} \label{sec:results}
\subsection{Main results}
In this section, we present the mathematical properties of VCoVaR and VCoES, as well as their associated contribution measures. To begin with, with a fixed vector $\bm{\alpha}$, it can be readily observed that both VCoVaR and VCoES are increasing with respect to $\beta$. The monotonicity of VCoVaR can be derived by applying Corollary \ref{coro:VMCoVaR-representation} and considering that the functions defined in \eqref{eq:F-V-AU} and \eqref{eq:F-V-AUg} are non-decreasing w.r.t. $v$. And \eqref{VCoES-defi-Equi} give rise to the monotonicity of VCoES with respect to $\beta$. 
\par The following results address this question: Under what circumstances will the spillover effect of risk be greater? Our common sense suggests that when financial institutions hold greater risks or when individual financial institutions are more closely connected to the market, the spillover effect of risk will be greater. The results below support the common intuition. 
\par In the following discussion, we consider two $d+1$-demensional random vectors $(\bm{X}_1,Y_1)$ and $(\bm{X}_2,Y_2)$ with respective copulas $C_1$ and $C_2$. Let $U^*_1,\ldots,U^*_d,V^*$ and $\hat{U}_1,\ldots,\hat{U}_d,\hat{V}$ be two groups of uniform random variables such that $(\bm{U}^*,V^*)$ and $(\hat{\bm{U}},\hat{V})$ respectively admit copula $C_1$ and $C_2$, where $\bm{U}^*=(U^*_1,\ldots,U^*_d)$ and $\hat{\bm{U}}=(\hat{U}_1,\ldots,\hat{U}_d)$. For given $\bm{\alpha} \in [0,1]^d$, let 
\begin{equation} \label{eq:def_l_alpha}
    l_{\bm{\alpha}}(v)=\frac{v-C_1(\bm{\alpha},v)}{v-C_2(\bm{\alpha},v)}.
\end{equation}
Next, we investigate the monotonicity property of ${\rm VCoVaR}_{\bm{\alpha},\beta}(Y|\bm{X})$ with respect to the confidence levels, dependence structure and marginal distributions of risks.
\begin{theorem} \label{thm:vcovar-dcdm}
    If $Y_1 \leq_{{\rm st}} Y_2$ and 
    \begin{equation} \label{eq:l-alpha}
        l_{\bm{\alpha}}(v) \geq l_{\bm{\alpha}}(1), \quad \forall v \in [0,1],
    \end{equation}
    then 
    \begin{equation} \label{eq:VCoVaR-leq}
        {\rm VCoVaR}_{\bm{\alpha},\beta}(Y_1|\bm{X}_1) \leq {\rm VCoVaR}_{\bm{\alpha},\beta}(Y_2|\bm{X}_2)
    \end{equation}
    for all $\beta \in (0,1)$.
\end{theorem}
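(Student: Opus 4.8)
The plan is to reduce the comparison of the two VCoVaR values to a single comparison of quantile functions, exploiting the factorized representation in Corollary~\ref{coro:VMCoVaR-representation}. By \eqref{eq:vcovar-representation},
$${\rm VCoVaR}_{\bm{\alpha},\beta}(Y_1|\bm{X}_1)=F^{-1}_{Y_1}\bigl(F^{-1}_{V^*|A_{\bm{U}^*}}(\beta)\bigr),\qquad {\rm VCoVaR}_{\bm{\alpha},\beta}(Y_2|\bm{X}_2)=F^{-1}_{Y_2}\bigl(F^{-1}_{\hat{V}|A_{\hat{\bm{U}}}}(\beta)\bigr),$$
where, writing $F_1:=F_{V^*|A_{\bm{U}^*}}$ and $F_2:=F_{\hat{V}|A_{\hat{\bm{U}}}}$, formula \eqref{eq:F-V-AU} gives $F_i(v)=\bigl(v-C_i(\bm{\alpha},v)\bigr)/\bigl(1-C_i(\bm{\alpha},1)\bigr)$. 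Thus it suffices to chain two monotonicity facts: the marginal order $Y_1\leq_{\rm st}Y_2$, and an ordering of the inner conditional quantiles coming from the dependence hypothesis \eqref{eq:l-alpha}.

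First I would record the sign facts that keep everything well-defined: since $v\mapsto C_i(\bm{\alpha},v)$ is non-decreasing and dominated by the identity (a copula is bounded above by each of its arguments), the numerators $v-C_i(\bm{\alpha},v)$ are non-negative; and since each $\alpha_i<1$, the denominators $1-C_i(\bm{\alpha},1)=\PP(A_{\bm{U}^*})$, resp.\ $\PP(A_{\hat{\bm{U}}})$, are strictly positive. These signs are exactly what makes the cross-multiplication below direction-preserving.

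The heart of the argument is to show that condition \eqref{eq:l-alpha} is precisely the pointwise inequality $F_1(v)\geq F_2(v)$ on $[0,1]$. Substituting the definition \eqref{eq:def_l_alpha}, the hypothesis $l_{\bm{\alpha}}(v)\geq l_{\bm{\alpha}}(1)$ reads
$$\frac{v-C_1(\bm{\alpha},v)}{v-C_2(\bm{\alpha},v)}\;\geq\;\frac{1-C_1(\bm{\alpha},1)}{1-C_2(\bm{\alpha},1)}.$$
Cross-multiplying by the positive quantities $v-C_2(\bm{\alpha},v)$ and $1-C_2(\bm{\alpha},1)$, and then dividing through by the positive product $\bigl(1-C_1(\bm{\alpha},1)\bigr)\bigl(1-C_2(\bm{\alpha},1)\bigr)$, rearranges exactly into $F_1(v)\geq F_2(v)$. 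Because a pointwise-larger distribution function has a pointwise-smaller generalized inverse, this yields $F^{-1}_1(\beta)\leq F^{-1}_2(\beta)$ for every $\beta\in(0,1)$.

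It then remains to compose the two monotone maps. From $Y_1\leq_{\rm st}Y_2$ (Definition~\ref{def:order}(i)) we have $F^{-1}_{Y_1}(u)\leq F^{-1}_{Y_2}(u)$ for all $u\in(0,1)$; evaluating at $u=F^{-1}_1(\beta)$ and then using that $F^{-1}_{Y_2}$ is non-decreasing together with $F^{-1}_1(\beta)\leq F^{-1}_2(\beta)$ gives
$$F^{-1}_{Y_1}\bigl(F^{-1}_1(\beta)\bigr)\leq F^{-1}_{Y_2}\bigl(F^{-1}_1(\beta)\bigr)\leq F^{-1}_{Y_2}\bigl(F^{-1}_2(\beta)\bigr),$$
which is exactly \eqref{eq:VCoVaR-leq}. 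I expect the only genuinely delicate point to be the algebraic equivalence between \eqref{eq:l-alpha} and the stochastic order $F_1\geq F_2$, specifically verifying that the cross-multiplication preserves the inequality direction, which relies on the non-negativity and strict positivity recorded in the second paragraph. Once that equivalence is in hand, the conclusion follows from the standard two-step composition of monotone quantile maps, and the same scheme will extend verbatim to the contribution and VCoES comparisons that follow.
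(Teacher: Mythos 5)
Your proposal is correct and follows essentially the same route as the paper: it uses the representation of Corollary~\ref{coro:VMCoVaR-representation}, shows that \eqref{eq:l-alpha} is equivalent to the pointwise ordering $F_{V^*|A_{\bm{U}^*}}\geq F_{\hat{V}|A_{\hat{\bm{U}}}}$ (this is exactly the paper's Lemma~\ref{lem:cdf-ineq}), and then chains the resulting quantile inequality with $Y_1\leq_{\rm st}Y_2$. The only difference is that you spell out the positivity checks behind the cross-multiplication, which the paper leaves implicit.
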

\par It is worthy noting that the requirement \eqref{eq:l-alpha} in Theorem \ref{thm:vcovar-dcdm} is quite general and will be automatically satisfied if $l_{\alpha}(t)$ increases w.r.t. $t \in [\beta,1]$. In some parametric settings, this conditions is satisfied when $C_2$ is of greater positive dependency, comparing with $C_1$. For example, assume that $C_1$ is Gumbel copula \eqref{eq:gumbel} with parameter $\theta_1$ and $C_2$ is Gumbel copula with parameter $\theta_2$, it can be readily verified that \eqref{eq:l-alpha} is satisfied when $\theta_1 \leq \theta_2$.
%\par The condition $l_{\bm{\alpha}}(v) \geq l_{\bm{\alpha}}(1)$ for any $v \in [0,1]$

\iffalse
\begin{corollary}\label{coro:vcovar-scdm}
    Suppose that $(\bm{X}_1,Y_1)$ and $(\bm{X}_2,Y_2)$ admits the same copula and $Y_1 \leq_{{\rm st}} Y_2$, then we have \eqref{eq:VCoVaR-leq}.
\end{corollary}
\begin{corollary}\label{coro:vcovar-smdc}
    Suppose that $(\bm{X}_1,Y_1)$ and $(\bm{X}_2,Y_2)$ admit respective copulas $C_1$ and $C_2$ and $Y_1$ and $Y_2$ are identically distributed. If $l_{\bm{\alpha}}(v) \leq l_{\bm{\alpha}}(1)$ for any $v \in [0,1]$, then \eqref{eq:VCoVaR-leq} can be established.
\end{corollary}
\fi
\par We next provide sufficient conditions for comparing the VCoVaR-associated contribution measures for two sets of random vectors $(\bm{X}_1,Y_1)$ and $(\bm{X}_2,Y_2)$.
\begin{theorem} \label{thm:d-drvcovar-dcdm}
    Suppose one (or both)  of $C_1$ and $C_2$ is ${\rm LTD}_{d+1}^1$ and \eqref{eq:l-alpha} is satisfied. 
    \begin{enumerate}[(i)]
        \item If $Y_1 \leq_{{\rm disp}} Y_2$, then $\Delta{\rm VCoVaR}_{\bm{\alpha},\beta}(Y_1|\bm{X}_1) \leq \Delta{\rm VCoVaR}_{\bm{\alpha},\beta}(Y_2|\bm{X}_2)$ for all $\beta \in (0,1)$.
        \item If $Y_1 \leq_{{\star}} Y_2$, then $\Delta^{\rm R}{\rm VCoVaR}_{\bm{\alpha},\beta}(Y_1|\bm{X}_1) \leq \Delta^{\rm R}{\rm VCoVaR}_{\bm{\alpha},\beta}(Y_2|\bm{X}_2)$ for all $\beta \in (0,1)$.
    \end{enumerate}
\end{theorem}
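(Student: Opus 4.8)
The plan is to reduce both inequalities to statements about the two ``inner'' quantile levels produced by the copulas and then feed them into the dispersive and star orders. Writing $g_i(\beta)=F^{-1}_{V|A_{\bm U}}(\beta)$ for the generalized inverse of the distortion $F_{V|A_{\bm U}}$ in \eqref{eq:F-V-AU} associated with copula $C_i$ ($i=1,2$), Corollary~\ref{coro:VMCoVaR-representation} gives $\Delta{\rm VCoVaR}_{\bm\alpha,\beta}(Y_i|\bm X_i)=F^{-1}_{Y_i}(g_i(\beta))-F^{-1}_{Y_i}(\beta)$, and similarly for the ratio version after dividing by $F^{-1}_{Y_i}(\beta)$. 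So everything hinges on comparing $g_1(\beta)$, $g_2(\beta)$ and $\beta$.

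First I would establish two auxiliary facts. (a) Condition \eqref{eq:l-alpha} is equivalent, after clearing the positive denominators $v-C_i(\bm\alpha,v)$ and $1-C_i(\bm\alpha,1)$, to the pointwise inequality $F_{V^*|A_{\bm U^*}}(v)\ge F_{\hat V|A_{\hat{\bm U}}}(v)$ for all $v$; taking generalized inverses reverses this and yields $g_1(\beta)\le g_2(\beta)$ for every $\beta$. This is exactly the comparison already exploited in the proof of Theorem~\ref{thm:vcovar-dcdm}. (b) If $C_i$ is ${\rm LTD}^1_{d+1}$, then by Lemma~\ref{lem:ltd}(i) the map $v\mapsto C_i(\bm\alpha,v)/v$ is nonincreasing, so $C_i(\bm\alpha,v)\ge v\,C_i(\bm\alpha,1)$ and hence $F_{V|A_{\bm U}}(v)\le v$; this forces $g_i(\beta)\ge\beta$. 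Thus ``one (or both) copula is LTD'' guarantees that at least one of $g_1(\beta),g_2(\beta)$ dominates $\beta$, which is precisely the anchor needed to invoke the orders at an admissible pair of levels.

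For part (i) I would then chain monotonicity of the quantile functions with the dispersive order, splitting on which copula is LTD. If $C_2$ is LTD (so $g_2(\beta)\ge\beta$), monotonicity of $F^{-1}_{Y_1}$ together with $g_1(\beta)\le g_2(\beta)$ gives $F^{-1}_{Y_1}(g_1(\beta))-F^{-1}_{Y_1}(\beta)\le F^{-1}_{Y_1}(g_2(\beta))-F^{-1}_{Y_1}(\beta)$, and then $Y_1\le_{\rm disp}Y_2$ applied at the ordered pair $\beta\le g_2(\beta)$ upgrades the right-hand side to $F^{-1}_{Y_2}(g_2(\beta))-F^{-1}_{Y_2}(\beta)$. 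If instead $C_1$ is LTD (so $g_1(\beta)\ge\beta$), I would apply the dispersive order first at $\beta\le g_1(\beta)$ and then use monotonicity of $F^{-1}_{Y_2}$ with $g_1(\beta)\le g_2(\beta)$; either route closes the inequality.

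Part (ii) is where I expect the real work. After cancelling the $-1$, the claim is $F^{-1}_{Y_1}(g_1(\beta))/F^{-1}_{Y_1}(\beta)\le F^{-1}_{Y_2}(g_2(\beta))/F^{-1}_{Y_2}(\beta)$, i.e. the ratio $F^{-1}_{Y_2}(g_2(\beta))/F^{-1}_{Y_1}(g_1(\beta))$ must dominate $r(\beta)$, where $r(u):=F^{-1}_{Y_2}(u)/F^{-1}_{Y_1}(u)$ is nondecreasing by $Y_1\le_\star Y_2$. The trick is to insert a common level and factor: when $C_2$ is LTD I would write the target ratio as $r(g_2(\beta))\cdot\big(F^{-1}_{Y_1}(g_2(\beta))/F^{-1}_{Y_1}(g_1(\beta))\big)$, where the second factor is $\ge1$ by $g_1(\beta)\le g_2(\beta)$ and the first is $\ge r(\beta)$ by $g_2(\beta)\ge\beta$ and monotonicity of $r$; symmetrically, when $C_1$ is LTD one factors as $\big(F^{-1}_{Y_2}(g_2(\beta))/F^{-1}_{Y_2}(g_1(\beta))\big)\cdot r(g_1(\beta))$ and reads $r$ at $g_1(\beta)\ge\beta$. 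The main obstacles are selecting the factorization consistent with which copula is LTD (so that $r$ is evaluated at a level $\ge\beta$) and ensuring the quantiles are positive so that the ratios — and the star order itself — are well defined; I would carry this positivity as a standing assumption, as is implicit in Definition~\ref{def:order}(iv).
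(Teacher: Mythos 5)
Your proposal is correct and follows essentially the same route as the paper: the same two auxiliary facts ($g_1\le g_2$ from \eqref{eq:l-alpha} via the paper's Lemma \ref{lem:cdf-ineq}, and $g_i(\beta)\ge\beta$ from the LTD property via Lemma \ref{lem:LTD-ineq}), the same two-step chain for part (i), and the same ratio manipulation for part (ii). Your explicit case split on which copula is LTD is slightly more careful than the paper's ``without loss of generality'' (which works only because $g_2\ge g_1\ge\beta$ when $C_1$ is the LTD one), but this is a refinement of the same argument rather than a different approach.
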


\iffalse
\begin{corollary}
    Suppose that $(\bm{X}_1,Y_1)$ and $(\bm{X}_2,Y_2)$ admit a common copulas $C$ which is concave w.r.t. its last argument. Then $Y_1 \leq_{{\rm disp}} Y_2$ implie \eqref{eq:DVCoVaR-leq} and 
    \begin{equation} \label{eq:DiVCoVaR-leq}
        \Delta_i{\rm VCoVaR}_{\bm{\alpha},\beta}(Y_1|\bm{X}_1) \leq \Delta_i {\rm VCoVaR}_{\bm{\alpha},\beta}(Y_2|\bm{X}_2)
    \end{equation}
    are true for all $\beta \in (0,1)$.
\end{corollary}

\textcolor{red}{\textbf{Question}: Is it meaningful to choose the benchmark measure as the ${\rm VCoVaR}_{\bm{\alpha},\beta}(Y|\bm{X})$ so that 
\begin{equation*}
\Delta_{V}{\rm VCoVaR}_{\bm{\alpha},\beta}(Y|\bm{X})={\rm MCoVaR}_{\bm{\alpha},\beta}(Y|\bm{X})-{\rm VCoVaR}_{\bm{\alpha},\beta}(Y|\bm{X})?
\end{equation*}
Also the ratio-type contribution measures?}
\fi
The sufficient conditions for comparing the VCoES's and the related contribution measures for two sets of random vectors are provided in the next two theorems. Similar results for CoES can be found in \cite{mainik2014} and \cite{sordo2018IME} and treated as special case of (i) of Theorem \ref{thm:dr-vcoes-dcdm}.
\begin{theorem} \label{thm:dr-vcoes-dcdm}
    Suppose that one (or both) of $C_1$ and $C_2$ is concave w.r.t. its last argument and \eqref{eq:l-alpha} is satisfied.
    \begin{enumerate}[(i)]
        \item If $Y_1 \leq_{\rm icx} Y_2$, then ${\rm VCoES}_{\bm{\alpha},\beta}(Y_1|\bm{X}_1) \leq {\rm VCoES}_{\bm{\alpha},\beta}(Y_2|\bm{X}_2)$ for all $\beta \in (0,1)$.
        \item If $Y_1 \leq_{\rm eps} Y_2$, then $\Delta^{\rm R}{\rm VCoES}_{\bm{\alpha},\beta}(Y_1|\bm{X}_1) \leq \Delta^{\rm R}{\rm VCoES}_{\bm{\alpha},\beta}(Y_2|\bm{X}_2)$ for all $\beta \in (0,1)$.
    \end{enumerate}
\end{theorem}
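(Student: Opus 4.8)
The plan is to reduce both parts to a single structural fact: under the stated concavity, $\mathrm{VCoES}_{\bm\alpha,\beta}(\,\cdot\,|\bm X)$ is a mixture of expected shortfalls of the marginal. Write $\phi_C:=F_{V|A_{\bm U}}$ for the distortion \eqref{eq:F-V-AU} attached to a copula $C$, so that the last line of \eqref{VCoES-defi-Equi} reads $\mathrm{VCoES}_{\bm\alpha,\beta}(Y|\bm X)=\int_0^1 F_Y^{-1}(p)\,\dif G_C(p)$ with $G_C:=\bar h_{TVaR}\circ\phi_C$ and $\bar h_{TVaR}(t)=(t-\beta)_+/(1-\beta)$. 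The first observation is that ``$C$ concave in its last argument'' is precisely ``$\phi_C$ convex'', because $\phi_C''(p)=-\partial_p^2 C(\bm\alpha,p)/(1-C(\bm\alpha,1))$; since $\bar h_{TVaR}$ is convex and nondecreasing, $G_C$ is then convex, its density $w_C:=G_C'$ is nondecreasing and vanishes on $[0,\gamma_C)$ with $\gamma_C:=\phi_C^{-1}(\beta)\ge\beta$ (the last inequality because a convex distortion lies below the diagonal). A nondecreasing, unit-mass spectrum is a mixture of the $\mathrm{ES}$-spectra $p\mapsto\id[p\ge q]/(1-q)$, so there is a probability measure $m_C$ on $[\gamma_C,1]$ with
\begin{equation}\nonumber
\mathrm{VCoES}_{\bm\alpha,\beta}(Y|\bm X)=\int_{\gamma_C}^1 \mathrm{ES}_q(Y)\,\dif m_C(q);
\end{equation}
equivalently, $\mathrm{VCoES}$ is a distortion risk measure $D_{g_C}$ with concave distortion $g_C(u)=1-G_C(1-u)$.

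Next I would record the copula-comparison fact used in both parts. Clearing the positive denominators in \eqref{eq:l-alpha} shows that it is equivalent to $\phi_{C_1}(v)\ge\phi_{C_2}(v)$ for all $v$, hence to $g_{C_1}\le g_{C_2}$ pointwise; since $D_h[Y]$ is nondecreasing in the distortion $h$, this gives, for every fixed marginal $Y$, that the $\mathrm{VCoES}$ built from $C_1$ does not exceed the one built from $C_2$. This step uses no concavity. Part (i) now follows by interpolation: assuming without loss of generality that $C_1$ is concave in its last argument, $g_{C_1}$ is concave, so $Y_1\le_{\mathrm{icx}}Y_2$ gives $D_{g_{C_1}}[Y_1]\le D_{g_{C_1}}[Y_2]$ by Definition \ref{def:order}(ii), and the copula fact gives $D_{g_{C_1}}[Y_2]\le D_{g_{C_2}}[Y_2]$; chaining proves the claim. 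If instead $C_2$ is concave one interpolates in the reverse order (copula step first, then $\mathrm{icx}$ against the concave $g_{C_2}$), which explains why only one of the two copulas need be concave.

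For part (ii) the mixture representation gives $\tfrac{\mathrm{VCoES}_{\bm\alpha,\beta}(Y|\bm X)}{\mathrm{ES}_\beta(Y)}=\int_{\gamma_C}^1 \tfrac{\mathrm{ES}_q(Y)}{\mathrm{ES}_\beta(Y)}\,\dif m_C(q)$, so, after subtracting the constant $1$, it suffices to prove that $Y_1\le_{\mathrm{eps}}Y_2$ forces $\tfrac{\mathrm{ES}_q(Y_1)}{\mathrm{ES}_\beta(Y_1)}\le\tfrac{\mathrm{ES}_q(Y_2)}{\mathrm{ES}_\beta(Y_2)}$ for all $q\ge\beta$; integrating against $m_C$ (supported in $[\gamma_C,1]\subseteq[\beta,1]$) and applying the copula fact as in part (i) then finishes. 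The crux is that the EPS order makes $p\mapsto\mathrm{ES}_p(Y_1)/\mathrm{ES}_p(Y_2)$ nonincreasing. To see this, set $T_Y(p)=\int_p^1 F_Y^{-1}(u)\,\dif u=(1-p)\mathrm{ES}_p(Y)$, so $T_Y'=-F_Y^{-1}$; on the tail range where $\mathrm{VaR}_p(Y_i)>0$ the inequality $\mathrm{EPS}_p(Y_1)\le\mathrm{EPS}_p(Y_2)$ rearranges to $T_{Y_1}/(-T_{Y_1}')\le T_{Y_2}/(-T_{Y_2}')$, i.e. $T_{Y_2}T_{Y_1}'-T_{Y_1}T_{Y_2}'\le0$, which is exactly the sign statement $(T_{Y_1}/T_{Y_2})'\le0$. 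Since $T_{Y_1}/T_{Y_2}=\mathrm{ES}_p(Y_1)/\mathrm{ES}_p(Y_2)$, this ratio is nonincreasing, and evaluating at $q\ge\beta$ and cross-multiplying yields the required marginal inequality.

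The main obstacle I anticipate is the first paragraph: turning ``$C$ concave in its last argument'' into the mixture-of-$\mathrm{ES}$ (equivalently, concave-distortion) representation, with careful treatment of the boundary jump of $w_C$ at $\gamma_C$ and of the identification of $m_C$. Once that representation is in hand, part (i) is a two-line interpolation and part (ii) reduces to the short $T_Y$ computation above; there the only delicacy is to stay within the range $\mathrm{VaR}_p(Y_i)>0$ on which the EPS order is defined.
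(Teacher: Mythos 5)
Your proposal is correct, and part (i) is essentially the paper's argument: both proofs interpolate through the intermediate quantity built from the concave copula and the marginal $Y_1$ (or $Y_2$), using the copula condition \eqref{eq:l-alpha} to compare distortions at fixed marginal and the $\leq_{\rm icx}$ characterization via concave distortions to compare marginals at fixed (concave) copula; your Kusuoka-style mixture-of-$\mathrm{ES}$ representation is just an equivalent packaging of ``concave distortion.'' The genuine difference is in part (ii). The paper performs the same copula-comparison step (its inequality \eqref{eq:DRVCoES-compare-1}) and then handles the marginal step by invoking the external characterization of the $\leq_{\rm eps}$ order through the ratio functionals $I_{A,B}$ (Lemma \ref{lemma:epw}, Theorem 3.25 of Belzunce et al.), which requires verifying that $\bar h_{TVaR}\circ F_{\hat V|A_{\hat{\bm U}}}\circ \bar h_{TVaR}^{-1}$ is convex. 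You instead prove from scratch that $Y_1\leq_{\rm eps}Y_2$ is equivalent (on the range where quantiles are positive) to $p\mapsto \mathrm{ES}_p(Y_1)/\mathrm{ES}_p(Y_2)$ being nonincreasing, via the identity $\mathrm{EPS}_p(Y)=T_Y(p)/F_Y^{-1}(p)-(1-p)$ with $T_Y'=-F_Y^{-1}$, and then integrate the resulting pointwise inequality $\mathrm{ES}_q(Y_1)/\mathrm{ES}_\beta(Y_1)\leq \mathrm{ES}_q(Y_2)/\mathrm{ES}_\beta(Y_2)$, $q\geq\beta$, against the mixing measure $m_C$ supported in $[\gamma_C,1]\subseteq[\beta,1]$ (this inclusion, from $\phi_C$ lying below the diagonal, is exactly where the concavity hypothesis enters your part (ii)). What your route buys is self-containedness and a transparent explanation of \emph{why} the eps order is the right hypothesis; what the paper's route buys is brevity and no regularity worries about differentiating $T_{Y_1}/T_{Y_2}$. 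The only caveats in your write-up are minor and shared with the paper: the second-derivative formula for $\phi_C$ presupposes smoothness that concavity already makes unnecessary, and the eps-order manipulation needs $\mathrm{VaR}_p(Y_i)>0$ on $[\beta,1)$, which you flag and which is implicit in the cited Lemma \ref{lemma:epw} as well.
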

\par The following result, serving as a corollary to Theorem \ref{thm:vcovar-dcdm} and Theorem \ref{thm:dr-vcoes-dcdm}, reveals to us an intuitive result: under positive dependence, VCoVaR and VCoES are always greater than the unconditional prototypes VaR and ES, respectively. As a fundamental property of VCoVaR and VCoES, this was not mentioned in \cite{waltz2022vulnerability} nor, to the best of our knowledge, other related papers, and we supplement it here.
\begin{corollary}\label{cor:VCoVaR-greater}
    If $C_1$ is of ${\rm LTD}_{d+1}^1$, then ${\rm VaR}_{\beta}(Y_1) \leq {\rm VCoVaR}_{\bm{\alpha},\beta}(Y_1|\bm{X}_1)$ and ${\rm ES}_{\beta}(Y_1) \leq {\rm VCoES}_{\bm{\alpha},\beta}(Y_1|\bm{X}_1)$.
\end{corollary}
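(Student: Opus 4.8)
The plan is to realize both ${\rm VaR}_\beta(Y_1)$ and ${\rm ES}_\beta(Y_1)$ as vulnerability measures evaluated under the independence copula, and then invoke the comparison Theorems \ref{thm:vcovar-dcdm} and \ref{thm:dr-vcoes-dcdm}. Concretely, let $\Pi$ denote the $(d+1)$-dimensional independence copula, for which $\Pi(\bm\alpha,v)=\left(\prod_{i=1}^d\alpha_i\right)v$. Substituting this into \eqref{eq:F-V-AU} gives $F_{V|A_{\bm U}}(v)=v$, so $F_{V|A_{\bm U}}^{-1}$ is the identity; hence by Corollary \ref{coro:VMCoVaR-representation} and \eqref{VCoES-defi} the vulnerability measures of a risk $Y_1'$ with copula $\Pi$ collapse to ${\rm VCoVaR}_{\bm\alpha,\beta}(Y_1'|\bm X')=F_{Y_1'}^{-1}(\beta)={\rm VaR}_\beta(Y_1')$ and ${\rm VCoES}_{\bm\alpha,\beta}(Y_1'|\bm X')={\rm ES}_\beta(Y_1')$. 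I would therefore introduce an auxiliary vector $(\bm X',Y_1')$ whose copula is $\Pi$ and whose last marginal equals $F_{Y_1}$, and compare it against $(\bm X_1,Y_1)$.

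For the VaR inequality, I would apply Theorem \ref{thm:vcovar-dcdm} with the first vector taken as $(\bm X',Y_1')$ (copula $\Pi$) and the second as $(\bm X_1,Y_1)$ (copula $C_1$). Since $Y_1'$ and $Y_1$ share the same distribution, $Y_1'\le_{\rm st}Y_1$ holds trivially. It then remains to verify condition \eqref{eq:l-alpha}: with this choice the function of \eqref{eq:def_l_alpha} becomes $l_{\bm\alpha}(v)=\frac{v-\Pi(\bm\alpha,v)}{v-C_1(\bm\alpha,v)}=\frac{v\,(1-\prod_{i=1}^d\alpha_i)}{v-C_1(\bm\alpha,v)}$, and a direct manipulation (cancelling the positive factor $1-\prod_{i=1}^d\alpha_i$ and cross-multiplying the positive denominators) shows that $l_{\bm\alpha}(v)\ge l_{\bm\alpha}(1)$ is equivalent to $C_1(\bm\alpha,v)/v\ge C_1(\bm\alpha,1)$ for all $v\in(0,1)$. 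This last inequality is exactly what the ${\rm LTD}_{d+1}^1$ assumption delivers: by Lemma \ref{lem:ltd}(i) the property means $C_1(\bm\alpha,v)/v$ is nonincreasing in $v$, so its value at any $v\le 1$ dominates its value at $v=1$. Theorem \ref{thm:vcovar-dcdm} then yields ${\rm VaR}_\beta(Y_1)={\rm VCoVaR}_{\bm\alpha,\beta}(Y_1'|\bm X')\le{\rm VCoVaR}_{\bm\alpha,\beta}(Y_1|\bm X_1)$.

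For the ES inequality I would argue analogously with Theorem \ref{thm:dr-vcoes-dcdm}(i), keeping the same two vectors. The $\le_{\rm icx}$ requirement $Y_1'\le_{\rm icx}Y_1$ again holds by equality in distribution, and condition \eqref{eq:l-alpha} is verified exactly as above. The only additional hypothesis to check is that one of the two copulas be concave in its last argument, which is immediate since $\Pi(\bm\alpha,\cdot)$ is linear, hence concave; notably no concavity assumption on $C_1$ itself is needed. Theorem \ref{thm:dr-vcoes-dcdm}(i) then gives ${\rm ES}_\beta(Y_1)\le{\rm VCoES}_{\bm\alpha,\beta}(Y_1|\bm X_1)$. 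I expect the main (and essentially only) obstacle to be the algebraic reduction of \eqref{eq:l-alpha} to the single-ratio inequality $C_1(\bm\alpha,v)/v\ge C_1(\bm\alpha,1)$ together with its identification as the copula form of ${\rm LTD}_{d+1}^1$ supplied by Lemma \ref{lem:ltd}; everything else is bookkeeping. As a sanity check, one can also bypass the theorems and prove both inequalities directly from the same ratio inequality, since $C_1(\bm\alpha,v)/v\ge C_1(\bm\alpha,1)$ forces $F_{V|A_{\bm U}}(v)\le v$, hence $F_{V|A_{\bm U}}^{-1}(\beta)\ge\beta$, and therefore $F_{Y_1}^{-1}\!\left(F_{V|A_{\bm U}}^{-1}(\beta)\right)\ge F_{Y_1}^{-1}(\beta)$, with the ES statement following by integrating this pointwise comparison over $[\beta,1]$.
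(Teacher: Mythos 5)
Your proof is correct and follows essentially the same route as the paper: the paper likewise realizes ${\rm VaR}_\beta(Y_1)$ as the VCoVaR under the independence copula, reduces condition \eqref{eq:l-alpha} to the ratio inequality $C_1(\bm{\alpha},v)/v \ge C_1(\bm{\alpha},1)$ supplied by the ${\rm LTD}_{d+1}^1$ property, and then applies Theorem \ref{thm:vcovar-dcdm} with the trivial relation $Y_1 \le_{\rm st} Y_1$. For the ES half, your closing direct argument (integrating the pointwise inequality $F_{Y_1}^{-1}\bigl(F_{V|A_{\bm{U}}}^{-1}(t)\bigr) \ge F_{Y_1}^{-1}(t)$ over $t \in [\beta,1]$) is the safer route, since the proof of Theorem \ref{thm:dr-vcoes-dcdm}(i) in fact uses concavity of the copula attached to the dominating vector --- here $C_1$, which is not assumed concave --- although in your application the ${\rm icx}$ step is an equality in distribution, so no real gap arises.
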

\par We next provide sufficient conditions for comparing the VCoES contribution measures for two sets of random vectors.
\begin{theorem} \label{thm:dvcoes-dcdm}
    Suppose that $(\bm{X}_1,Y_1)$ and $(\bm{X}_2,Y_2)$ admit respective copulas $C_1$ and $C_2$ and one (or both) of $C_1$ and $C_2$ is ${\rm LTD}_{d+1}^1$. If $Y_1 \leq_{{\rm disp}} Y_2$ and \eqref{eq:l-alpha} is satisfied, then 
    \begin{equation} \label{eq:DVCoES-leq}
        \Delta{\rm VCoES}_{\bm{\alpha},\beta}(Y_1|\bm{X}_1) \leq \Delta{\rm VCoES}_{\bm{\alpha},\beta}(Y_2|\bm{X}_2)
    \end{equation}
for all $\beta \in (0,1)$.
\label{VCoES_ratio_beta}
\end{theorem}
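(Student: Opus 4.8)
The plan is to reduce the comparison of the two contribution measures to a pointwise inequality between their integrands and then integrate. Write $\phi_1(t)=F^{-1}_{V^*|A_{\bm U^*}}(t)$ and $\phi_2(t)=F^{-1}_{\hat V|A_{\hat{\bm U}}}(t)$ for the generalized inverses of the conditional distortion functions attached to $C_1$ and $C_2$ via \eqref{eq:F-V-AU}. Using \eqref{VCoES-defi-Equi} together with ${\rm ES}_\beta(Y)=\frac{1}{1-\beta}\int_\beta^1 F^{-1}_Y(t)\dif t$ (equivalently, invoking the integral expression for $\Delta{\rm VCoES}$ recorded just after Definition \ref{def:vulnercontri3}), I would record
$$\Delta{\rm VCoES}_{\bm\alpha,\beta}(Y_i|\bm X_i)=\frac{1}{1-\beta}\int_\beta^1 D_i(t)\dif t,\qquad D_i(t):=F^{-1}_{Y_i}(\phi_i(t))-F^{-1}_{Y_i}(t).$$
Thus it suffices to prove the pointwise bound $D_1(t)\le D_2(t)$ for every $t\in(0,1)$, since integrating over $[\beta,1]$ and dividing by $1-\beta$ then yields \eqref{eq:DVCoES-leq}.

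Two preliminary facts feed this pointwise bound. First, I would translate the hypothesis \eqref{eq:l-alpha}: writing out $l_{\bm\alpha}(v)\ge l_{\bm\alpha}(1)$, cross-multiplying (both numerators $v-C_i(\bm\alpha,v)$ are nonnegative by the copula bound $C_i(\bm\alpha,v)\le v$, and $1-C_i(\bm\alpha,1)>0$) and rearranging gives exactly
$$\frac{v-C_1(\bm\alpha,v)}{1-C_1(\bm\alpha,1)}\ge\frac{v-C_2(\bm\alpha,v)}{1-C_2(\bm\alpha,1)},\qquad v\in[0,1],$$
that is $F_{V^*|A_{\bm U^*}}(v)\ge F_{\hat V|A_{\hat{\bm U}}}(v)$ by \eqref{eq:F-V-AU}. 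Passing to generalized inverses reverses the inequality, so $\phi_1(t)\le\phi_2(t)$ for all $t$. Second, whenever $C_i$ is ${\rm LTD}^1_{d+1}$, Lemma \ref{lem:ltd}(i) says $C_i(\bm\alpha,v)/v$ is nonincreasing in $v$, hence $C_i(\bm\alpha,v)\ge v\,C_i(\bm\alpha,1)$; substituting into \eqref{eq:F-V-AU} gives $F_{V|A_{\bm U}}(v)\le v$ and therefore $\phi_i(t)\ge t$. So the LTD assumption guarantees $\phi_i(t)\ge t$ for (at least) the copula that is ${\rm LTD}^1_{d+1}$.

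With $\phi_1\le\phi_2$ and $Y_1\le_{\rm disp}Y_2$ in hand, I would establish $D_1(t)\le D_2(t)$ by a two-step chain, splitting on which copula is ${\rm LTD}^1_{d+1}$. If $C_1$ is ${\rm LTD}^1_{d+1}$, then $t\le\phi_1(t)$, so the dispersive order applied to the pair of levels $t\le\phi_1(t)$ gives $F^{-1}_{Y_1}(\phi_1(t))-F^{-1}_{Y_1}(t)\le F^{-1}_{Y_2}(\phi_1(t))-F^{-1}_{Y_2}(t)$; then monotonicity of $F^{-1}_{Y_2}$ together with $\phi_1(t)\le\phi_2(t)$ upgrades the right-hand side to $F^{-1}_{Y_2}(\phi_2(t))-F^{-1}_{Y_2}(t)=D_2(t)$. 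If instead $C_2$ is ${\rm LTD}^1_{d+1}$, then $t\le\phi_2(t)$, and I reverse the order of the two steps: monotonicity of $F^{-1}_{Y_1}$ with $\phi_1\le\phi_2$ gives $D_1(t)\le F^{-1}_{Y_1}(\phi_2(t))-F^{-1}_{Y_1}(t)$, and the dispersive order on the levels $t\le\phi_2(t)$ bounds this by $F^{-1}_{Y_2}(\phi_2(t))-F^{-1}_{Y_2}(t)=D_2(t)$. Either way $D_1(t)\le D_2(t)$, which is what we need.

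The genuinely delicate points, rather than the chain itself, are the two translations in the second paragraph: rewriting \eqref{eq:l-alpha} as pointwise domination of the distortion functions (and hence the reversed inequality $\phi_1\le\phi_2$ for the inverses), and extracting $\phi_i(t)\ge t$ from ${\rm LTD}^1_{d+1}$ via Lemma \ref{lem:ltd}(i); both demand care with the direction of inequalities when passing between a distribution function and its generalized inverse. Once these are secured, the dispersive-order step must be applied on the correct pair of quantile levels ($t$ and $\phi_i(t)$ with $\phi_i(t)\ge t$), which is precisely why the case distinction on the ${\rm LTD}^1_{d+1}$ copula is needed; the final integration over $[\beta,1]$ is then immediate.
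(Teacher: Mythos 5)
Your proof is correct and follows essentially the same route as the paper: translate \eqref{eq:l-alpha} into pointwise domination of the conditional distribution functions in \eqref{eq:F-V-AU} (hence the reversed inequality $\phi_1\le\phi_2$ for their inverses), use the ${\rm LTD}^1_{d+1}$ hypothesis to obtain $\phi_i(t)\ge t$, and then chain a monotonicity step with a dispersive-order step on the quantile levels $t\le\phi_i(t)$ before integrating over $[\beta,1]$. The only difference is that you argue pointwise and explicitly treat both cases of which copula is ${\rm LTD}^1_{d+1}$, whereas the paper works directly at the integral level and only writes out the case where $C_2$ has the property; your case analysis is a small gain in completeness, not a different method.
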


%
%\section{Sensitivity analysis}
%{\color{blue}We consider stressing the $\bm{X}$-vector in the model with a random shock such that for $Z\in L^\infty$, $\epsilon>0$, and $\mathbf{e_i}$ denote the vector in $\R^d$ with $1$ in the $i$th position and $0$ elsewhere. Then following the framework of \cite{asimit2019efficient} and  \cite{tsanakas2016sensitivity}, the sensitivity of $\vcovar$ to the shock $Z$ applied on $X_i$ can be defined as
%$$\frac{\dif}{\dif \epsilon}\vcovar(Y|\bm{X}+\epsilon Z\bf{e_i})|_{\epsilon = 0}$$
%Can we look at $$\VaR_\beta(Y| \text{at least m i's} X_i> \VaR_{\alpha_i}(X_i) ) - \VaR_\beta(Y| \text{at most n i's} X_i> \VaR_{\alpha_i}(X_i) ),$$
%where $m>n$.
%}

\subsection{Numerical examples} \label{sec:Numerical}
%We shall present some exact marginal distributions to calculate the values of these measures and compare them with the MCoVaR and MCoES.
\par In this subsection, we provide some numerical examples to illustrate our main findings. Given a random variable $X$, it is said that $X$ follows a Pareto distribution with shape parameter $a>0$ and scale parameter $k \in \mathbb{R}$, denoted by $X \sim {\rm Pareto}(a, k)$, if its survival function is given by
\begin{equation} \nonumber
    \bar{F}(x)=\left(\frac{k}{x}\right)^a, \quad \text { for all } x \in(k,+\infty).
\end{equation}
Suppose $Y_1 \sim {\rm Pareto}(a_1, k_1)$ and $Y_2 \sim {\rm Pareto}(a_2, k_2)$. According to Tables 2.1-2.2 in \cite{belzunce2015introduction}, the following statements hold:
\begin{enumerate}[(a)]
    \item $Y_1 \leq_{\rm st} Y_2$ when $a_1 \geq a_2$ and $ k_1 \leq k_2$;
  \item $Y_1 \leq_{\rm icx} Y_2$ when $a_1, a_2 > 1$, $k_1> (=) k_2$ and $\frac{a_1\left(a_2-1\right)}{a_2\left(a_1-1\right)}=(<) \frac{k_2}{k_1}$;
  \item $Y_1 \leq_{\rm disp} Y_2$ when $a_1 \geq a_2$ and $a_1 k_2 \geq a_2 k_1$; 
  \item $Y_1 \leq_{\rm \star} Y_2$ when $a_1 \geq a_2$; and hence $Y_1 \leq_{\rm epw} Y_2$ when $a_1 \geq a_2$.
\end{enumerate}
\par Below we present some examples illustrating Theorems \ref{thm:vcovar-dcdm} - \ref{VCoES_ratio_beta}. 
\begin{example}
Let $d=2$ and $\alpha_1 = \alpha_2 = 0.95$. 
Assume that $(\bm{X}_1, Y)$ admits the Gumbel Copula \eqref{eq:gumbel}
%\begin{align*}
 %C^{\theta}(u_1,u_2, u_3) &= \(\max\(u_1^{-\theta} + u_2^{-\theta} + u_3^{-\theta}-2, 0\)\)^{-1/\theta},\\
%  C^{\theta}(u_1,u_2,u_3)&=\exp\(-\((-\log(u_1))^\theta + (-\log(u_2))^\theta +(-\log(u_3))^\theta \)^{1/\theta}\),
%\end{align*}
with $\theta = 2$ and $(\bm{X}_2, Y)$ admits the Gumbel Copula with $\theta = 3$. It can be verified by Lemma \ref{lem:pds} that both of the two copulas are PDS. In this paper, we introduce risk measures that are not contingent upon the distribution of variables in the conditions, thus enabling us to concentrate exclusively on setting the distributions of $Y_1$ and $Y_2$ without the need to posit any assumptions about the distributions of $\bm{X}_1$ and $\bm{X}_2$.
\begin{enumerate}[(i)]
    \item Set $Y_1 \sim {\rm Pareto}(20, 16)$ and $Y_2 \sim {\rm Pareto}(16, 20)$. Thus, it holds that $Y_1 \leq_{\rm st} Y_2$. Figure \ref{fig:sub1} shows that the $\vcovar_{\bm{\alpha}, \beta}(Y_1|\bm{X}_1) \leq \vcovar_{\bm{\alpha}, \beta}(Y_2|\bm{X}_2)$ for all $\beta \in (0,1)$, which validates the results of Theorem \ref{thm:vcovar-dcdm}. In addition, it is readily seen that the VCoVaR measures are greater than the respective unconditional prototypes. This fact illustrates the result of Corollary \ref{cor:VCoVaR-greater}.
    \item Set $Y_1 \sim {\rm Pareto}(4, 5)$ and $Y_2 \sim {\rm Pareto}(3, 4)$. It holds that $Y_1 \leq_{\rm disp} Y_2$ but $Y_1 \nleq_{\rm st} Y_2$. As illustrated in Figure \ref{fig:sub2}, ${\rm \Delta VCoVaR}_{\bm{\alpha}, \beta}(Y_1|\bm{X}_1) \leq {\rm \Delta VCoVaR}_{\bm{\alpha}, \beta}(Y_2|\bm{X}_2)$. Hence, the effectiveness of (i) of Theorem \ref{thm:d-drvcovar-dcdm} is validated.
    \item Set $Y_1 \sim {\rm Pareto}(4, 3)$ and $Y_2 \sim {\rm Pareto}(3, 2)$. It holds that $Y_1 \leq_{\star} Y_2$ but $Y_1 \nleq_{\rm st} Y_2$ nor $Y_1 \nleq_{\rm disp} Y_2$. As demonstrated in Figure \ref{fig:sub3}, ${\rm \Delta^R VCoVaR}_{\bm{\alpha}, \beta} (Y_1|\bm{X}_1) \leq {\rm \Delta^R VCoVaR}_{\bm{\alpha}, \beta} (Y_2|\bm{X}_2)$ is confirmed, thus affirming the potency of (ii) of Theorem \ref{thm:d-drvcovar-dcdm}.
    \item Set $Y_1 \sim {\rm Pareto}(9, 20)$ and $Y_2 \sim {\rm Pareto}(5, 18)$. It holds that $Y_1 \leq_{\rm icx} Y_2$ but $Y_1 \nleq_{\rm st} Y_2$ nor $Y_1 \nleq_{\rm disp} Y_2$. With $\vcoes_{\bm{\alpha}, \beta}(Y_1|\bm{X}_1) \leq \vcoes_{\bm{\alpha}, \beta}(Y_2|\bm{X}_2)$ established through Figure \ref{fig:sub4}, this validates the operational strength of (i) of Theorem \ref{thm:dr-vcoes-dcdm}. In addition, it is readily seen that the {\rm VCoES} measures are greater than the {\rm ES}. This fact illustrates the result of Corollary \ref{cor:VCoVaR-greater} as well.
    \item Set $Y_1 \sim {\rm Pareto}(4, 5)$ and $Y_2 \sim {\rm Pareto}(3, 4)$. It holds that $Y_1 \leq_{\rm disp} Y_2$ but $Y_1 \nleq_{\rm st} Y_2$. The validation of Theorem \ref{thm:dvcoes-dcdm}'s effectiveness is supported by the establishment of ${\rm \Delta VCoES}_{\bm{\alpha}, \beta} (Y_1|\bm{X}_1) \leq {\rm \Delta VCoES}_{\bm{\alpha}, \beta} (Y_2|\bm{X}_2)$ as shown in Figure \ref{fig:sub5}.
    \item Set $Y_1 \sim {\rm Pareto}(4, 5)$ and $Y_2 \sim {\rm Pareto}(3, 4)$. It holds that $Y_1 \leq_{\rm eps} Y_2$ but $Y_1 \nleq_{\rm st} Y_2$ nor $Y_1 \nleq_{\rm disp} Y_2$. The plot in Figure \ref{fig:sub6} supports that ${\rm \Delta^R VCoES}_{\bm{\alpha}, \beta} (Y_1|\bm{X}_1) \leq {\rm \Delta^R VCoES}_{\bm{\alpha}, \beta} (Y_2|\bm{X}_2)$, which validates the practicality of (ii) of Theorem \ref{thm:dr-vcoes-dcdm}.
\end{enumerate}
\end{example}

\begin{figure}[ht]
    \centering
    \begin{subfigure}[b]{0.3\textwidth}
        \includegraphics[width=\textwidth]{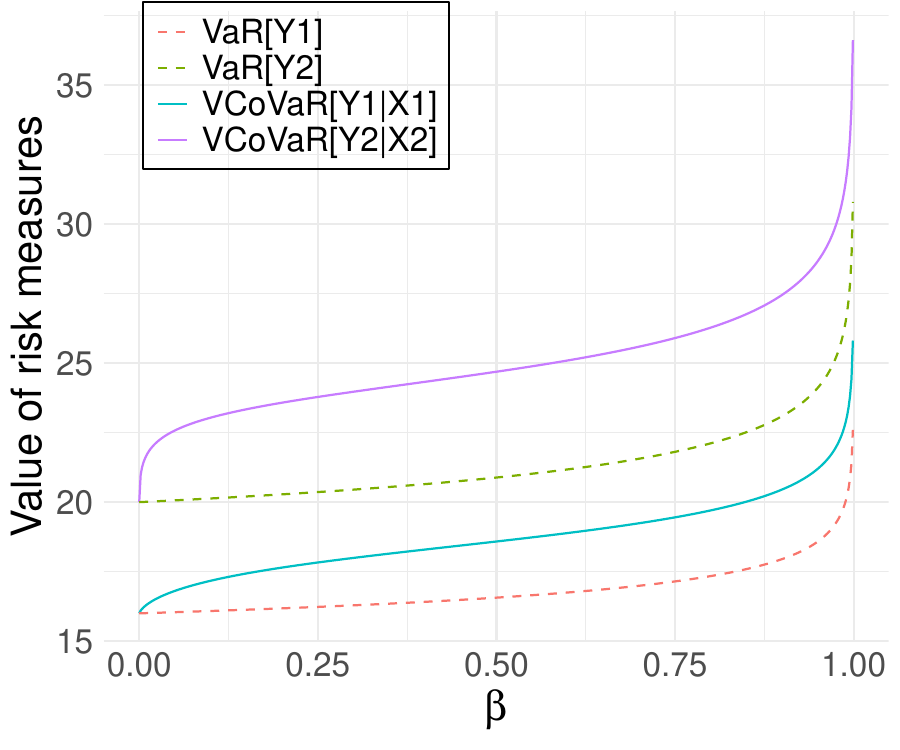}
        \caption{}
        \label{fig:sub1}
    \end{subfigure}
    \hfill % 用于添加适当的间距
    \begin{subfigure}[b]{0.3\textwidth}
        \includegraphics[width=\textwidth]{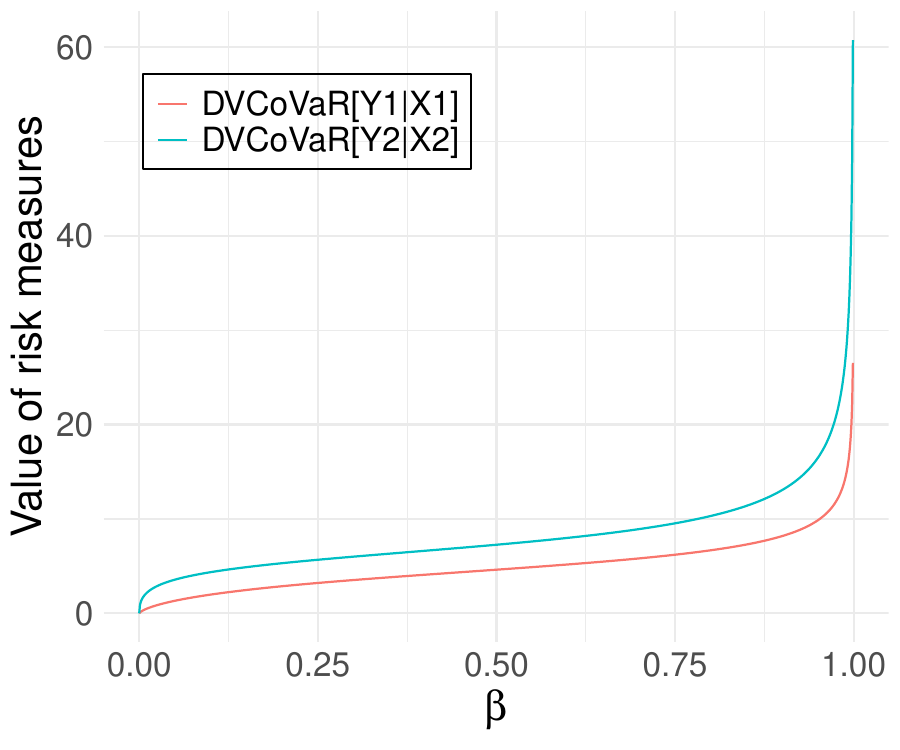}
        \caption{}
        \label{fig:sub2}
    \end{subfigure}
    \hfill % 用于添加适当的间距
    \begin{subfigure}[b]{0.3\textwidth}
        \includegraphics[width=\textwidth]{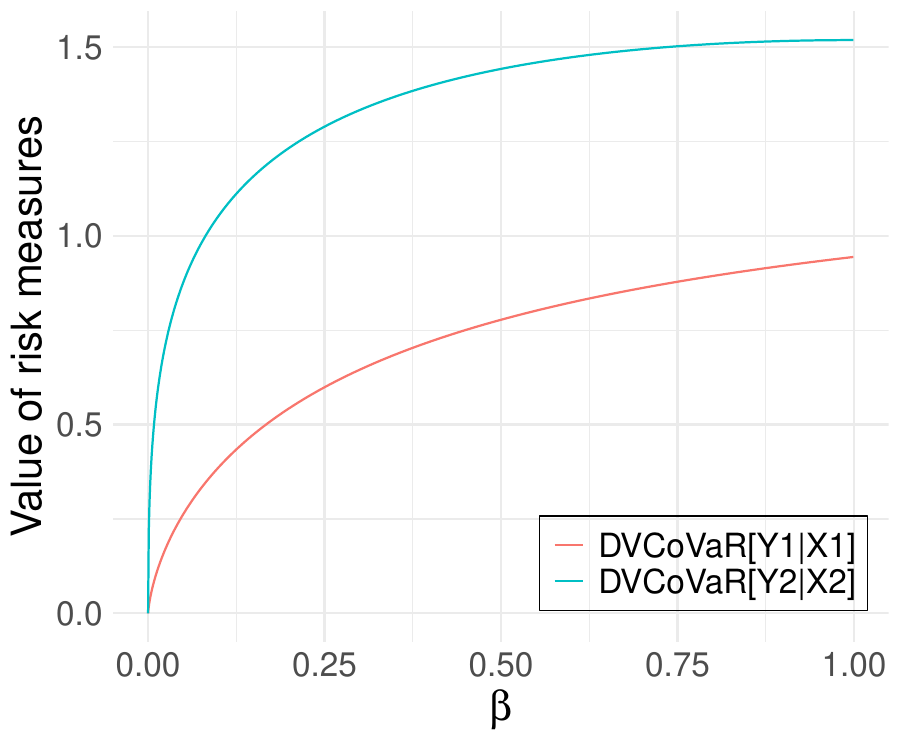}
        \caption{}
        \label{fig:sub3}
    \end{subfigure}
    \caption{Comparisons of $\vcovar_{\bm{\alpha}, \beta}(Y|\bm{X})$, ${\Delta VCoVaR}_{\bm{\alpha}, \beta}(Y|\bm{X})$ as functions of $\beta \in (0,1)$.}
    \label{fig:VCoVaR-comp}
\end{figure}

\begin{figure}[ht]
    \centering
    \begin{subfigure}[b]{0.3\textwidth}
        \includegraphics[width=\textwidth]{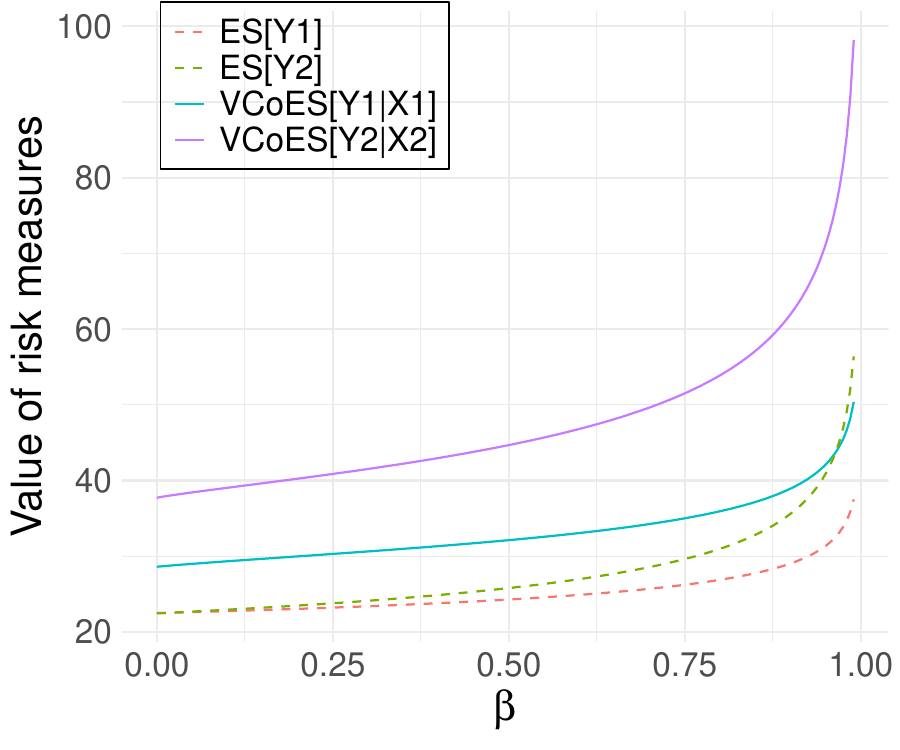}
        \caption{}
        \label{fig:sub4}
    \end{subfigure}
    \hfill % 用于添加适当的间距
    \begin{subfigure}[b]{0.3\textwidth}
        \includegraphics[width=\textwidth]{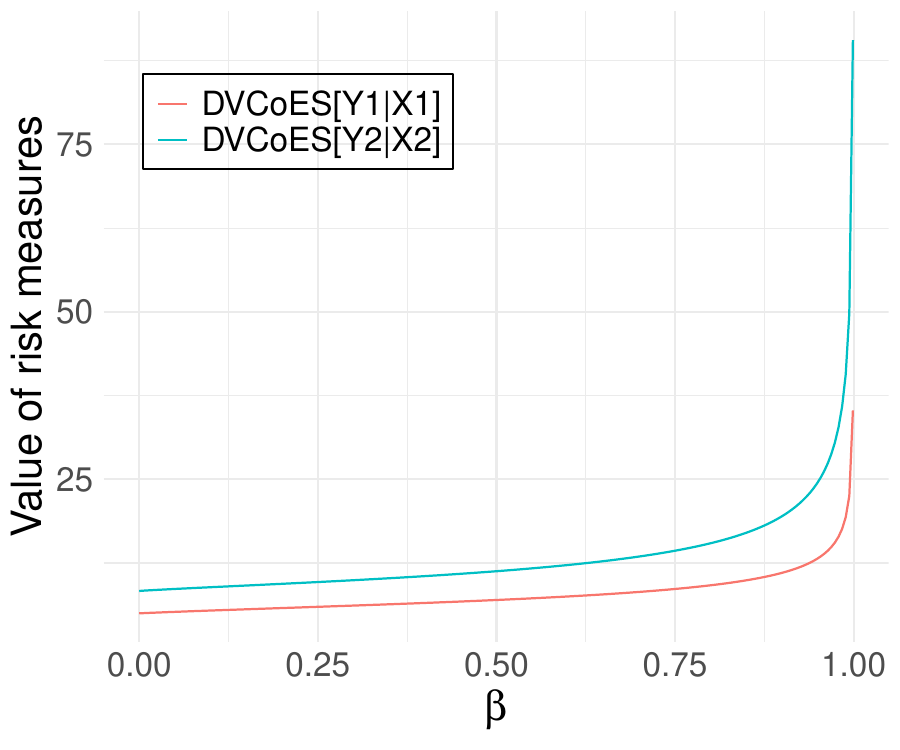}
        \caption{}
        \label{fig:sub5}
    \end{subfigure}
    \hfill % 用于添加适当的间距
    \begin{subfigure}[b]{0.3\textwidth}
        \includegraphics[width=\textwidth]{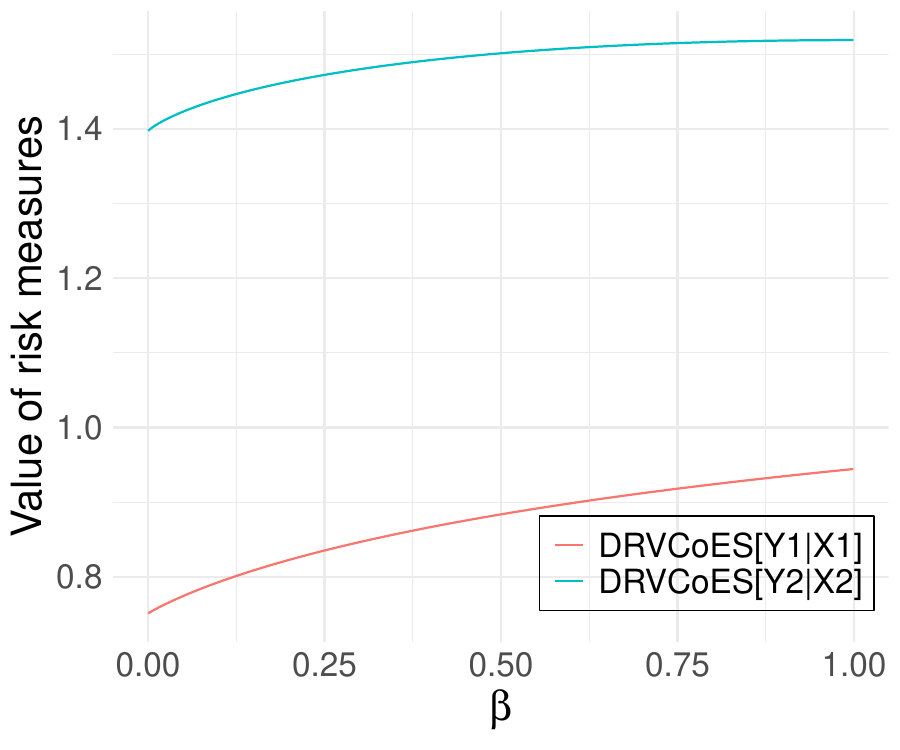}
        \caption{}
        \label{fig:sub6}
    \end{subfigure}
    \caption{Comparisons of $\vcovar_{\bm{\alpha}, \beta}(Y|\bm{X})$, ${\rm \Delta VCoVaR}_{\bm{\alpha}, \beta}(Y|\bm{X})$ as functions of $\beta \in (0,1)$.}
    \label{fig:VCoES-comp}
\end{figure}

\section{Backtesting $\vcoes$ and $\mcoes$} \label{sec:backtesting}
We introduce the backtesting procedures for $\vcoes$ and $\mcoes$ in this section. 
Below we list some notations and assumptions. Let $(\bm{X}_t, Y_t) = (X_{1,t},...,X_{d,t},Y_t)$ denote the set of risks at time $t$. 
%and $\Omega_{t-1}$ the information set available at time $t-1$. We assume $(X_{1,t},...,X_{d,t},Y_t)$ is a strictly stationary and ergodic process. Let $\Theta$ denote a compact subspace of $R^k$. The conditional distribution of $Y_t$ given $\Omega_{t-1}$ is $F_{Y}(\cdot, \Omega_{t-1},\theta_0)$, where  $\theta_0\in \Theta$ denote the underlying unknown parameter. We assume the estimator $\hat{\theta}_T$ is consistent for $\theta_0$ and 
Denote the set $A_{\bm{X}_t}$ by $A_{\bm{X}_t} = \{\mbox{$\exists$ $i$: $X_{i,t}>\mbox{VaR}_{\alpha_i}(X_{i,t})$}\}$. We use $\vcovar_{\bm{\alpha}, \beta, t}$ to denote the VCoVaR estimate by using the predictive model. 

Consider the violation-based test for VCoVaR and define the violation indicator variable by $\id_{t} = \id_{\{Y_{t}>\vcovar_{\bm{\alpha}, \beta, t}|A_{\bm{X}_{t}} \}}$, which is consistent with the backtesting procedures used in \cite{waltz2022vulnerability}. Following the multinomial backtest dicussed in \cite{kratz2018multinomial}, we backtest $\vcoes_{\bm{\alpha},\beta}$ by simultaneously backtesting $\vcovar_{\bm{\alpha}, \beta_1},...,\vcovar_{\bm{\alpha}, \beta_m}$, where $\beta_j = \beta + \frac{j-1}{m}(1-\beta),~j=1,...,m$.  We define the sequence of violation indicator variables: $\id_{t,j} = \id_{\{Y_{t}>\vcovar_{\bm{\alpha}, \beta_j, t}|A_{\bm{X}_{t}} \}},$ where $j=1,...,m$. We also set $\beta_0 = 0$ and $\beta_{m+1}=1$.

If the underlying predictive model is correct, then by \cite{christoffersen1998evaluating}, for fixed $j$, it holds that,
\begin{enumerate}[(i)]
    \item the unconditional coverage hypothesis, $\EE[\id_{t,j}] = 1-\beta_j$ for all $t$;
    \item the independence hypothesis, $\id_{t,j}$ is independent of $\id_{s,j}$ for $s\neq t$.
\end{enumerate}
We can proceed to define $$Z_t = \sum_{j=1}^m \id_{t,j},$$ where $Z_t$ counts the number of violations at time $t$. The sequence $Z_t$ satisfy the following two conditions:
\begin{enumerate}[(i)]
    \item the unconditional coverage hypothesis, $\p(Z_t\le i) = \beta_{i+1},~i=0,...,m$ for all $t$;
    \item the independence hypothesis, $Z_t$ is independent of $X_s$ for $s\neq t$.
\end{enumerate}

 Let $N$ denote the sample size, which is the set of sample satisfying the condition $A_{\bm{X}_t}$. We use $O_j$ to count the number of violations: $O_j = \sum_{t=1}^n \id_{\{Z_t =j\}},~j=0,1,...,m$. Then under the given conditions,   $(O_0,...,O_m)$ follows the multinomial distribution $\rm{MN}(N, \beta_1 -\beta_0,...,\beta_{m+1}-\beta_m)$. Consider the model $(O_0,...,O_m)$ follows the multinomial distribution $\rm{MN}(N, \theta_1 -\theta_0,...,\theta_{m+1}-\theta_m)$, where $0 = \theta_0 <\theta_1<...<\theta_m <\theta_{m+1}=1$. 
 We test the null and alternative hypotheses given by
 \begin{align*}
\left\{
\begin{array}{l}
H_0: \theta_j = \beta_j \textrm{ for } j=1,...,m \\
H_1: \theta_j \neq \beta_j \textrm{ for at least one } j\in \{1,...,m\}.  \\
\end{array}
\right.
 \end{align*}
 
 We carry out the Nass test (\cite{nass1959chi}) used in \cite{kratz2018multinomial} for its relatively better performance: 
 Let $S_m = \sum_{j=0}^m \frac{(O_{j+1} - N(\beta_{j+1}-\beta_j))^2}{N(\beta_{j+1} - \beta_j)}$, $c = \frac{\EE[S_m]}{\mathrm{var}[S_m]}$, $\nu = c\EE[S_m]$, where $\EE[S_m] = m$ and $\mathrm{var}[S_m] = 2m - \frac{m^2+4m+1}{N} + \frac{1}{N}\sum_{j=0}^m \frac{1}{\beta_{j+1}-\beta_j}.$ Then $cS_m$ follows $\chi^2_{\nu}$ under $H_0: \theta_j=\beta_j$ for $j=1,...,m$.

\section{A real application} \label{sec:application}
%Refer to the application in \cite{ortega2021stochastic} and \cite{waltz2022vulnerability}. It is better to find a static application scenario, which will be easier to deal with.

This section compares the performances of VCoVaR, VCoES, and the related contribution measures based on the historical daily closing price in USD spanned from 09/01/2015 to 02/06/2024 of the CC data.\footnote{The source of the price data is CoinMetrics.} We use the daily log-loss data (negative log-return). The sample size is $3082$ covering five CCs: BTC, ETH, LTC, XMR, and XRP, with the aggregate market capitalization around $68\%$ (02/07/2024).  

We follow the similar estimation procedures with the empirical study in \cite{waltz2022vulnerability} in selecting the univariate models and the copula models. Given the presence of a time-varying dependence structure in the distressed scenario involving the five CCs, we employ the dynamic DCC-copula model as utilized in \cite{waltz2022vulnerability}.  We fix $\alpha = \beta = 0.95$ and estimate the VCoVaR, VCoES, MCoVaR, MCoES of one CC with the conditional variables  modelling the other four CCs.

We have the following findings:
\begin{enumerate}[(i)]
    \item As shown in Figures \ref{LTC-fig} and \ref{BTC-fig}, both VCoVaR and VCoES exhibit similar patterns over time, with notable peaks during key market events.  In general, VCoES is a  more conservative systemic risk measure, reacting more sharply to market stress. This increased sensitivity is particularly evident during periods of heightened volatility, where VCoES displays a sharper rise and fall, indicating a greater responsiveness to extreme scenarios than VCoVaR. This is especially evident as indicated in  the BTC plot where the highest systemic risk is observed in 2020 during the COVID-19 outbreak.
    \item Compared to VCoVaR and VCoES in Figure \ref{xmr-fig}, $\Delta \vcovar$ and $\Delta \vcoes$ in Figure \ref{delta_xmr} are more effective in assessing the spillover effect under stressed conditions. The spillover effect is most severe during the Covid event in 2020. We observe that $\Delta\vcoes$ is more sensitive to extreme market scenarios, as it generally (though not always) dominates $\Delta\vcovar$.
    \item Unlike the other four decentralized cryptocurrencies (CCs), XRP is significantly influenced by Ripple Labs, the company that created and manages it. Figure \ref{XRP-ratio-fig} demonstrates that, generally (though not always), $\Delta^R\vcovar$ tends to exceed $\Delta^R\vcoes$. In comparison to the other four CCs, Figure \ref{XRP-delta-fig} shows that XRP exhibits the highest maximum values for both $\Delta\vcovar$ and $\Delta\vcoes$ over the period from 2016 to 2024, reaching approximately 0.7. One of the most significant events for XRP occurred in 2017, when its price surged from around \$0.006 at the beginning of the year to an all-time high of approximately \$3.84 by January 2018. Although the XRP Ledger is technically decentralized, Ripple Labs retains control over a substantial portion of the total XRP supply and plays a pivotal role in its development. This centralization introduces heightened risk, particularly when other cryptocurrencies are under distress. In such scenarios, liquidity across the market could be constrained. Given Ripple Labs' large XRP holdings, any action taken by the company or major stakeholders to manage their positions in response to market stress could trigger significant price volatility.
    \item The measure $\Delta^R\vcovar$ is useful for comparing the systemic risk positions of different assets. Figure \ref{BTC-fig} shows that the $\Delta^R\vcovar$ of BTC consistently dominates the other four cryptocurrencies (CCs), reflecting its market dominance, except at the beginning of 2018. During this period, the $\Delta^R\vcovar$ of BTC briefly falls below that of the other cryptocurrencies, coinciding with the aftermath of 2017 when Bitcoin’s price peaked and then dropped sharply. At this time, the other cryptocurrencies may have experienced relatively greater risk increments as they reacted to Bitcoin’s volatility and broader market conditions. The overlapping plots of the other four CCs from 2016 to 2018 are due to the market’s relative immaturity, which led to higher correlation and similar risk levels among these assets when the market was under distress. The period from late 2017 to early 2018 is widely regarded as a turning point for the overall cryptocurrency market. This phase was marked by the collapse of Bitcoin’s price, the bursting of the initial coin offering (ICO) bubble, and heightened regulatory scrutiny (\cite{wikipedia_crypto_bubble} and \cite{wired_future_bitcoin}).
 
 \end{enumerate}

\begin{figure}[htbp]
    \centering
    \includegraphics[width=\textwidth]{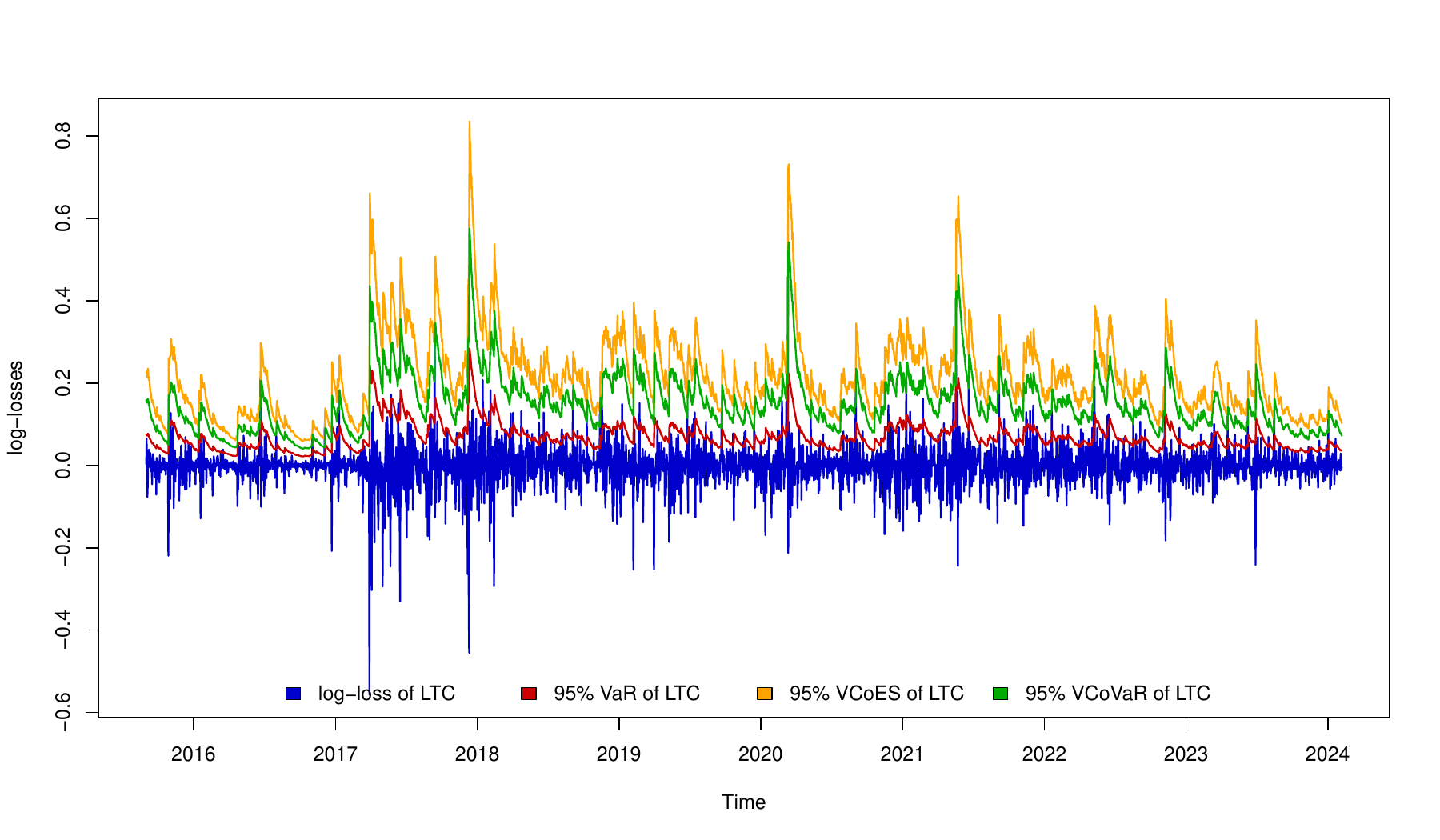}\caption{Comparisons of VaR, VCoVaR, and VCoES of LTC   under stressed condition using the DCC-copula}   
    \label{LTC-fig}
 \end{figure}

  \begin{figure}[htbp]
    \centering
    \includegraphics[width=\textwidth]{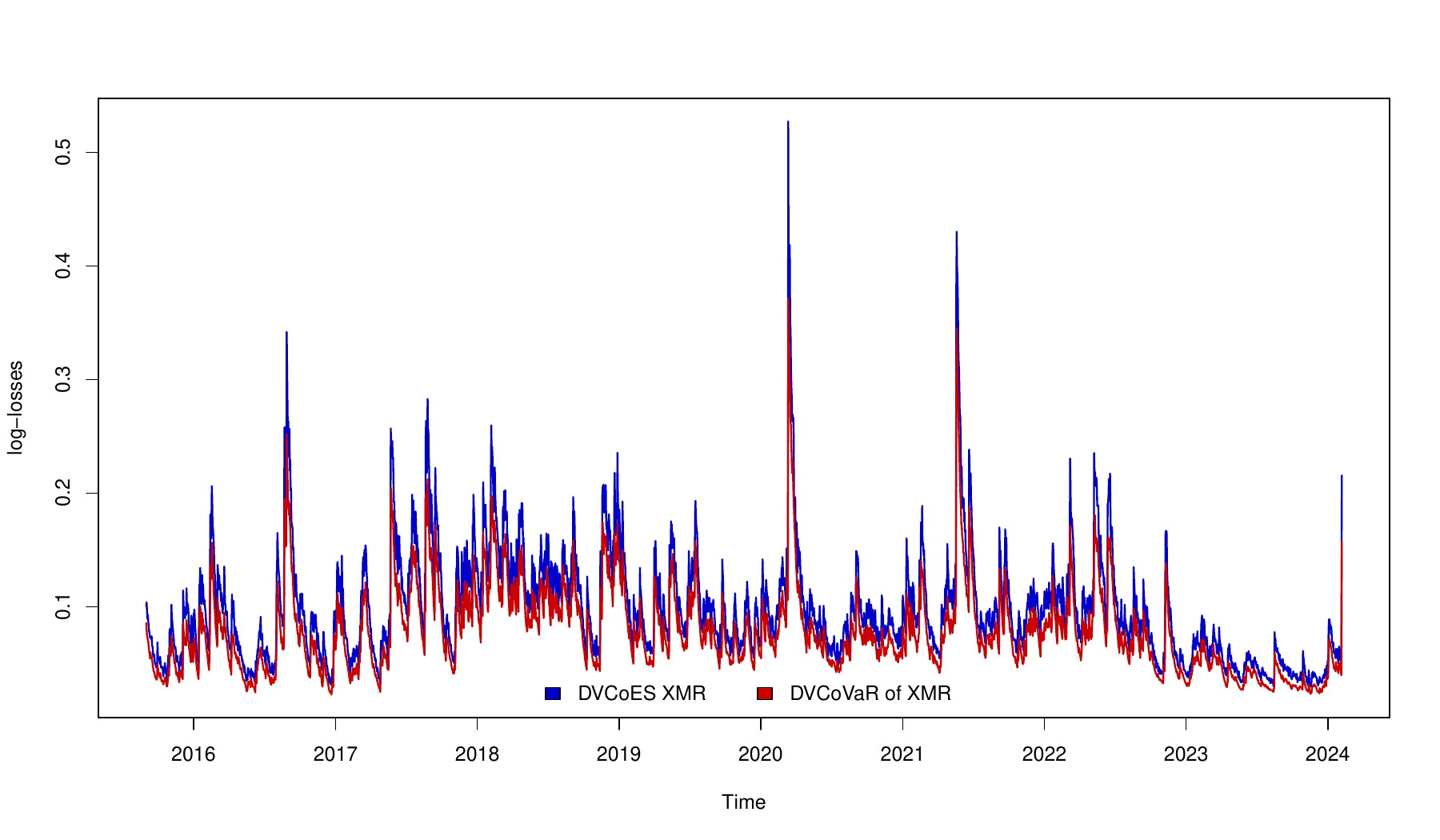}\caption{Comparison of $\Delta\vcovar$ and $\Delta\vcoes$ of  XMR under stressed condition using the DCC-copula} 
    \label{delta_xmr}\end{figure}

   \begin{figure}[htbp]
    \centering
    \includegraphics[width=\textwidth]{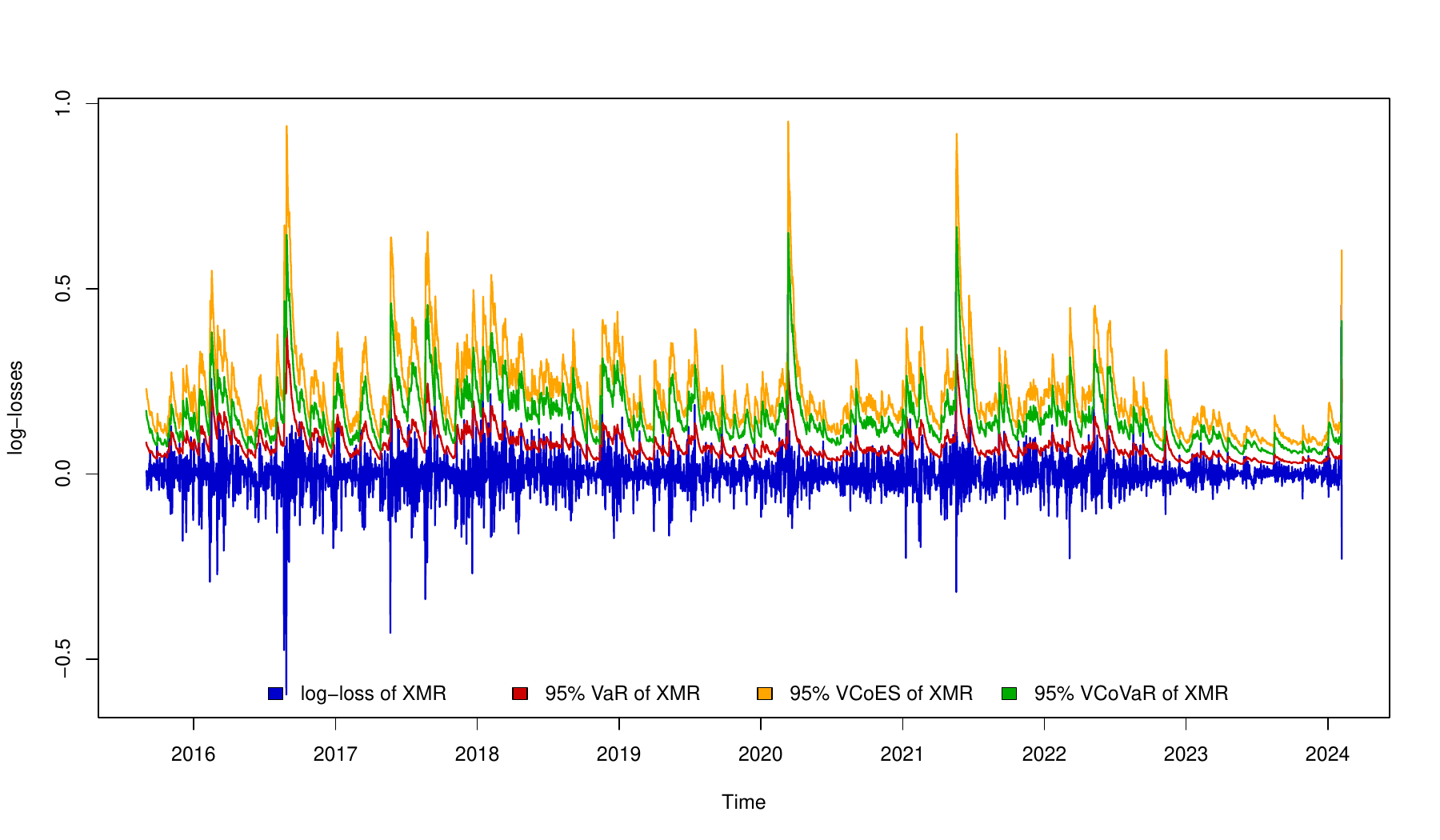}\caption{Comparisons of VaR, VCoVaR, and VCoES of XMR  under stressed condition using the DCC-copula}  \label{xmr-fig}  \end{figure}

   \begin{figure}[htbp]
    \centering
    \includegraphics[width=\textwidth]{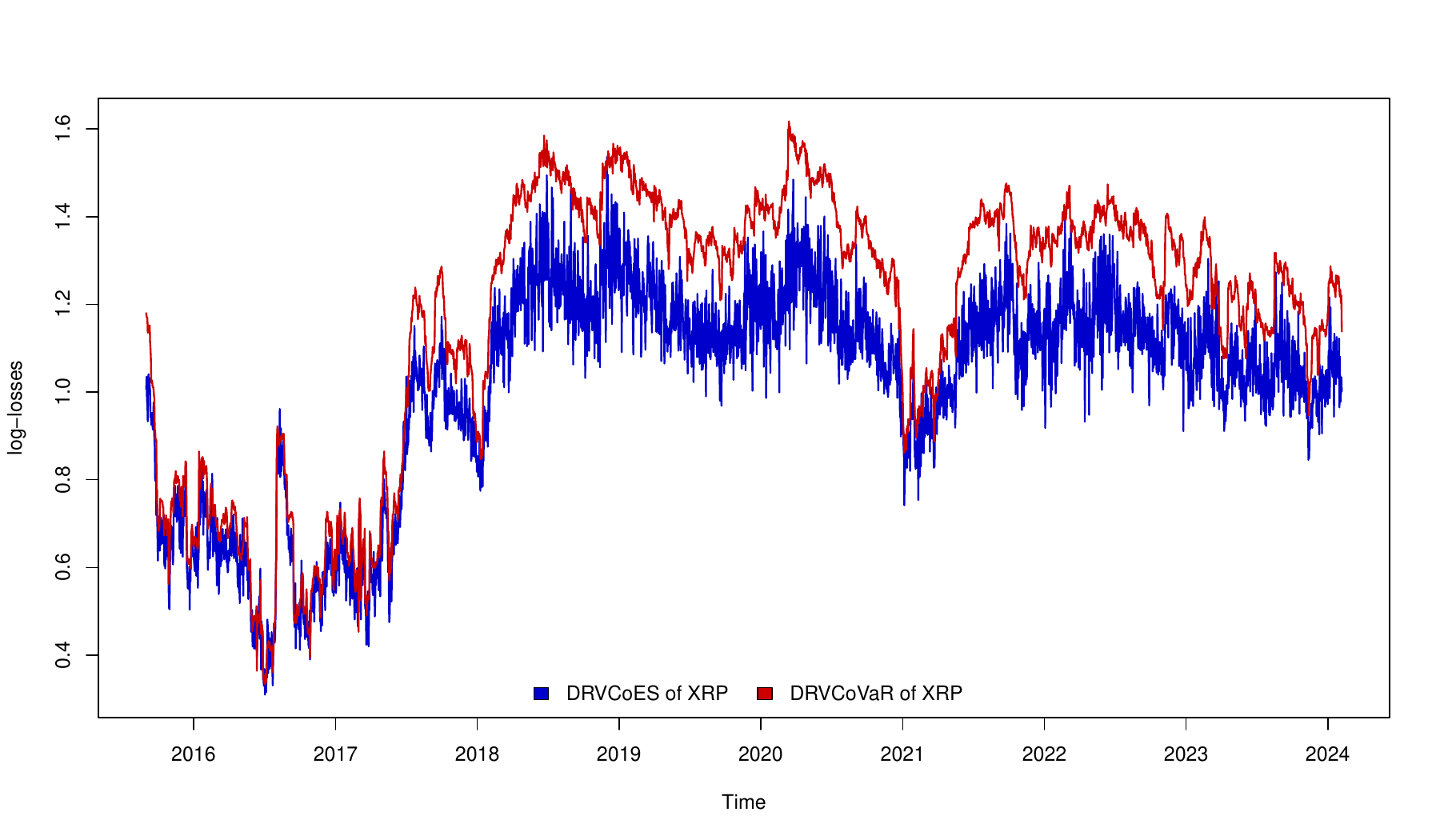}\caption{Comparisons of $\Delta^R\vcovar$ and $\Delta^R\vcoes$ of  XRP,  under stressed condition using the DCC-copula}   
        \label{XMR-ratio-fig}
 \end{figure}

 \begin{figure}[htbp]
    \centering
    \includegraphics[width=\textwidth]{Graph/VSM_D_compare_XRP.pdf}\caption{Comparisons of $\Delta^R\vcovar$ and $\Delta^R\vcoes$ of  XRP,  under stressed condition using the DCC-copula}   
        \label{XRP-ratio-fig}
 \end{figure}

  \begin{figure}[htbp]
    \centering
    \includegraphics[width=\textwidth]{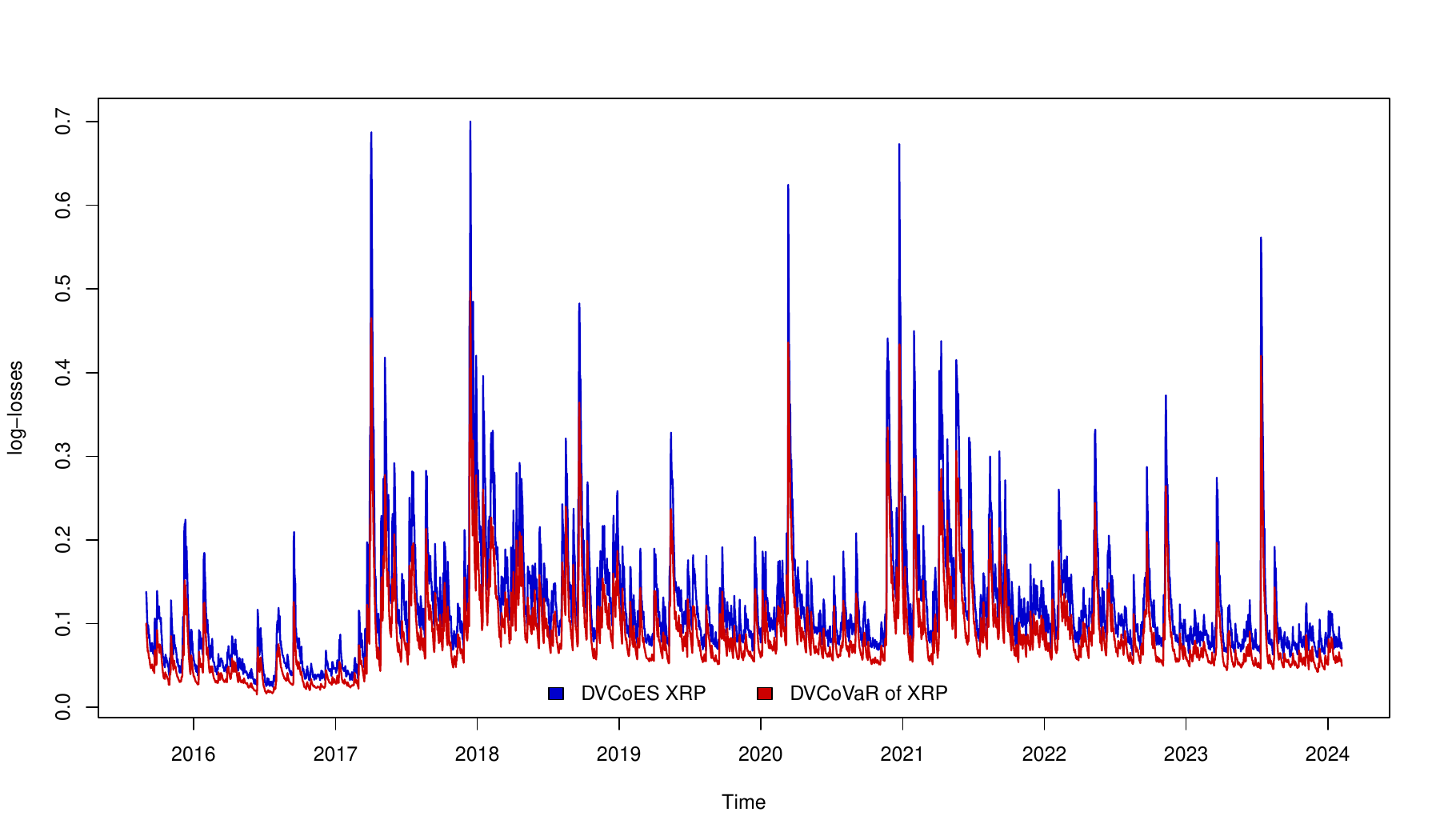}\caption{Comparisons of $\Delta\vcovar$ and $\Delta\vcoes$ of  XRP,  under stressed condition using the DCC-copula}   \label{XRP-delta-fig}
 \end{figure}

     \begin{figure}[htbp]
    \centering
    \includegraphics[width=\textwidth]{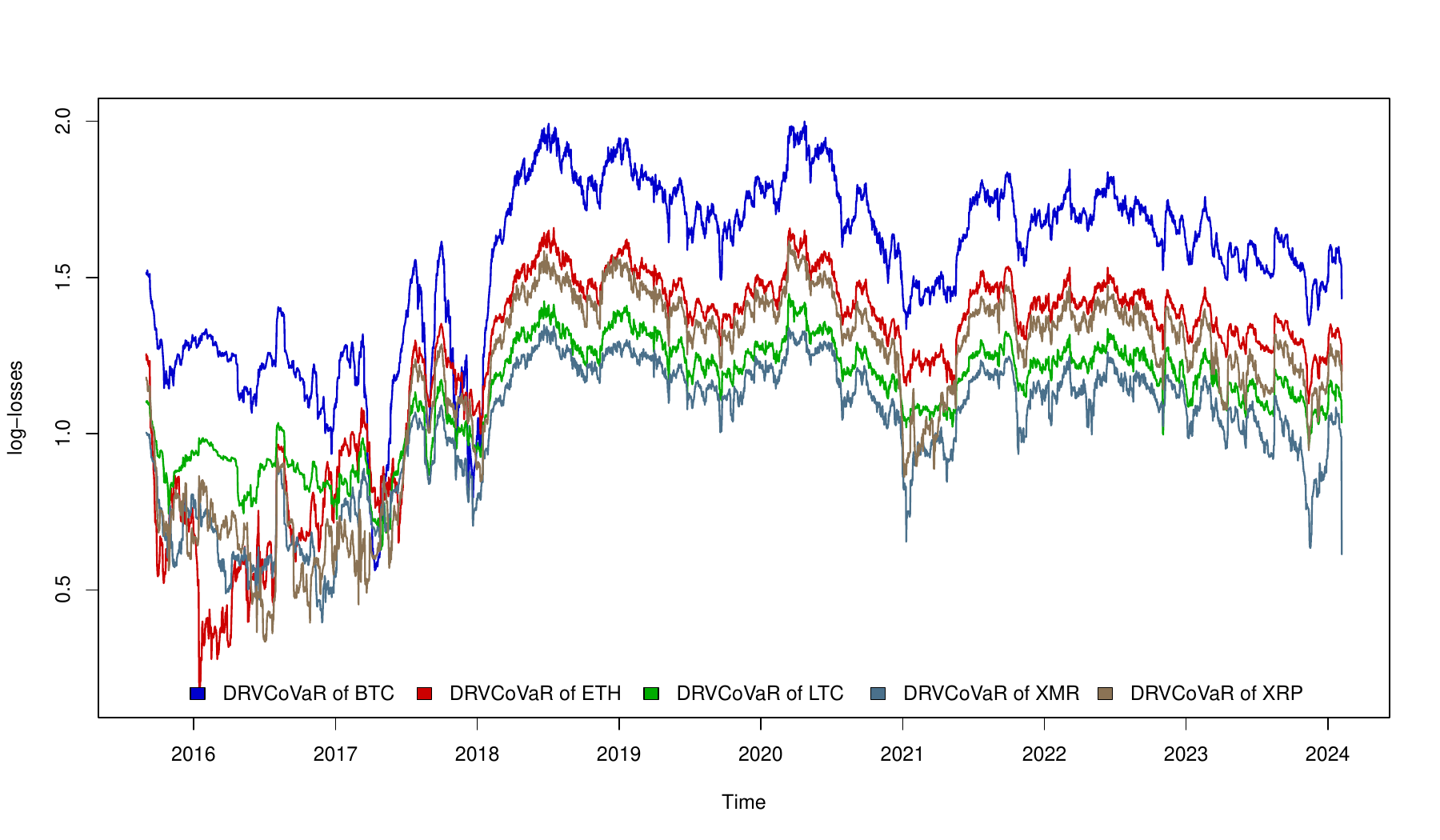}\caption{Comparisons of $\Delta^R\vcovar$ of  BTC, ETH, LTC, XMR, and XRP under stressed condition using the DCC-copula}   
        \label{BTC-fig}
 \end{figure}

To test the validity of our estimates, we carry out the backtesting procedures as described in section \ref{sec:backtesting} and choose $m=4$ to conduct the multinomial backtest of VCoES. The table below contains the violation rates of VCoVaR of each CC under the distressed scenario. All rates are close to $0.05$.
\begin{table}[h!]
    \centering
    \caption{Violation rates of VCoVaR under DCC copula}
    \begin{tabular}{c|c|c|c|c}
        \hline
        BTC & ETH & LTC & XMR & XRP \\ \hline
        0.0374 & 0.0602 & 0.0712 & 0.0557 & 0.0593 \\ \hline
    \end{tabular}
\end{table}
\\
The multinomial backtest result of VCoES of each CC under the distressed scenario is presented in the table below. The columns $O_0,...,O_m$ contain the observed numbers and $p_M$ gives the p-values of a multinomial Nass test. The p-values indicate our model is acceptable. 
\begin{table}[h!]
    \centering
    \caption{Multinomial backtest results of VCoES under DCC copula}
    \begin{tabular}{c|c|c|c|c|c|c|c}
        \hline
        & $N$ & $O_o$ & $O_1$ & $O_2$ & $O_3$ & $O_4$ & $p_M$ \\ \hline
       BTC & 348 & 335 & 7&3&3 & 0 & 0.5497 \\
       ETH & 349 & 327 & 11 & 8 & 3 & 0 & 0.1467 \\
       LTC & 365 & 339 & 6 & 12 & 8 & 0 & 0.1659 \\
       XMR & 341 & 322 & 5 & 6 & 8 & 0 & 0.8928 \\
       XRP & 354 & 332 & 6 & 10 & 6 & 0 & 0.5667 \\ \hline
    \end{tabular}
\end{table}

%? appendix

\section*{Acknowledgements}
The authors thank Yang Lu (Concordia)   for helpful discussions and valuable comments. Yunran Wei thanks the Department of Mathematics at SUSTech for its hospitality during her visit in August 2023. Yunran Wei  acknowledges financial support from the Natural Sciences and Engineering Research
Council of Canada (RGPIN-2023-04674, DGECR-2023-00454), the Research Development Grants, and the start-up fund at Carleton University.

%%%%%%%%%%%%%%%%%%%%%%%%%%%%%%%%%%%%%%%%%%%%%%%%%%%%%%%%%%%%%%%%%%%%%%%%%%%%%%%%%%%%%%%%%%%%%%%%%%%%%%%%%%%%%%%%%%%%%%%%%%%%%%%%%%%%%%%%%%%%%%%%%%%%%%%%%%%%%%%%%%%%%%%%%%%%%%%%%%%%%%%%%%%%%%%%%%%%%%%%%%%%%%%%%%%%%%%%%%%%%%%%%%%%%%%%%%%%%%%%%%%%%%%%%%%%%%%%%%%%%%%%%%%%%%%%%%%%%%%%%%%%%%%%%%%%%%%%%%%%%%%%%%%%%%%%%%%%%%%%%%%%%

%%%%%%%%%%%%%%%%%%%%%%%%%%%%%%%%%%%%%%%%%%%%%%%%%%%%%%%%%%%%%%%%%%%%%%%%%%%%%%%%%%%%%%%%%%%%%%%%%%%%%%%%%%%%%%%%%%%%%%%%%%%%%%%%%%%%%%%%%%%%%%%%%%%%%%%%%%%%%%%%%%%%%%%%%%%%%%%%%%%%%%%%%%%%%%%%%%%%%%%%%%%%%%%%%%%%%%%%%%%%%%%%%%%%%%%%%%%%%%%%%%%%%%%%%%%%%%%%%%%%%%%%%%%%%%%%%%%%%%%%%%%%%%%%%%%%%%%%%%%%%%%%%%%%%%%%%%%%%%%%%%%%%

\appendix
\section{Supplementary Definitions and Results}
\subsection{MCoVaR and MCoES}
Similar to the copula function defined in \eqref{eq:copula-def}, there exists a unique $d$-dimensional copula $\hat{C}$ such that for all
$(x_1, x_2, \cdots, x_d) \in \mathbb{R}^d$ it holds that 
\begin{equation} \nonumber
    \overline{F}\left(x_1, \ldots, x_d\right)=\hat{C}\left(1-F_1\left(x_1\right), \ldots, 1-F_d\left(x_d\right)\right), \forall (x_1, x_2, \cdots, x_d) \in \mathbb{R}^d.
\end{equation}
the function $\hat{C}$ is called the survival copula of a random vector $(X_1, X_2, \cdots, X_d)$(cf. \cite{Nelsen2007}, \cite{mai2017simulating}). Knowing the copula of a random vector allows us to compute its survival copula. For $u_1, u_2, \cdots, u_d \in (0,1)$, it follows that
\begin{equation} \nonumber
    \hat{C}(u_1, \cdots, u_d)=1+\sum_{k=1}^d(-1)^k \sum_{1 \leq j_1<\ldots<j_k \leq d} C_{j_1, \ldots, j_k}\left(1-u_{j_1}, \ldots, 1-u_{j_k}\right).
\end{equation}
The multivariate conditional value-at-risk (written as MCoVaR) of risk $Y$ given the stress event that all of $\bm{X}=(X_1,\ldots,X_d)$ exceed their stress levels is defined as follows:
\begin{definition}
Given the stress events that $\bm{X}=(X_1,\ldots,X_d)$ exceed their stress levels regulated by VaR's with confidence levels $\alpha_1,\ldots,\alpha_d$, the multivariate conditional value-at-risk (in short MCoVaR): $\mathcal{X} \rightarrow \R$ is given by
\begin{equation}\label{MCoVaR-defi}
{\rm MCoVaR}_{\bm{\alpha},\beta}(Y|\bm{X})={\rm VaR}_{\beta}(Y|\mbox{$\forall$ $i$: $X_i>{\rm VaR}_{\alpha_i}(X_i)$}),
\end{equation}
where $\alpha_i\in [0,1)$ and $\beta\in(0,1)$.
\end{definition}

Define 
$$A^{>}_{\bm{X}}=\{\mbox{$\forall$ $i$: $X_i>\mbox{VaR}_{\alpha_i}(X_i)$}\}\quad \mbox{and}\quad A^{>}_{\bm{U}}=\{\mbox{$\forall$ $i$: $U_i>\alpha_i$}\}.$$
Then the distribution function of $V|A^>_{\bm{U}}$ is given by 
\begin{equation} \label{eq:F-V-AUg}
    F_{V|A^>_{\bm{U}}}(v)=1-\frac{\hat{C}(1-\alpha_1,1-\alpha_2, \cdots,1-\alpha_d,1-v)}{\hat{C}(1-\alpha_1,1-\alpha_2, \cdots,1-\alpha_d,1)},\quad v\in[0,1].
\end{equation}
Further, (\ref{eq:F-V-AUg}) is also a  distortion function. The distribution function of $Y|A^>_{\bm{X}}$ is 
\begin{equation} \label{eq:F-Y-AXg}
    F_{Y|A^>_{\bm{X}}}(y)=F_{V|A^>_{\bm{U}}}(F_{Y}(y))=1-\hat{F}_{V|A^>_{\bm{U}}}(\overline{F}_{Y}(y)),\quad v\in[0,1],
\end{equation}
where $\hat{F}_{V|A^>_{\bm{U}}}$ is the dual of ${F}_{V|A^>_{\bm{U}}}$. Moreover, 
\begin{equation} \label{eq:mcovar-representation}
        {\rm MCoVaR}_{\bm{\alpha},\beta}(Y|\bm{X})=F^{-1}_{Y}\left(F^{-1}_{V|A^>_{\bm{U}}}(\beta)\right).
    \end{equation}
\begin{definition}\label{def:MCOES} The multivariate conditional expected shortfall (written as MCoES) of risk $Y$ given some stress event induced by $\bm{X}=(X_1,\ldots,X_d)$ is defined as follows:
    \begin{equation}\label{MCoES-defi}
        {\rm MCoES}_{\bm{\alpha},\beta}(Y|\bm{X})=\frac{1}{1-\beta}\int_{
        \beta}^1{\rm MCoVaR}_{\bm{\alpha},t}(Y|\bm{X})\dif t.
    \end{equation}
    \end{definition}
    It can be derived that 
    \begin{equation} \label{def:MCOES2}
        {\rm MCoES}_{\bm{\alpha},\beta}(Y|\bm{X}) = \int_{0}^1F^{-1}_{Y}(p)\dif \bar{h}_{TVaR}(F_{V|A^>_{\bm{U}}}(p))
    \end{equation}
    where  $F_{V|A^>_{\bm{U}}}(\cdot)$ is  defined in (\ref{eq:F-V-AUg}) and $\bar{h}_{TVaR}$ is given by 
\begin{equation}\label{eq:tvar-distortion}
\bar{h}_{TVaR}(t)=1-h_{TVaR}(1-t)=1-\min\{1,\frac{1-t}{1-\beta}\}=\max\{0,\frac{t-\beta}{1-\beta}\}.
\end{equation}
\subsection{Contribution Measures Using Different Baselines }

\begin{definition}\label{def:vulnercontri2} The vulnerability contribution conditional value-at-risk of risk $Y$ given the event $A_{\bm{X}}$ compared with the conditional value-at-risk of $Y$ given $X_{i}>{\rm VaR}_{\alpha_i}(X_i)$, for some $i\in\{1,2,\ldots,d\}$, written as $\Delta_i{\rm VCoVaR}$, is defined as follows:
    \begin{equation}\label{contriVCoVaR-defi2}
    \Delta_i{\rm VCoVaR}_{\bm{\alpha},\beta}(Y|\bm{X})={\rm VCoVaR}_{\bm{\alpha},\beta}(Y|\bm{X})-{\rm CoVaR}_{\alpha_i,\beta}(Y|X_i).
    \end{equation}
    Moreover, the associated vulnerability contribution ratio conditional value-at-risk is defined as follows:
    \begin{equation}\label{contriratioVCoVaR-defi2}
    \Delta^R_i{\rm VCoVaR}_{\bm{\alpha},\beta}(Y|\bm{X})=\frac{{\rm VCoVaR}_{\bm{\alpha},\beta}(Y|\bm{X})-{\rm CoVaR}_{\alpha_i,\beta}(Y|X_i)}{{\rm CoVaR}_{\alpha_i,\beta}(Y|X_i)}.
    \end{equation}
    \end{definition}

    According to \eqref{eq:vcovar-representation}, alternative expressions of (\ref{contriVCoVaR-defi2}) and (\ref{contriratioVCoVaR-defi2}) are given as follows:
    \begin{equation}\label{DiConVCoVaR-expre2}
    \Delta_i{\rm VCoVaR}_{\bm{\alpha},\beta}(Y|\bm{X})=F^{-1}_{Y}\left(F^{-1}_{V|A_{\bm{U}}}(\beta)\right)-F^{-1}_{Y}\left(F^{-1}_{V|U_i>\alpha_i}(\beta)\right)
    \end{equation}
    and 
    \begin{equation}\label{DiConratioVCoVaR-expre2}
    \Delta_i^R{\rm VCoVaR}_{\bm{\alpha},\beta}(Y|\bm{X})=\frac{F^{-1}_{Y}\left(F^{-1}_{V|A_{\bm{U}}}(\beta)\right)-F^{-1}_{Y}\left(F^{-1}_{V|U_i>\alpha_i}(\beta)\right)}{F^{-1}_{Y}\left(F^{-1}_{V|U_i>\alpha_i}(\beta)\right)}.
    \end{equation}
    
\par For given copula $C$, $\bm{\alpha} \in \mathbb{R}^d$ and $i \in \{1,2, \cdots, d\}$, let $\bm{\alpha}^*_i = (1, \cdots, 1, \alpha_i, 1, \cdots, 1)$ and
\begin{equation} \nonumber
    s^i_{\bm{\alpha}}(v)=\frac{v-C(\bm{\alpha},v)}{v-C(\bm{\alpha}^*_i,v)}.
\end{equation}
Note that the c.d.f. for conditional random variable $V \vert U_i > \alpha_i$ can be expressed as 
\begin{equation} \nonumber
    F_{V \vert U_i > \alpha_i}(v)= \frac{v-C(\bm{\alpha}^*_i,v)}{1-\alpha_i}.
\end{equation}
\begin{theorem} \label{thm:drivcovar-scdm}
    Suppose $C_1 = C_2 = C$ and $s^i_{\bm{\alpha}}(v) \leq s^i_{\bm{\alpha}}(1)$ for any $v \in [0,1]$. 
    \begin{enumerate}[(i)]
        \item If $Y_1 \leq_{{\rm disp}} Y_2$, then $\Delta_i{\rm VCoVaR}_{\bm{\alpha},\beta}(Y_1|\bm{X}_1) \leq \Delta_i{\rm VCoVaR}_{\bm{\alpha},\beta}(Y_2|\bm{X}_2)$ for all $\beta \in (0,1)$.
        \item If $Y_1 \leq_{{\star}} Y_2$, then $\Delta_i^{\rm R}{\rm VCoVaR}_{\bm{\alpha},\beta}(Y_1|\bm{X}_1) \leq \Delta_i^{\rm R}{\rm VCoVaR}_{\bm{\alpha},\beta}(Y_2|\bm{X}_2)$ for all $\beta \in (0,1)$.
    \end{enumerate}
\end{theorem}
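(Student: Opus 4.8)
The plan is to use the common-copula hypothesis $C_1=C_2=C$ to reduce both statements to a comparison of two \emph{fixed} quantile levels of the marginals of $Y$. Because the copula is shared, the conditional laws $V\mid A_{\bm U}$ and $V\mid U_i>\alpha_i$ are the same for both random vectors and depend only on $C$ and $\bm\alpha$; their distribution functions are $F_{V|A_{\bm U}}$ from \eqref{eq:F-V-AU} and $F_{V|U_i>\alpha_i}(v)=(v-C(\bm\alpha^*_i,v))/(1-\alpha_i)$. I would set $p:=F^{-1}_{V|A_{\bm U}}(\beta)$ and $q:=F^{-1}_{V|U_i>\alpha_i}(\beta)$; then \eqref{DiConVCoVaR-expre2} and \eqref{DiConratioVCoVaR-expre2} read $\Delta_i{\rm VCoVaR}_{\bm\alpha,\beta}(Y_k|\bm X_k)={\rm VaR}_p(Y_k)-{\rm VaR}_q(Y_k)$ and $\Delta^{\rm R}_i{\rm VCoVaR}_{\bm\alpha,\beta}(Y_k|\bm X_k)={\rm VaR}_p(Y_k)/{\rm VaR}_q(Y_k)-1$ for $k=1,2$, so the two marginals are compared at the \emph{same} pair of levels $p,q$.

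The decisive step is to prove $p\ge q$, and this is where the hypothesis $s^i_{\bm\alpha}(v)\le s^i_{\bm\alpha}(1)$ enters. Since $C(\bm\alpha^*_i,1)=\alpha_i$ gives $s^i_{\bm\alpha}(1)=(1-C(\bm\alpha,1))/(1-\alpha_i)$, I would clear the nonnegative denominators in $s^i_{\bm\alpha}(v)\le s^i_{\bm\alpha}(1)$ and verify that the inequality rearranges exactly into $F_{V|A_{\bm U}}(v)\le F_{V|U_i>\alpha_i}(v)$ for all $v\in[0,1]$. This says that $V\mid A_{\bm U}$ is stochastically larger than $V\mid U_i>\alpha_i$, whence $p=F^{-1}_{V|A_{\bm U}}(\beta)\ge F^{-1}_{V|U_i>\alpha_i}(\beta)=q$, with $0<q\le p<1$. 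I expect this algebraic equivalence to be the main (indeed essentially the only) obstacle; once it is in place the rest is a one-line invocation of each order's definition.

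Finally, with $q\le p$ fixed, part (i) follows from the dispersive order in Definition \ref{def:order}(iii): $Y_1\le_{\rm disp}Y_2$ yields ${\rm VaR}_p(Y_1)-{\rm VaR}_q(Y_1)\le{\rm VaR}_p(Y_2)-{\rm VaR}_q(Y_2)$, which is the asserted inequality. For part (ii), $Y_1\le_\star Y_2$ means $F^{-1}_{Y_2}(u)/F^{-1}_{Y_1}(u)$ is increasing in $u$; evaluating this ratio at $u=p$ and $u=q$ and cross-multiplying (using positivity of the quantiles) gives $F^{-1}_{Y_1}(p)/F^{-1}_{Y_1}(q)\le F^{-1}_{Y_2}(p)/F^{-1}_{Y_2}(q)$, which is exactly the claimed ordering of $\Delta^{\rm R}_i{\rm VCoVaR}$. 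The only mild caveat to record is the positivity of ${\rm VaR}_q(Y_k)$ needed for the star-order step, which holds for the positive loss variables under consideration.
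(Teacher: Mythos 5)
Your proposal is correct and follows essentially the same route as the paper: both reduce the claim, via the common copula, to comparing the marginal quantiles of $Y_1$ and $Y_2$ at the two fixed levels $p=F^{-1}_{V|A_{\bm U}}(\beta)\ge q=F^{-1}_{V|U_i>\alpha_i}(\beta)$, where the ordering $p\ge q$ is obtained by rearranging $s^i_{\bm\alpha}(v)\le s^i_{\bm\alpha}(1)$ into $F_{V|A_{\bm U}}(v)\le F_{V|U_i>\alpha_i}(v)$ exactly as in the paper's display \eqref{eq:VUleqVUi}, after which (i) and (ii) are single applications of the dispersive-order and star-order definitions.
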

\begin{theorem}\label{thm:drivcoes-scdm}
    Suppose $C_1 = C_2 = C$. 
    \begin{enumerate}[(i)]
        \item If $Y_1 \leq_{{\rm disp}} Y_2$ and $s^i_{\bm{\alpha}}(v) \leq s^i_{\bm{\alpha}}(1)$ for any $v \in [0,1]$, then $\Delta_i{\rm VCoES}_{\bm{\alpha},\beta}(Y_1|\bm{X}_1) \leq \Delta_i{\rm VCoES}_{\bm{\alpha},\beta}(Y_2|\bm{X}_2)$ for all $\beta \in (0,1)$.
        \item If $Y_1 \leq_{{\rm eps}} Y_2$, $C$ is concave w.r.t. its last argument and $F_{V|A_{\bm{U}}} \circ F^{-1}_{V|U_i>\alpha_i}$ is convex, then $\Delta_i^{\rm R}{\rm VCoES}_{\bm{\alpha},\beta}(Y_1|\bm{X}_1) \leq \Delta_i^{\rm R}{\rm VCoES}_{\bm{\alpha},\beta}(Y_2|\bm{X}_2)$ for all $\beta \in (0,1)$.
    \end{enumerate}
\end{theorem}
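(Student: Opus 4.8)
Since $C_1=C_2=C$, the conditional laws of the uniform margins are identical for the two vectors. Hence, by \eqref{VCoES-defi-Equi} and the analogous representation ${\rm CoES}_{\alpha_i,\beta}(Y_j|X_{j,i})=\frac{1}{1-\beta}\int_\beta^1 F^{-1}_{Y_j}(b(t))\,\dif t$ coming from ${\rm CoVaR}_{\alpha_i,t}(Y_j|X_{j,i})=F^{-1}_{Y_j}(F^{-1}_{V|U_i>\alpha_i}(t))$, I would write, for $j=1,2$,
\begin{equation*}
{\rm VCoES}_{\bm\alpha,\beta}(Y_j|\bm X_j)=\frac{1}{1-\beta}\int_\beta^1 F^{-1}_{Y_j}(a(t))\,\dif t,\qquad a(t)=F^{-1}_{V|A_{\bm U}}(t),\quad b(t)=F^{-1}_{V|U_i>\alpha_i}(t),
\end{equation*}
where $a,b$ depend only on $C$. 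Thus both contribution measures are functionals of the single pair $(a,b)$ and the marginals $F_{Y_j}$.

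For (i) the plan is to reduce to a pointwise comparison of integrands. Using $F_{V|A_{\bm U}}(v)=\frac{v-C(\bm\alpha,v)}{1-C(\bm\alpha,1)}$, the definition $F_{V|U_i>\alpha_i}(v)=\frac{v-C(\bm\alpha^*_i,v)}{1-\alpha_i}$, $s^i_{\bm\alpha}(v)=\frac{v-C(\bm\alpha,v)}{v-C(\bm\alpha^*_i,v)}$, and the copula identity $C(\bm\alpha^*_i,1)=\alpha_i$, I would first record the elementary ratio identity
\begin{equation*}
\frac{F_{V|A_{\bm U}}(v)}{F_{V|U_i>\alpha_i}(v)}=\frac{s^i_{\bm\alpha}(v)}{s^i_{\bm\alpha}(1)},\qquad v\in[0,1].
\end{equation*}
The hypothesis $s^i_{\bm\alpha}(v)\le s^i_{\bm\alpha}(1)$ then forces $F_{V|A_{\bm U}}(v)\le F_{V|U_i>\alpha_i}(v)$, whence the generalized inverses satisfy $a(t)\ge b(t)$ for every $t$. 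Since $b(t)\le a(t)$ and $Y_1\leq_{\rm disp}Y_2$, the dispersive order gives $F^{-1}_{Y_1}(a(t))-F^{-1}_{Y_1}(b(t))\le F^{-1}_{Y_2}(a(t))-F^{-1}_{Y_2}(b(t))$ pointwise; integrating over $[\beta,1]$ yields (i). This part is routine once the ratio identity is in hand.

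For (ii), set $R(Y)={\rm VCoES}_{\bm\alpha,\beta}(Y|\bm X)/{\rm CoES}_{\alpha_i,\beta}(Y|X_i)=\Delta_i^{\rm R}{\rm VCoES}_{\bm\alpha,\beta}(Y|\bm X)+1$; the claim is $R(Y_1)\le R(Y_2)$. By \eqref{VCoES-defi-Equi} this ratio equals $D_{\Psi_V}[Y]/D_{\Psi_i}[Y]$, where $D_g[Y]=\int_0^1 F^{-1}_Y(p)\,\dif g(p)$ and $\Psi_V=\bar h_{TVaR}\circ F_{V|A_{\bm U}}$, $\Psi_i=\bar h_{TVaR}\circ F_{V|U_i>\alpha_i}$. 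The concavity of $C$ in its last argument makes $F_{V|A_{\bm U}}$ and $F_{V|U_i>\alpha_i}$ convex, so $\Psi_V,\Psi_i$ are genuine distortions and the conditional shortfalls are positive; moreover, because $\bar h_{TVaR}$ and $\bar h_{TVaR}^{-1}$ are affine on the active range, the hypothesized convexity of $\phi=F_{V|A_{\bm U}}\circ F^{-1}_{V|U_i>\alpha_i}$ transfers to convexity of $\Psi_V\circ\Psi_i^{-1}$. I would then invoke the EPS-ratio principle that already underlies part (ii) of Theorem~\ref{thm:dr-vcoes-dcdm}: if $g_1\circ g_2^{-1}$ is convex and $Y_1\leq_{\rm eps}Y_2$ (with positive quantiles), then $D_{g_1}[Y_1]/D_{g_2}[Y_1]\le D_{g_1}[Y_2]/D_{g_2}[Y_2]$, applied with $g_1=\Psi_V$, $g_2=\Psi_i$. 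The mechanism I would spell out is the substitution $q=\Psi_i(p)$, which turns $D_{\Psi_i}[Y]$ into the mean of the transformed quantile $Q=F^{-1}_Y\circ\Psi_i^{-1}$ and $D_{\Psi_V}[Y]$ into $\int_0^1 Q(q)\,\dif(\Psi_V\circ\Psi_i^{-1})(q)$; the ratio then becomes a normalized tail-weighted mean of $Q$ against the convex distortion $\Psi_V\circ\Psi_i^{-1}$, whose monotonicity is governed precisely by the EPS order.

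The hard part is this EPS-ratio principle for (ii): verifying that the normalized tail functional is monotone under $\leq_{\rm eps}$ requires combining the defining ${\rm ES}/{\rm VaR}$-ratio inequality of the EPS order with the convexity of $\Psi_V\circ\Psi_i^{-1}$ and the concavity of $\Psi_i^{-1}$ (inherited from $C$ concave), while keeping track of the positivity of quantiles so that both the ratios and the order are well defined. By contrast, (i) is essentially a one-line dispersive estimate after the ratio identity.
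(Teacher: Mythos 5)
Your proposal is correct and follows essentially the same route as the paper: part (i) is the pointwise dispersive-order estimate obtained from $s^i_{\bm\alpha}(v)\le s^i_{\bm\alpha}(1)\Rightarrow F^{-1}_{V|A_{\bm U}}\ge F^{-1}_{V|U_i>\alpha_i}$, integrated over $[\beta,1]$, and part (ii) is exactly the application of the Belzunce et al.\ characterization of $\leq_{\rm eps}$ (Lemma~\ref{lemma:epw}) to the ratio functional $I_{A,B}$ with $A=\bar h_{TVaR}\circ F_{V|A_{\bm U}}$ and $B=\bar h_{TVaR}\circ F_{V|U_i>\alpha_i}$, using that $B$ is convex and $A\circ B^{-1}$ inherits convexity from $F_{V|A_{\bm U}}\circ F^{-1}_{V|U_i>\alpha_i}$. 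The only cosmetic difference is that you sketch a substitution-based re-derivation of that characterization where the paper simply cites it.
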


\section{Proofs of the main results}
Here we provide necessary lemmas that are utilized in this section.
\begin{lemma} \label{lem:cdf-ineq}
    $V^*|A_{\bm{U}^*} \leq_{\rm st} \hat{V}|A_{\hat{\bm{U}}}$ iff 
    \begin{equation} \nonumber
        l_{\bm{\alpha}}(v) \geq l_{\bm{\alpha}}(1), \forall v \in [0,1]. 
    \end{equation}
\end{lemma}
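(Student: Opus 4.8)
The plan is to reduce the stochastic-order statement to a pointwise comparison of the two conditional distribution functions and then rewrite that comparison as the claimed inequality on $l_{\bm{\alpha}}$. First I would invoke Lemma \ref{lem:cdf-representation}(a) (equation \eqref{eq:F-V-AU}) to record the two relevant c.d.f.'s: since $(\bm{U}^*,V^*)$ has copula $C_1$ and $(\hat{\bm{U}},\hat{V})$ has copula $C_2$,
$$F_{V^*|A_{\bm{U}^*}}(v)=\frac{v-C_1(\bm{\alpha},v)}{1-C_1(\bm{\alpha},1)},\qquad F_{\hat{V}|A_{\hat{\bm{U}}}}(v)=\frac{v-C_2(\bm{\alpha},v)}{1-C_2(\bm{\alpha},1)},\qquad v\in[0,1].$$
By the c.d.f. characterization of the usual stochastic order in Definition \ref{def:order}(i), the relation $V^*|A_{\bm{U}^*}\leq_{\rm st}\hat{V}|A_{\hat{\bm{U}}}$ is equivalent to $F_{V^*|A_{\bm{U}^*}}(v)\geq F_{\hat{V}|A_{\hat{\bm{U}}}}(v)$ for every $v\in[0,1]$.

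Writing $a(v)=v-C_1(\bm{\alpha},v)$ and $b(v)=v-C_2(\bm{\alpha},v)$, both nonnegative by the Fr\'echet--Hoeffding upper bound, the target inequality becomes $a(v)/a(1)\geq b(v)/b(1)$. Here the denominators $a(1)=1-C_1(\bm{\alpha},1)$ and $b(1)=1-C_2(\bm{\alpha},1)$ are strictly positive because $C_i(\bm{\alpha},1)=C_i(\alpha_1,\dots,\alpha_d)\leq\min_i\alpha_i<1$ (each $\alpha_i<1$), so the conditioning events $A_{\bm{U}^*}$ and $A_{\hat{\bm{U}}}$ have positive probability. Multiplying through by the positive quantity $a(1)b(1)$ turns the comparison into $a(v)b(1)\geq b(v)a(1)$, and dividing by $b(v)b(1)$ (when $b(v)>0$) yields exactly $l_{\bm{\alpha}}(v)=a(v)/b(v)\geq a(1)/b(1)=l_{\bm{\alpha}}(1)$, i.e.\ \eqref{eq:def_l_alpha} evaluated at $v$ against $1$. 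Every step is an equivalence, so the two conditions are interchangeable and both directions follow at once.

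The one point requiring care---really the only obstacle beyond bookkeeping---is the set of $v$ where $b(v)=v-C_2(\bm{\alpha},v)=0$, at which $l_{\bm{\alpha}}(v)$ is not literally defined (this happens for small $v$ under comonotone-type dependence, and always at $v=0$). At such $v$ the right-hand c.d.f.\ $F_{\hat{V}|A_{\hat{\bm{U}}}}(v)$ equals $0$, so the stochastic-order inequality $F_{V^*|A_{\bm{U}^*}}(v)\geq 0$ holds automatically and is consistent with reading $l_{\bm{\alpha}}(v)=+\infty\geq l_{\bm{\alpha}}(1)$; I would dispose of these points first and carry out the cross-multiplication only on the set $\{v:b(v)>0\}$. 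With that caveat recorded, the argument is a short chain of equivalences that needs no deeper structural input about the copulas.
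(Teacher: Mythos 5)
Your proposal is correct and follows essentially the same route as the paper's proof: apply the c.d.f. representation from Lemma \ref{lem:cdf-representation}(a) to both copulas, use the pointwise c.d.f. characterization of $\leq_{\rm st}$, and cross-multiply to obtain $l_{\bm{\alpha}}(v)\geq l_{\bm{\alpha}}(1)$. You are in fact slightly more careful than the paper, which states only the one direction and does not address the points where $v-C_2(\bm{\alpha},v)=0$; your handling of those degenerate points and your observation that every step is an equivalence are welcome but not a different method.
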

\proof It can be derived from $l_{\bm{\alpha}}(v) \geq l_{\bm{\alpha}}(1)$ for all $v \in [0,1]$ that 
\begin{equation} \nonumber
    \frac{v-C_1(\bm{\alpha},v)}{1-C_1(\bm{\alpha},1)} \geq \frac{v-C_2(\bm{\alpha},v)}{1-C_2(\bm{\alpha},1)}, ~ \forall v \in [0,1],
\end{equation}
which implies $F_{V^*|A_{\bm{U}^*}}(v) \geq F_{\hat{V}|A_{\hat{\bm{U}}}}(v)$ for all $v \in [0,1]$.\qed
\begin{lemma} \label{lem:LTD-ineq}
    Suppose a $d$-demensional copula $C$ is ${\rm LTD}_{d+1}^1$, then we $V \leq_{\rm st} V|A_{\bm{U}}$.
\end{lemma}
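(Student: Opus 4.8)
The plan is to translate the stochastic dominance assertion into a single pointwise inequality for the copula $C$ and then read that inequality off directly from the $\operatorname{LTD}_{d+1}^1$ hypothesis via its copula characterisation in Lemma \ref{lem:ltd}(i).

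First I would note that $V$ is standard uniform, so $F_V(v)=v$, and recall that by Definition \ref{def:order}(i) the relation $V \leq_{\rm st} V|A_{\bm{U}}$ is equivalent to $F_{V|A_{\bm{U}}}(v) \leq v$ for every $v \in [0,1]$. Substituting the explicit distribution function from Lemma \ref{lem:cdf-representation}(a),
\begin{equation} \nonumber
F_{V|A_{\bm{U}}}(v) = \frac{v - C(\bm{\alpha},v)}{1 - C(\bm{\alpha},1)},
\end{equation}
and clearing the strictly positive denominator $1 - C(\bm{\alpha},1)>0$ (which holds since each $\alpha_i<1$), the target inequality collapses after cancellation of the $v$ terms to
\begin{equation} \nonumber
C(\bm{\alpha},v) \geq v\, C(\bm{\alpha},1), \qquad v \in [0,1].
\end{equation}

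The final step is to supply this last inequality from the dependence assumption. Since $C$ is $\operatorname{LTD}_{d+1}^1$, Lemma \ref{lem:ltd}(i), applied with $m = d+1$, guarantees that $v \mapsto C(\bm{\alpha},v)/v$ is nonincreasing on $(0,1]$. Comparing its value at an arbitrary $v \in (0,1]$ with its value at the endpoint $v = 1$ and using $v \leq 1$ then yields $C(\bm{\alpha},v)/v \geq C(\bm{\alpha},1)$, which is exactly the required bound; the boundary case $v=0$ is immediate.

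I do not anticipate a serious obstacle here: the only genuine content is recognising that the copula form of $\operatorname{LTD}_{d+1}^1$ is precisely the monotonicity of $C(\bm{\alpha},v)/v$ in its last argument, after which the argument is pure algebra. The one point that demands care is bookkeeping the orientation of the usual stochastic order—namely that the larger variable carries the smaller c.d.f.—so that the chain of inequalities points in the correct direction throughout.
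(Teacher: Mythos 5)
Your proposal is correct and follows essentially the same route as the paper: both reduce $V \leq_{\rm st} V|A_{\bm{U}}$ to the pointwise inequality $C(\bm{\alpha},v) \geq v\,C(\bm{\alpha},1)$ and obtain it from the monotonicity of $C(\bm{\alpha},v)/v$ in $v$, the paper phrasing this step probabilistically via Definition \ref{def:LTD} while you invoke the equivalent copula characterisation in Lemma \ref{lem:ltd}(i). The only difference is that you spell out the algebra (positivity of the denominator, cancellation of the $v$ terms) that the paper leaves implicit.
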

\proof Let $U_1,\ldots,U_d,V$ be a group of uniform random variables such that $(\bm{U},V)$ also shares copula $C$, where $\bm{U}=(U_1,\ldots,U_d)$. Definition \ref{def:LTD} gives rise to 
    \begin{equation} \nonumber
        \PP\{ U_1 \leq \alpha_1, \cdots, U_d \leq \alpha_d \vert V \leq v\} \geq \PP\{ U_1 \leq \alpha_1, \cdots, U_d \leq \alpha_d\},
    \end{equation}
    which implies 
    \begin{equation} \nonumber
         \frac{C(\alpha_1,\ldots,\alpha_d,v)}{v} \geq C(\alpha_1,\ldots,\alpha_d,1)
    \end{equation}
    and further the desired result. \qed
\subsection{Proof of Lemma \ref{lem:cdf-representation}}
\proof For $V \vert A_{\bm{U}}$, it can be derived that 
    \begin{equation} \nonumber
    \mathbb{P}(V \leq v\mid A_{\bm{U}}) = \frac{\mathbb{P}(V \leq v , A_{\bm{U}})}{\mathbb{P}(A_{\bm{U}})} = \frac{\mathbb{P}(V \leq v)-\mathbb{P}(V \leq v, A^c_{\bm{U}})}{1-\mathbb{P}(A^c_{\bm{U}})}.
    \end{equation}
    Considering 
    \begin{equation} \nonumber
        \mathbb{P}(V \leq v, A^c_{\bm{U}}) = C(\alpha_1,\cdots, \alpha_d,v)
    \end{equation}
    and 
    \begin{equation} \nonumber
        \mathbb{P}(A^c_{\bm{U}}) = C(\alpha_1,\cdots, \alpha_d,1),
    \end{equation}
    the result of (\ref{eq:F-V-AU}) can be obtained. 
    
    %For $V \vert A^>_{\bm{U}}$, similarly, equalities 
    %\begin{equation} \nonumber 
    %    \mathbb{P}(V>v, A^>_{\bm{U}}) = \hat{C}(1-\alpha_1,1-\alpha_2, \cdots,1-\alpha_d,1-v)
    %\end{equation}
    %and 
    %\begin{equation} \nonumber
    %    \mathbb{P}(A^>_{\bm{U}}) = \hat{C}(1-\alpha_1,1-\alpha_2, \cdots,1-\alpha_d,1)
    %\end{equation}
    %give rise to (\ref{eq:F-V-AUg})
    \par The proof for (b) is trivial and hence omitted.
     \qed
\iffalse    
\subsection{Proof of Lemma \ref{lem:cdf-ineq}}
\proof It can be derived from $l_{\bm{\alpha}}(v) \geq l_{\bm{\alpha}}(1)$ for all $v \in [0,1]$ that 
\begin{equation} \nonumber
    \frac{v-C_1(\bm{\alpha},v)}{1-C_1(\bm{\alpha},1)} \geq \frac{v-C_2(\bm{\alpha},v)}{1-C_2(\bm{\alpha},1)}, ~ \forall v \in [0,1],
\end{equation}
which implies $F_{V^*|A_{\bm{U}^*}}(v) \geq F_{\hat{V}|A_{\hat{\bm{U}}}}(v)$ for all $v \in [0,1]$.\qed
\fi
\subsection{Proof of Lemma \ref{thm:vcovar-dcdm}}
\proof Based on \eqref{eq:vcovar-representation}, it suffices to show 
\begin{equation} \nonumber
    F^{-1}_{Y_1}\left(F^{-1}_{V^*|A_{\bm{U}^*}}(\beta)\right) \leq F^{-1}_{Y_2}\left(F^{-1}_{\hat{V}|A_{\hat{\bm{U}}}}(\beta)\right)
\end{equation}
for all $\beta \in (0,1)$. Considering $Y_1 \leq_{{\rm st}} Y_2$, it is obvious that $F^{-1}_{Y_1}\left(t\right) \leq F^{-1}_{Y_2}\left(t\right)$ for all $t \in (0,1)$. Considering $l_{\bm{\alpha}}(v) \geq l_{\bm{\alpha}}(1)$ for all $v \in [0,1]$ and applying Lemma \ref{lem:cdf-ineq} give rise to $F_{V^*|A_{\bm{U}^*}}(v) \geq F_{\hat{V}|A_{\hat{\bm{U}}}}(v)$. Hence, it can be derived that
\begin{equation} \label{eq:F-1geq}
    F^{-1}_{V^*|A_{\bm{U}^*}}(\beta) \leq F^{-1}_{\hat{V}|A_{\hat{\bm{U}}}}(\beta), \quad \forall \beta \in (0,1).
\end{equation}
Further, $Y_1 \leq_{{\rm st}} Y_2$ implies $F^{-1}_{Y_1}\left(t\right) \leq F^{-1}_{Y_2}\left(t\right)$ for all $t \in (0,1)$ and then the desired result. \qed.
\subsection{Proof of Corollary \ref{cor:VCoVaR-greater}}
\proof For given $\bm{\alpha} \in [0,1]^d$, the ${\rm LTD}_{d+1}^1$ property of $C_1$ give rise to 
    \begin{equation} \label{eq:ind}
        \frac{C_1(\bm{\alpha},v)}{v} \geq C_1(\bm{\alpha},1).
    \end{equation}
    Based on \eqref{eq:ind} it can be derived that 
    \begin{equation} \label{eq:l-alpha-ind}
        \frac{v(1 - \Pi_{i=1}^d \alpha_i)}{v - C_1(\bm{\alpha},v)} \geq \frac{1 - \Pi_{i=1}^d \alpha_i}{1 - C_1(\bm{\alpha},1)}.
    \end{equation}
    Then the desired result can be obtained by using \eqref{eq:l-alpha-ind}, noticing $Y_1 \leq_{\rm st} Y_1$ and applying Theorem \ref{thm:vcovar-dcdm}. \qed
\subsection{Proof of Lemma \ref{thm:d-drvcovar-dcdm}}
\proof Without loss of generality, suppose $C_2$ is ${\rm LTD}_{d+1}^1$. Based on Lemma \ref{lem:LTD-ineq}, we have $F^{-1}_{\hat{V}|A_{\hat{\bm{U}}}}(\beta) \geq \beta$ for all $\beta \in (0,1)$.
\par \underline{Proof of (i)}: Based on \eqref{ConratioVCoVaR-expre}, it suffices to show
        \begin{equation} \label{eq:drvcovar-compare-repre}
            F^{-1}_{Y_1}(F^{-1}_{{V}^*|A_{{\bm{U}^*}}}(\beta))-F^{-1}_{Y_1}(\beta) \leq F^{-1}_{Y_2}(F^{-1}_{\hat{V}|A_{\hat{\bm{U}}}}(\beta))-F^{-1}_{Y_2}(\beta), ~ \forall \beta \in (0,1).
        \end{equation}
        Inequality $l_{\bm{\alpha}}(v) \geq l_{\bm{\alpha}}(1)$ for any $v \in [0,1]$ implies \eqref{eq:F-1geq}, which further give rise to 
        \begin{equation} \label{eq:drvcovar-compare-1}
            F^{-1}_{Y_1}(F^{-1}_{{V}^*|A_{{\bm{U}^*}}}(\beta))-F^{-1}_{Y_1}(\beta) \leq F^{-1}_{Y_1}(F^{-1}_{\hat{V}|A_{\hat{\bm{U}}}}(\beta))-F^{-1}_{Y_1}(\beta), ~ \forall \beta \in (0,1).
        \end{equation}
        Noticing $F^{-1}_{\hat{V}|A_{\hat{\bm{U}}}}(\beta) \geq \beta$,  
        \begin{equation}\label{eq:drvcovar-compare-2}
                F^{-1}_{Y_1}(F^{-1}_{\hat{V}|A_{\hat{\bm{U}}}}(\beta))-F^{-1}_{Y_1}(\beta) \leq F^{-1}_{Y_2}(F^{-1}_{\hat{V}|A_{\hat{\bm{U}}}}(\beta))-F^{-1}_{Y_2}(\beta), ~ \forall \beta \in (0,1).
        \end{equation}
        can be derived by considering Definition \ref{def:order} and $Y_1 \leq_{{\rm disp}} Y_2$. Combining \eqref{eq:drvcovar-compare-1} and \eqref{eq:drvcovar-compare-2} gives rise to \eqref{eq:drvcovar-compare-repre} and further the desired result.
   \par \underline{Proof of (ii)}: It can be established that $F^{-1}_{Y_2}(v)/F^{-1}_{Y_1}(v)$ is increasing w.r.t. $v \in (0,1)$ since $Y_1 \leq_{{\star}} Y_2$. Then one obtains
        \begin{equation} \label{eq:drvcovar-pf1}
            \frac{F^{-1}_{Y_2}(F^{-1}_{\hat{V}|A_{\hat{\bm{U}}}}(\beta))}{F^{-1}_{Y_1}(F^{-1}_{\hat{V}|A_{\hat{\bm{U}}}}(\beta))} \geq \frac{F^{-1}_{Y_2}(\beta)}{F^{-1}_{Y_1}(\beta)}, ~ \forall \beta \in (0,1)
        \end{equation}
        since $F^{-1}_{\hat{V}|A_{\hat{\bm{U}}}}(\beta) \geq \beta$. Inequality $l_{\bm{\alpha}}(v) \geq l_{\bm{\alpha}}(1)$ for any $v \in [0,1]$ implies \eqref{eq:F-1geq}, which further shows 
        \begin{equation}\label{eq:drvcovar-pf2}
            \frac{F^{-1}_{Y_2}(F^{-1}_{\hat{V}|A_{\hat{\bm{U}}}}(\beta))}{F^{-1}_{Y_1}(F^{-1}_{{V}^*|A_{{\bm{U}^*}}}(\beta))} \geq \frac{F^{-1}_{Y_2}(F^{-1}_{\hat{V}|A_{\hat{\bm{U}}}}(\beta))}{F^{-1}_{Y_1}(F^{-1}_{\hat{V}|A_{\hat{\bm{U}}}}(\beta))}, ~ \forall \beta \in (0,1).
        \end{equation}
        Inequalities \eqref{eq:drvcovar-pf1} and \eqref{eq:drvcovar-pf2} yield 
        \begin{equation}\nonumber
            \frac{F^{-1}_{Y_2}(F^{-1}_{\hat{V}|A_{\hat{\bm{U}}}}(\beta))}{F^{-1}_{Y_2}(\beta)} \geq \frac{F^{-1}_{Y_1}(F^{-1}_{{V}^*|A_{{\bm{U}^*}}}(\beta))}{F^{-1}_{Y_1}(\beta)} , ~ \forall \beta \in (0,1),
        \end{equation}
        which gives the desired result.\qed
\subsection{Proof of Lemma \ref{thm:dr-vcoes-dcdm}}

For a r.v. $X$ with c.d.f. $F$, we define
\begin{equation}\label{IABfunc}
  I_{A, B}(X)=\frac{\int_0^1 F^{-1}(t) \dif A(t)}{\int_0^1 F^{-1}(t) \dif B(t)}-1,
\end{equation}
where both $A(t)$ and $B(t)$ are distortion functions, that is, $A(t)\in\mathcal{H}$ and $B(t)\in\mathcal{H}$. Let $$\mathcal{C}_1=\{I_{A, B}:A(t)\in\mathcal{H},~B(t)\in\mathcal{H},~\mbox{$A\circ B^{-1}(t)$ is convex}\}$$
and
$$\mathcal{C}_2=\{I_{A, B}:A(t)\in\mathcal{H},~B(t)\in\mathcal{H},~\mbox{both $A\circ B^{-1}(t)$ and $B(t)$ are convex}\}.$$
Clearly, $\mathcal{C}_2$ is a subset of $\mathcal{C}_1$. The following lemma is due to Theorem 3.25 of \cite{belzunce2012comparison}, which establishes an equivalent characterization for the expected proportional shortfall order in terms of the class of well-defined ratio integrals (\ref{IABfunc}) within $\mathcal{C}_2$.
\par The following lemma provides a provides an equivalent characterization for the expected proportional shortfal order.
\begin{lemma}{\rm\citep[Theorem 3.25 of][]{belzunce2012comparison}}\label{lemma:epw} Let $X$ and $Y$ be two random variables with c.d.f.'s $F$ and $G$, respectively. Then, $X \leq_{\rm eps} Y$ if and only if $I_{A,B}(X)\leq I_{A,B}(Y)$ for all $I_{A,B}\in \mathcal{C}_2$.
\end{lemma}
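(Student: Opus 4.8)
The plan is to prove both implications after recasting the expected proportional shortfall order in a multiplicative form that matches the ratio structure of $I_{A,B}$. The first step is the reformulation
\begin{equation}\nonumber
X \leq_{\rm eps} Y \iff p \longmapsto \frac{\ES_p[Y]}{\ES_p[X]}\ \text{is nondecreasing on the valid range,}
\end{equation}
which I would obtain by writing $\mathrm{EPS}_p[X] = (1-p)\big(\ES_p[X]/\VaR_p[X]-1\big)$, using the identity $\VaR_p[X] = \ES_p[X] - (1-p)\,\tfrac{\dif}{\dif p}\ES_p[X]$, and reducing $\mathrm{EPS}_p[X]\leq \mathrm{EPS}_p[Y]$ to $\big(\log \ES_p[Y] - \log \ES_p[X]\big)' \geq 0$. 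In parallel I would record that $I_{A,B}(X)\leq I_{A,B}(Y)$ is equivalent to $\Phi(A)\geq\Phi(B)$, where $\Phi(D):=\int_0^1 G^{-1}\,\dif D\big/\int_0^1 F^{-1}\,\dif D$, obtained by clearing the (positive) denominators.

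For the sufficiency direction I would test the hypothesis on the two-parameter subfamily of hockey-stick distortions $b_p(t)=(t-p)_+/(1-p)$. A direct check shows $b_p$ is convex and, for $q\geq p$, that $b_q\circ b_p^{-1}$ is again a hockey stick and hence convex, so $(b_q,b_p)\in\mathcal{C}_2$; moreover $\int_0^1 F^{-1}\,\dif b_p = \ES_p[X]$, whence $I_{b_q,b_p}(X) = \ES_q[X]/\ES_p[X]-1$. Thus the assumed inequalities, restricted to this subfamily, state exactly that $\ES_q[X]/\ES_p[X]\leq \ES_q[Y]/\ES_p[Y]$ for all $q\geq p$, i.e. that $\ES_\cdot[Y]/\ES_\cdot[X]$ is nondecreasing, which is the eps order by the reformulation. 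This direction is short.

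For the necessity direction I would use that every convex distortion $D$ is a mixture of the extreme hockey sticks, giving $\int_0^1 F^{-1}\,\dif D = \int_{[0,1)}\ES_p[X]\,\dif\tilde\mu_D(p)$ for a probability measure $\tilde\mu_D$ determined by $D$ (concretely $\dif\tilde\mu_D(p)\propto (1-p)\,\dif D'(p)$). Substituting into $\Phi$ yields $\Phi(D)=\int_{[0,1)} r(p)\,\dif\bar\nu_D(p)$, where $r(p)=\ES_p[Y]/\ES_p[X]$ is nondecreasing by Step~1 and $\bar\nu_D$ is the $\ES_\cdot[X]$-weighted normalization of $\tilde\mu_D$. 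The claim $\Phi(A)\geq\Phi(B)$ then follows from the stochastic dominance $\bar\nu_A\succeq_{\rm st}\bar\nu_B$ together with monotonicity of $r$. To obtain this dominance I would exploit $A\circ B^{-1}$ convex in the form $\dif A = w\,\dif B$ with $w=(A\circ B^{-1})'\circ B$ nondecreasing, combined with convexity of $B$, to compare the tail masses $\bar\nu_D([p,1])$.

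The hard part is exactly this last dominance. The naive likelihood-ratio comparison of $\bar\nu_A$ and $\bar\nu_B$ fails, because their densities involve the second-order objects $\dif D'$ while the reweighting $\dif A = w\,\dif B$ is only first order. More fundamentally, one cannot argue pointwise: the comonotone covariance inequality one would like to invoke requires the quantile ratio $G^{-1}/F^{-1}$ to be monotone, which is the strictly stronger star order $\leq_\star$; under the weaker eps order only the integrated ratio $\ES_\cdot[Y]/\ES_\cdot[X]$ is controlled, so the whole argument must be carried out at the $\ES$ (mixture) level rather than at the $\VaR$ level. I would also restrict throughout to the valid range where $\VaR_p$ and $\ES_p$ are positive, and invoke absolute continuity of the marginals (available in the paper's setting) to justify the differentiations and the mixture representation.
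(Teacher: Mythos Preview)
The paper does not prove this lemma at all: it is quoted verbatim as Theorem~3.25 of \cite{belzunce2012comparison} and then used as a black box in the proofs of Theorems~\ref{thm:dr-vcoes-dcdm}(ii) and~\ref{thm:drivcoes-scdm}(ii). There is therefore nothing in the paper to compare your argument against; what follows is an assessment of your sketch on its own merits.

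Your reformulation $X\leq_{\rm eps}Y\iff p\mapsto \ES_p[Y]/\ES_p[X]$ nondecreasing is correct (with the positivity caveats you note), and the sufficiency direction via the hockey-stick pairs $(b_q,b_p)$ is clean and complete. The necessity direction, however, has a real gap that you yourself flag but do not close. After the mixture representation you reduce $\Phi(A)\geq\Phi(B)$ to an inequality of the form $\int r\,\dif\bar\nu_A\geq\int r\,\dif\bar\nu_B$ with $r$ nondecreasing, and then propose to deduce this from $\bar\nu_A\succeq_{\rm st}\bar\nu_B$. But $\bar\nu_D$ is the $\ES_{\cdot}[X]$–weighted normalization of $\tilde\mu_D$, so the putative dominance depends on the marginal $F$; it would have to hold for \emph{every} admissible $F$ simultaneously, which is too strong. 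Equivalently, writing the target inequality as
\[
\iint \ES_u[X]\,\ES_v[X]\,[r(u)-r(v)]\,\dif\tilde\mu_A(u)\,\dif\tilde\mu_B(v)\ \geq\ 0,
\]
one sees that what is really needed is a comparison of $\tilde\mu_A$ and $\tilde\mu_B$ themselves (independent of $X$), and your observation that $dA=w\,dB$ with $w$ nondecreasing controls only first derivatives while $\tilde\mu_D$ lives at the level of $dD'$. A likelihood–ratio comparison $A''/B''$ nondecreasing does \emph{not} follow from convexity of $B$ and of $A\circ B^{-1}$ alone, so the route you outline does not go through as stated. One viable repair is to first reduce to the case $B=\mathrm{id}$ by passing to $\tilde F:=B\circ F$, $\tilde G:=B\circ G$ (so that $\Phi(A)/\Phi(B)$ becomes $I_{\phi,\mathrm{id}}$ with $\phi=A\circ B^{-1}$ convex), and then prove separately that the eps order is preserved under this convex re-grading; the $B=\mathrm{id}$ case follows immediately from $r(u)\geq r(0)$. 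Either way, the step you identify as ``the hard part'' is indeed where the work lies, and your current argument does not settle it.
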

\proof Without loss of generality, suppose $C_2$ is concave w.r.t. its last argument.
\par \underline{Proof of (i)}: the concavity of $F_{V|A_{\bm{U}}}(\cdot)$ defined in \eqref{eq:F-V-AU} can be established. The risk measure ${\rm VCoES}$ can be treated as a distortion risk measure with concave distortion function $\bar{h}_{TVaR}(F_{V|A_{\bm{U}}}(\cdot))$ (as showed in \eqref{VCoES-defi-Equi}). Combining Definition \ref{def:order} and $Y_1 \leq_{\rm icx} Y_2$, we have 
        \begin{equation} \label{eq:coes-compare-1}
            \int_{0}^1F^{-1}_{Y_1}(p)\dif \bar{h}_{TVaR}(F_{\hat{V}|A_{\hat{\bm{U}}}}(p)) \leq \int_{0}^1F^{-1}_{Y_2}(p)\dif \bar{h}_{TVaR}(F_{\hat{V}|A_{\hat{\bm{U}}}}(p)).
        \end{equation} 
        The proof of Theorem \ref{thm:vcovar-dcdm} shows that 
        \begin{equation} \nonumber
            F^{-1}_{V^*|A_{\bm{U}^*}}(\beta) \leq F^{-1}_{\hat{V}|A_{\hat{\bm{U}}}}(\beta), \quad \forall \beta \in (0,1).
        \end{equation}
        can be implied from $l_{\bm{\alpha}}(v) \geq l_{\bm{\alpha}}(1)$ for all $v \in [0,1]$. Then, 
        \begin{equation} \label{eq:coes-compare-2}
            \int_{0}^1F^{-1}_{Y_1}(F^{-1}_{V^*|A_{\bm{U}^*}}(p))\dif \bar{h}_{TVaR}(p) \leq \int_{0}^1F^{-1}_{Y_1}(F^{-1}_{\hat{V}|A_{\hat{\bm{U}}}}(p))\dif \bar{h}_{TVaR}(p).
        \end{equation}
        can be derived. The desired result can be derived by combining \eqref{VCoES-defi-Equi}, \eqref{eq:coes-compare-1} and \eqref{eq:coes-compare-2}.
        \par \underline{Proof of (ii)}: Inequality $l_{\bm{\alpha}}(v) \geq l_{\bm{\alpha}}(1)$ for any $v \in [0,1]$ shows $F_{V^*|A_{\bm{U}^*}}(v) \geq F_{\hat{V}|A_{\hat{\bm{U}}}}(v)$, which gives rise to
    \begin{equation}\label{eq:DRVCoES-compare-1}
        \frac{\int_{0}^1 F^{-1}_{Y_1}\left(p\right)\dif \bar{h}_{TVaR}(F_{V^*|A_{\bm{U}^*}}(p))}{\int_{\beta}^1 F^{-1}_{Y_1}(p) \dif \bar{h}_{TVaR}(p)} \leq \frac{\int_{0}^1 F^{-1}_{Y_1}\left(p\right)\dif \bar{h}_{TVaR}(F_{\hat{V}|A_{\hat{\bm{U}}}}(p))}{\int_{\beta}^1 F^{-1}_{Y_1}(p) \dif \bar{h}_{TVaR}(p)}.
    \end{equation}
    it has been known that both of $\bar{h}_{TVaR}(F_{V^*|A_{\bm{U}^*}}(p))$ and $\overline{h}_{TVaR}(p)$ are increasing and convex on $p\in[0,1]$, and the (generalized) inverse function of $\overline{h}_{TVaR}(t)$ is given by
    \begin{equation} \nonumber
        \overline{h}^{-1}_{TVaR}(t)=\inf \{x \in [0,1] \mid \overline{h}_{TVaR}(x) \geq t\}=\beta+(1-\beta)t,\quad t\in [0,1],
    \end{equation}
    which is linear. Therefore, one has $\bar{h}_{TVaR}(F_{V^*|A_{\bm{U}^*}}(\overline{h}^{-1}_{TVaR}(t)))$ is convex in $t\in[0,1]$ due to the PDS property of $C$. Then, based on Lemma \ref{lemma:epw}, we have
    \begin{equation}\label{eq:DRVCoES-compare-2}
        \frac{\int_{0}^1 F^{-1}_{Y_1}\left(p\right)\dif \bar{h}_{TVaR}(F_{\hat{V}|A_{\hat{\bm{U}}}}(p))}{\int_{\beta}^1 F^{-1}_{Y_1}(p) \dif \bar{h}_{TVaR}(p)} \leq \frac{\int_{0}^1 F^{-1}_{Y_2}\left(p\right)\dif \bar{h}_{TVaR}(F_{\hat{V}|A_{\hat{\bm{U}}}}(p))}{\int_{\beta}^1 F^{-1}_{Y_2}(p) \dif \bar{h}_{TVaR}(p)},
    \end{equation}
    which gives rise to the desired result.
\subsection{Proof of Lemma \ref{thm:dvcoes-dcdm}}
\proof Inequality $l_{\bm{\alpha}}(v) \geq l_{\bm{\alpha}}(1)$ for any $v \in [0,1]$ implies \eqref{eq:F-1geq}, which shows
    \begin{equation}\nonumber
        \int_{\beta}^1 F^{-1}_{Y_1}\left(F^{-1}_{V^*|A_{\bm{U}^*}}(t)\right) - F^{-1}_{Y_1}(t) \dif t \leq \int_{\beta}^1 F^{-1}_{Y_1}\left(F^{-1}_{\hat{V}|A_{\hat{\bm{U}}}}(t)\right) - F^{-1}_{Y_1}(t) \dif t,
    \end{equation}
One obtains $v \geq F_{\hat{V}|A_{\hat{\bm{U}}}}(v)$ from Lemma \ref{lem:LTD-ineq} and the condition that $C_2$ is ${\rm LTD}_{d+1}^1$, and hence $F^{-1}_{\hat{V}|A_{\hat{\bm{U}}}}(\beta) \geq \beta$. Based on $Y_1 \leq_{{\rm disp}} Y_2$, we have
    \begin{equation}\nonumber
        \int_{\beta}^1 F^{-1}_{Y_1}\left(F^{-1}_{\hat{V}|A_{\hat{\bm{U}}}}(t)\right) - F^{-1}_{Y_1}(t) \dif t \leq \int_{\beta}^1 F^{-1}_{Y_2}\left(F^{-1}_{\hat{V}|A_{\hat{\bm{U}}}}(t)\right) - F^{-1}_{Y_2}(t) \dif t,
    \end{equation}
which finishes the proof. \qed

\subsection{Proof of Theorem \ref{thm:drivcovar-scdm}}
It can be derived from $s^i_{\bm{\alpha}}(v) \leq s^i_{\bm{\alpha}}(1)$ for any $v \in [0,1]$ that 
\begin{equation} \label{eq:VUleqVUi}
    F_{V|A_{\bm{U}}}(v) = \frac{v-C(\bm{\alpha},v)}{1-C(\bm{\alpha},1)} \leq \frac{v-C(\bm{\alpha}^*_i,v)}{1-\alpha_i} = F_{V|U_i>\alpha_i}(v), \forall v \in [0,1].
\end{equation}
Further, \eqref{eq:VUleqVUi} gives rise to $F^{-1}_{V|A_{\bm{U}}}(\beta) \geq F^{-1}_{V|U_i>\alpha_i}(\beta)$ for all $\beta \in(0,1)$. 
\newline \underline{Proof of (i)}: Part (i) of Theorem \ref{thm:drivcovar-scdm} can be obtained by 
\begin{equation} \label{eq:drivcovar-scdm-pf1}
    F^{-1}_{Y_1}(F^{-1}_{V|A_{\bm{U}}}(\beta)) - F^{-1}_{Y_1}(F^{-1}_{V|U_i>\alpha_i}(\beta)) \leq F^{-1}_{Y_2}(F^{-1}_{V|A_{\bm{U}}}(\beta)) - F^{-1}_{Y_2}(F^{-1}_{V|U_i>\alpha_i}(\beta)),
\end{equation}
which is derived by considering (iii) of Definition \ref{def:order} and \eqref{DiConVCoVaR-expre2}. 
\newline \underline{Proof of (ii)}: It can be established that $F^{-1}_{Y_2}(v)/F^{-1}_{Y_1}(v)$ is increasing w.r.t. $v \in (0,1)$ since $Y_1 \leq_{{\star}} Y_2$. Then one obtains
\begin{equation} \nonumber
    \frac{F^{-1}_{Y_2}(F^{-1}_{V|A_{\bm{U}}}(\beta))}{F^{-1}_{Y_1}(F^{-1}_{V|A_{\bm{U}}}(\beta))} \geq \frac{F^{-1}_{Y_2}(F^{-1}_{V|U_i>\alpha_i}(\beta))}{F^{-1}_{Y_1}(F^{-1}_{V|U_i>\alpha_i}(\beta))}, ~ \forall \beta \in (0,1),
\end{equation}
which gives rise to 
\begin{equation} \nonumber
    \frac{F^{-1}_{Y_2}(F^{-1}_{V|A_{\bm{U}}}(\beta))}{F^{-1}_{Y_2}(F^{-1}_{V|U_i>\alpha_i}(\beta))} \geq \frac{F^{-1}_{Y_1}(F^{-1}_{V|A_{\bm{U}}}(\beta))}{F^{-1}_{Y_1}(F^{-1}_{V|U_i>\alpha_i}(\beta))}, ~ \forall \beta \in (0,1),
\end{equation}
and thus the desired result follows.
\qed
\subsection{Proof of Theorem \ref{thm:drivcoes-scdm}}
\underline{Proof of (i)}: The fact that $s^i_{\bm{\alpha}}(v) \leq s^i_{\bm{\alpha}}(1)$ for any $v \in [0,1]$ implies $F^{-1}_{V|A_{\bm{U}}}(\beta) \geq F^{-1}_{V|U_i>\alpha_i}(\beta)$ for all $\beta \in(0,1)$, as the proof of Theorem \ref{thm:drivcovar-scdm} points out. Then the desired result can be obtained by integrating  both sides of \eqref{eq:drivcovar-scdm-pf1} .
\newline \underline{Proof of (ii)}: Note that the function $\bar{h}_{TVaR}(F_{V|U_i>\alpha_i}(p))$ is increasing and convex on $p\in[0,1]$ since $C$ is concave w.r.t. the last argument. The convexity of function $\bar{h}_{TVaR} \circ F_{V|A_{\bm{U}}} \circ F^{-1}_{V|U_i>\alpha_i} \circ \bar{h}^{-1}_{TVaR}$ can be implied from the convexity of $F_{V|A_{\bm{U}}} \circ F^{-1}_{V|U_i>\alpha_i}$. Then, based on Lemma \ref{lemma:epw}, we have
\begin{equation}\nonumber
    \frac{\int_{0}^1 F^{-1}_{Y_1}\left(p\right)\dif \bar{h}_{TVaR}(F_{{V}|A_{{\bm{U}}}}(p))}{\int_{\beta}^1 F^{-1}_{Y_1}(p) \dif \bar{h}_{TVaR}(F_{V|U_i>\alpha_i}(p))} \leq \frac{\int_{0}^1 F^{-1}_{Y_2}\left(p\right)\dif \bar{h}_{TVaR}(F_{{V}|A_{{\bm{U}}}}(p))}{\int_{\beta}^1 F^{-1}_{Y_2}(p) \dif \bar{h}_{TVaR}(F_{V|U_i>\alpha_i}(p))},
\end{equation}
which gives rise to the desired result.
\qed

%\section{Plots of other CCs}
%...
\bibliographystyle{mystyle}
{\small\bibliography{VCRM}}

\begin{thebibliography}{21}
\expandafter\ifx\csname natexlab\endcsname\relax\def\natexlab#1{#1}\fi
\providecommand{\url}[1]{\texttt{#1}}
\providecommand{\href}[2]{#2}
\providecommand{\path}[1]{#1}
\providecommand{\DOIprefix}{doi:}
\providecommand{\ArXivprefix}{arXiv:}
\providecommand{\URLprefix}{URL: }
\providecommand{\Pubmedprefix}{pmid:}
\providecommand{\doi}[1]{\href{http://dx.doi.org/#1}{\path{#1}}}
\providecommand{\Pubmed}[1]{\href{pmid:#1}{\path{#1}}}
\providecommand{\bibinfo}[2]{#2}
\ifx\xfnm\relax \def\xfnm[#1]{\unskip,\space#1}\fi
%Type = Article
\bibitem[{Adrian \& Brunnermeier(2016)}]{Adrian2016}
\bibinfo{author}{Adrian, T.}, \& \bibinfo{author}{Brunnermeier, M.~K.} (\bibinfo{year}{2016}).
\newblock \bibinfo{title}{Covar}.
\newblock {\it \bibinfo{journal}{The American Economic Review}\/},  {\it \bibinfo{volume}{106}\/}, \bibinfo{pages}{1705--1741}.
%Type = Article
\bibitem[{Belzunce et~al.(2012)Belzunce, Pinar, Ruiz \& Sordo}]{belzunce2012comparison}
\bibinfo{author}{Belzunce, F.}, \bibinfo{author}{Pinar, J.~F.}, \bibinfo{author}{Ruiz, J.~M.}, \& \bibinfo{author}{Sordo, M.~A.} (\bibinfo{year}{2012}).
\newblock \bibinfo{title}{Comparison of risks based on the expected proportional shortfall}.
\newblock {\it \bibinfo{journal}{Insurance: Mathematics and Economics}\/},  {\it \bibinfo{volume}{51}\/}, \bibinfo{pages}{292--302}.
%Type = Book
\bibitem[{Belzunce et~al.(2015)Belzunce, Riquelme \& Mulero}]{belzunce2015introduction}
\bibinfo{author}{Belzunce, F.}, \bibinfo{author}{Riquelme, C.~M.}, \& \bibinfo{author}{Mulero, J.} (\bibinfo{year}{2015}).
\newblock {\it \bibinfo{title}{An introduction to stochastic orders}\/}.
\newblock \bibinfo{publisher}{Academic Press}.
%Type = Article
\bibitem[{Bernardi et~al.(2021)Bernardi, Cerqueti \& Palestini}]{bernardi2021allocation}
\bibinfo{author}{Bernardi, M.}, \bibinfo{author}{Cerqueti, R.}, \& \bibinfo{author}{Palestini, A.} (\bibinfo{year}{2021}).
\newblock \bibinfo{title}{Allocation of risk capital in a cost cooperative game induced by a modified expected shortfall}.
\newblock {\it \bibinfo{journal}{Journal of the Operational Research Society}\/},  {\it \bibinfo{volume}{72}\/}, \bibinfo{pages}{628--641}.
%Type = Article
\bibitem[{Cao(2013)}]{cao2013multi}
\bibinfo{author}{Cao, Z.} (\bibinfo{year}{2013}).
\newblock \bibinfo{title}{Multi-covar and shapley value: a systemic risk measure}.
\newblock {\it \bibinfo{journal}{Working paper, Banque de France DSF-SMF}\/},  (pp. \bibinfo{pages}{1--34}).
%Type = Article
\bibitem[{Christoffersen(1998)}]{christoffersen1998evaluating}
\bibinfo{author}{Christoffersen, P.~F.} (\bibinfo{year}{1998}).
\newblock \bibinfo{title}{Evaluating interval forecasts}.
\newblock {\it \bibinfo{journal}{International economic review}\/},  (pp. \bibinfo{pages}{841--862}).
%Type = Book
\bibitem[{Denuit et~al.(2005)Denuit, Dhaene, Goovaerts \& Kaas}]{Denuit2005}
\bibinfo{author}{Denuit, M.}, \bibinfo{author}{Dhaene, J.}, \bibinfo{author}{Goovaerts, M.~J.}, \& \bibinfo{author}{Kaas, R.} (\bibinfo{year}{2005}).
\newblock {\it \bibinfo{title}{Actuarial Theory for Dependent Risks}\/}.
\newblock \bibinfo{publisher}{New York: Wiley}.
%Type = Misc
\bibitem[{Gilbert(2018)}]{wired_future_bitcoin}
\bibinfo{author}{Gilbert, D.} (\bibinfo{year}{2018}).
\newblock \bibinfo{title}{The future of bitcoin and blockchain in 2018}.
\newblock \bibinfo{howpublished}{\url{https://www.wired.com/story/future-of-bitcoin-blockchain-2018/}}.
\newblock \bibinfo{note}{Accessed: 2024-10-31}.
%Type = Article
\bibitem[{Kratz et~al.(2018)Kratz, Lok \& McNeil}]{kratz2018multinomial}
\bibinfo{author}{Kratz, M.}, \bibinfo{author}{Lok, Y.~H.}, \& \bibinfo{author}{McNeil, A.~J.} (\bibinfo{year}{2018}).
\newblock \bibinfo{title}{Multinomial var backtests: A simple implicit approach to backtesting expected shortfall}.
\newblock {\it \bibinfo{journal}{Journal of Banking \& Finance}\/},  {\it \bibinfo{volume}{88}\/}, \bibinfo{pages}{393--407}.
%Type = Book
\bibitem[{Mai \& Scherer(2017)}]{mai2017simulating}
\bibinfo{author}{Mai, J.-F.}, \& \bibinfo{author}{Scherer, M.} (\bibinfo{year}{2017}).
\newblock {\it \bibinfo{title}{Simulating copulas: stochastic models, sampling algorithms, and applications}\/} volume~\bibinfo{volume}{6}.
\newblock \bibinfo{publisher}{World Scientific}.
%Type = Article
\bibitem[{Mainik \& Schaanning(2014)}]{mainik2014}
\bibinfo{author}{Mainik, G.}, \& \bibinfo{author}{Schaanning, E.} (\bibinfo{year}{2014}).
\newblock \bibinfo{title}{On dependence consistency of covar and some other systemic risk measures}.
\newblock {\it \bibinfo{journal}{Statistics \& Risk Modeling}\/},  {\it \bibinfo{volume}{31}\/}, \bibinfo{pages}{49--77}.
%Type = Article
\bibitem[{Mba(2024)}]{mba2024assessing}
\bibinfo{author}{Mba, J.~C.} (\bibinfo{year}{2024}).
\newblock \bibinfo{title}{Assessing portfolio vulnerability to systemic risk: a vine copula and aparch-dcc approach}.
\newblock {\it \bibinfo{journal}{Financial Innovation}\/},  {\it \bibinfo{volume}{10}\/}, \bibinfo{pages}{20}.
%Type = Article
\bibitem[{Nass(1959)}]{nass1959chi}
\bibinfo{author}{Nass, C. A.~G.} (\bibinfo{year}{1959}).
\newblock \bibinfo{title}{The $\chi$ 2 test for small expectations in contingency tables, with special reference to accidents and absenteeism}.
\newblock {\it \bibinfo{journal}{Biometrika}\/},  {\it \bibinfo{volume}{46}\/}, \bibinfo{pages}{365--385}.
%Type = Book
\bibitem[{Nelsen(2007)}]{Nelsen2007}
\bibinfo{author}{Nelsen, R.~B.} (\bibinfo{year}{2007}).
\newblock {\it \bibinfo{title}{{An Introduction to Copulas}}\/}.
\newblock \bibinfo{publisher}{Springer Science \& Business Media}.
%Type = Article
\bibitem[{Ortega-Jim{\'e}nez et~al.(2021)Ortega-Jim{\'e}nez, Sordo \& Su{\'a}rez-Llorens}]{ortega2021stochastic}
\bibinfo{author}{Ortega-Jim{\'e}nez, P.}, \bibinfo{author}{Sordo, M.}, \& \bibinfo{author}{Su{\'a}rez-Llorens, A.} (\bibinfo{year}{2021}).
\newblock \bibinfo{title}{Stochastic orders and multivariate measures of risk contagion}.
\newblock {\it \bibinfo{journal}{Insurance: Mathematics and Economics}\/},  {\it \bibinfo{volume}{96}\/}, \bibinfo{pages}{199--207}.
%Type = Article
\bibitem[{Pu et~al.(2024)Pu, Zhang \& Zhang}]{pu2024joint}
\bibinfo{author}{Pu, T.}, \bibinfo{author}{Zhang, Y.}, \& \bibinfo{author}{Zhang, Y.} (\bibinfo{year}{2024}).
\newblock \bibinfo{title}{On joint marginal expected shortfall and associated contribution risk measures}.
\newblock {\it \bibinfo{journal}{Quantitative Finance}\/},  {\it \bibinfo{volume}{24}\/}, \bibinfo{pages}{889--908}.
%Type = Book
\bibitem[{Shaked \& Shanthikumar(2007)}]{shaked2007sto}
\bibinfo{author}{Shaked, M.}, \& \bibinfo{author}{Shanthikumar, J.~G.} (\bibinfo{year}{2007}).
\newblock {\it \bibinfo{title}{Stochastic Orders}\/}.
\newblock \bibinfo{publisher}{Springer Science \& Business Media}.
%Type = Article
\bibitem[{Sordo et~al.(2018)Sordo, Bello \& Su{\'a}rez-Llorens}]{sordo2018IME}
\bibinfo{author}{Sordo, M.~A.}, \bibinfo{author}{Bello, A.~J.}, \& \bibinfo{author}{Su{\'a}rez-Llorens, A.} (\bibinfo{year}{2018}).
\newblock \bibinfo{title}{Stochastic orders and co-risk measures under positive dependence}.
\newblock {\it \bibinfo{journal}{Insurance: Mathematics and Economics}\/},  {\it \bibinfo{volume}{78}\/}, \bibinfo{pages}{105--113}.
%Type = Article
\bibitem[{Waltz et~al.(2022)Waltz, Kumar~Singh \& Okhrin}]{waltz2022vulnerability}
\bibinfo{author}{Waltz, M.}, \bibinfo{author}{Kumar~Singh, A.}, \& \bibinfo{author}{Okhrin, O.} (\bibinfo{year}{2022}).
\newblock \bibinfo{title}{Vulnerability-covar: investigating the crypto-market}.
\newblock {\it \bibinfo{journal}{Quantitative Finance}\/},  {\it \bibinfo{volume}{22}\/}, \bibinfo{pages}{1731--1745}.
%Type = Misc
\bibitem[{{Wikipedia}(2024)}]{wikipedia_crypto_bubble}
\bibinfo{author}{{Wikipedia}} (\bibinfo{year}{2024}).
\newblock \bibinfo{title}{Cryptocurrency bubble}.
\newblock \bibinfo{howpublished}{\url{https://en.wikipedia.org/wiki/Cryptocurrency_bubble}}.
\newblock \bibinfo{note}{Accessed: 2024-10-31}.
%Type = Article
\bibitem[{Xu et~al.(2017)Xu, In, Forbes \& Hwang}]{Xu2017SystemicRI}
\bibinfo{author}{Xu, S.}, \bibinfo{author}{In, F.}, \bibinfo{author}{Forbes, C.~S.}, \& \bibinfo{author}{Hwang, I.} (\bibinfo{year}{2017}).
\newblock \bibinfo{title}{{Systemic risk in the European sovereign and banking system}}.
\newblock {\it \bibinfo{journal}{Quantitative Finance}\/},  {\it \bibinfo{volume}{17}\/}, \bibinfo{pages}{633 -- 656}.

\end{thebibliography}
\end{document}